\newcommand{\Dom}[1]{#1^{-1}}
\newcommand{\ApIndex}{\mathsf{ApInd}}
\newcommand{\Parity}{\mathsf{\oplus}}
\newcommand{\Tribes}{\mathsf{Tribes}}
\newcommand{\AND}{\mathsf{AND}}
\newcommand{\OR}{\mathsf{OR}}
\newcommand{\pOR}{\mbox{Promise-}\mathsf{OR}}
\newcommand{\GSS}{\mathsf{GSS}}
\newcommand{\GTH}{\mathsf{GTH}}
\newcommand{\C}{\mathsf{C}}
\newcommand{\RC}{\mathsf{RC}}
\newcommand{\FC}{\mathsf{FC}}
\newcommand{\EC}{\mathsf{EC}}
\newcommand{\D}{\mathsf{D}}
\newcommand{\R}{\mathsf{R}}
\newcommand{\Rzero}{\mathsf{R}_0}
\newcommand{\Q}{\mathsf{Q}}
\newcommand{\noisyR}{\mathsf{noisyR}}
\newcommand{\RS}{\mathsf{RS}}
\newcommand{\pAdv}{\mathsf{MM}}
\newcommand{\MM}{\mathsf{MM}}
\newcommand{\CMM}{\mathsf{CMM}}
\DeclareMathOperator{\TV}{TV}
\newcommand{\winning}{\omega}
\newcommand{\ns}{\mathrm{ns}}
\newcommand{\CGpub}{\mathsf{CG}^{\text{\upshape pub}}}
\newcommand{\CG}{\mathsf{CG}^{\text{\upshape priv}}}
\newcommand{\CGstar}{\mathsf{CG}^*}
\newcommand{\CGns}{\mathsf{CG}^\ns}
\newcommand{\s}{\mathsf{s}}
\newcommand{\bs}{\mathsf{bs}}
\newcommand{\ket}[1]{|#1\rangle}
\newcommand{\bra}[1]{\langle#1|}
\newcommand{\norm}[1]{\|#1\|}
\newcommand{\HSph}{\mathcal{S}}
\newcommand{\HBall}{\mathcal{B}}
\newcommand{\SMALL}{\mathbf{Small}}
\newcommand{\BOTH}{\mathbf{Both}}
\newcommand{\brac}[1]{\left\{#1\right\}}
\newcommand{\mathify}[1]{\ifmmode{#1}\else\mbox{$#1$}\fi}
\newcommand{\fbs}{\mathrm{fbs}}
\newcommand{\supp}{\mathrm{supp}}
\newcommand{\zone}{\brac{0, 1}}
\newtheorem{theorem}{Theorem}[section]
\newtheorem{corollary}[theorem]{Corollary}
\newtheorem{lemma}[theorem]{Lemma}
\newtheorem{proposition}[theorem]{Proposition}
\newtheorem{claim}[theorem]{Claim}
\newtheorem{definition}[theorem]{Definition}
\newcounter{openctr}
\newcommand{\Open}[1]{{\color{black} {\noindent \bf Open Problem  \stepcounter{openctr} \arabic{openctr} : }#1}}
\title{Certificate Games and Consequences for\\ the Classical Adversary Bound}
\author{Sourav Chakraborty\thanks{Indian Statistical Institute, Kolkata}\and
Anna Gál\thanks{University of Texas at Austin}\and
Mika G\"o\"os\thanks{EPFL} \and
Sophie Laplante\thanks{Université Paris Cité, IRIF}\and
Rajat Mittal\thanks{IIT Kanpur}\and
Anupa Sunny\thanks{Université Paris Cité, IRIF}}
\date{}
\begin{document}
\pagenumbering{roman}
\maketitle

\begin{abstract}
We introduce and study Certificate Game complexity, a measure of complexity based on the probability of winning a game where two players are given inputs with different function values and are asked to output some index $i$ such that $x_i\neq y_i$, in a zero-communication setting.

We study four versions of certificate games, namely private coin, public coin, shared entanglement and non-signaling games. 
The public-coin variant of certificate games gives a new characterization of the classical adversary bound, a lower bound on randomized query complexity which was introduced as a classical version of the quantum (non-negative) quantum adversary bound.

We show that complexity in the public coin model (therefore also the classical adversary) is bounded above  by certificate complexity, as well as by expectational certificate complexity (EC)  and sabotage complexity (RS). On the other hand, it is  bounded below by fractional and randomized certificate complexity.
We provide new exponential separations between classical adversary and randomized query complexity for partial functions.

In contrast, the private coin model is bounded from below by zero-error randomized query complexity and above by EC$^2$. 

The quantum measure reveals an interesting and surprising difference between classical and quantum query models. Whereas the public coin certificate game complexity is bounded from above by randomized query complexity, the quantum certificate game complexity can be quadratically larger than  quantum query complexity. We use non-signaling, a notion from quantum information, to give a lower bound of $n$ on the quantum certificate game complexity of the $\OR$ function, whose quantum query complexity is  $\Theta(\sqrt{n})$, then go on to show that this ``non-signaling bottleneck'' applies to all functions with high sensitivity, block sensitivity, fractional block sensitivity, as well as classical adversary. This implies the collapse of all models of certificate games, except private randomness, to the classical adversary bound.

We consider the single-bit version of certificate games, where the inputs of the two players are restricted to having Hamming distance $1$. We prove that the single-bit version of certificate game complexity with shared randomness is equal to sensitivity up to constant factors, thus giving a new characterization of sensitivity.  On the other hand, the single-bit version of certificate game complexity with private randomness is equal to $\lambda^2$, where $\lambda$ is the spectral sensitivity. 

\end{abstract}

\pagebreak
\tableofcontents

\pagebreak 
\setcounter{page}{1}
\pagenumbering{arabic}
\section{Introduction}


There still remains much to be understood about the complexity of Boolean functions and the many complexity measures that are used
to study various models of computation such as certificate complexity, degree, sensitivity, block sensitivity, their variants, to name a few. Some of the  questions we ask about these measures are: 
What separations can be shown between the  measures? Do they have a natural computational interpretation? What properties do they have, for example, do they behave well under composition? How do they behave for symmetric functions?
Since the sensitivity conjecture was resolved \cite{Huang}, one important new goal  is to determine precisely how the larger measures, such as query complexity and certificate complexity, are bounded above by smaller measures such as sensitivity. The best known upper bound on deterministic query complexity is $\D(f) \leq O(\s(f)^6)$~\cite{NS94, Mid04, Huang},
while the best separation is  cubic~\cite{BHT17}. For certificate complexity we know that $\C(f)\leq O(\s(f)^5)$, whereas the best known separation is cubic~\cite{BBGJK21}.
Many more of these upper bounds and separations are listed in the tables of known results in ~\cite{yu19,ABKRT21}.

With these questions in mind,  we introduce 
new complexity measures  based on the Karchmer-Wigderson relation of a Boolean function. This relation was introduced by Karchmer and Wigderson~\cite{KW90} and it has been extensively studied in communication complexity. 
 Let $f:\{0,1\}^n \rightarrow \{0,1\}$ be a Boolean function.
 The  relation $R_f\subseteq f^{-1}(0)\times f^{-1}(1) \times [n]$ is defined as $R_f = \{ (x,y,i) : x_i \neq y_i\}$. 
 (As a matter of convention, 
 $x$ denotes an input in $f^{-1} (0)$ and $y$ denotes an input in $f^{-1} (1)$ unless otherwise stated.)
 Karchmer and Wigderson~\cite{KW90} showed 
 that the communication complexity of $R_f$ is equal to the circuit depth of $f$.
 We study the following 2-player \emph{certificate game},
 where the goal of the players is to solve the Karchmer-Wigderson relation in a zero-communication setting.
 
 \begin{definition}[Certificate game]
 \label{def:cert-games}
 Let $f:\{0,1\}^n \rightarrow \{0,1\}$ be a (possibly partial) Boolean function.
 One player is given $x\in f^{-1}(0)$ and the other player is given $y\in f^{-1}(1)$. Their goal is to produce a common index $i$ such that $x_i\neq y_i$, without any communication.
 \end{definition}
 We look at how well the players can solve this task in  several  zero-communication settings. 
 We consider four models: when they only have private coins, when they share a public random source,
 and when they share an entangled quantum state (also called quantum model) that does not depend upon their inputs. The fourth model allows any non-signaling strategy which we describe in Section~\ref{sec:CGstar-CGns}. In all these models, we consider the probability of success that they can achieve, for the best strategy and worst case input pair. The multiplicative inverse of the winning probability is called the certificate game complexity of the function ($\CG$ for the private coin model, $\CGpub$ for the public coin model, $\CGstar$ for the shared entanglement model 
 and $\CGns$ for the non-signaling model).

 To illustrate how to achieve such a task  without communication, we consider the following simple strategy.
 Let $f$ be a total Boolean function whose 0-certificate complexity is $c_0$ and whose
 1-certificate complexity is $c_1$. Then on input $x$ such that $f(x)=0$,
 Alice can output a random~$i$ in a minimal 0-certificate for~$x$ (similarly for Bob with a minimal 1-certificate for~$y$). Then since the certificates
 intersect,  the probability that they output the same index is at
 least $\frac{1}{ c_0\cdot c_1}$. This shows that  $\CG(f) \leq \C^0(f) \cdot \C^1(f)$. This simple upper bound is tight for many functions including  $\OR$ and Parity, but there are other examples where $\CG(f)$ can be much smaller, and it is interesting to see what other upper and lower bounds can apply. We will also see that access to shared randomness can significantly reduce the complexity.  
 
We show that the certificate game complexity measures 
in the four different models  hold a pivotal position with respect to other measures,  thus making them good candidates for proving strong lower and upper bounds on various measures. The operational interpretation in terms of winning probability of certificate games makes them convenient for proving upper bounds. Furthermore, the public coin  and non-signaling versions  are  linear programs and therefore their dual formulation is convenient for proving lower bounds.

\subsection{Motivation for certificate games}

The two main ingredients in our certificate games are two-player zero-communication games, and the Karchmer-Wigderson relation.
Two-player zero-communication games have been studied in many different contexts. They are called two-prover games in the context of parallel repetition theorems,
 central to the study of PCPs and the Unique Games Conjecture (we don't consider the case where there could be a quantum verifier, which has been studied in some papers). 
They also appear under the name of zero-communication protocols in the context of communication and information  complexity.
Finally, they are known as local or quantum games in the study of quantum nonlocality, an extensive field 
motivated by the study of quantum entanglement and the relative power of quantum over classical behaviors. 
Quantum behaviors are modeled by two parties making measurements on a shared  bipartite quantum state, and in the classical setup, the two parties can share ``hidden variables'', or shared randomness. 
There has been extensive work, for instance, on simulating quantum behaviors with various resources, such as
communication, post-selection, noise and more. There are also strong connections between finding separations between quantum and classical communication complexity, and between quantum and classical  zero-communication  games.
A survey on quantum non-locality can be found in references~\cite{BCPSW14,PV15}, and on the interactions between communication complexity and nonlocality in reference~\cite{BCMW10}. 

The Karchmer-Wigderson relation $R_f$   appears in many  contexts in the study of complexity measures, including the Adversary bound on quantum query complexity, and its variants~\cite{Amb00, SS06}.  
It is key in understanding
how hard a function is and captures the intuition that if one is to distinguish the 0-instances from the 1-instances of a function, then some~$i$ in the relation has to play a key role in computing the function. 
Another measure where the Karchmer-Wigderson relation appears implicitly is 
Randomized certificate complexity ($\RC$) defined by Aaronson~\cite{Aaronson2008QuantumCC}. It was further shown to be equivalent to fractional block sensitivity and fractional certificate complexity ($\FC$)~\cite{Tal13, GSS16}.  
The non-adaptive version  can be viewed as a one-player 
game where the player is given an input~$x$ and should output an index $i$. The player wins against an input~$y$ (with $f(x)\neq f(y)$) if  $x_i\neq y_i$.

\subsection{Our results}

We show that the certificate game complexity measures of a Boolean function $f$ take pivotal roles in understanding the relationships between various other complexity measures like 
randomised query complexity $\R(f)$, zero-error randomized query complexity $\R_0(f)$, certificate complexity $\C(f)$,
and other related measures. Our results also demonstrate the power of shared randomness over private randomness, even in a zero-communication setting. At the same time, our results also illustrate an interesting, and somewhat counter-intuitive, difference between the quantum world and the classical world. Our main results for {total} functions are compiled in Figure~\ref{fig:measures}. {While most of our results also hold for partial functions, for simplicity we don't indicate that in the Figure. Instead we specify in each theorem whether our result holds for partial functions. { If for a statement or theorem it is not explicitly written that it holds for a possibly partial function then we mean the statement or theorem is only known to hold for total functions.}}

Shared entanglement can simulate shared randomness, and shared randomness gives more power to the players compared to  private randomness so 
\[\CGstar(f) \leq \CGpub(f) \leq \CG(f).\]
A natural question that arises is how separated are these measures. In other words, how much advantage does shared randomness give over private randomness and how much advantage does shared entanglement give over shared randomness?
Because of the operational interpretation of certificate game complexity in terms of the winning probability of certificate games, proving upper bounds on certificate game complexity can be achieved by exhibiting a strategy for the game. 
We provide some other techniques to prove lower bounds.

{The classical adversary bound ($\CMM$, Definition~\ref{def:CMM}) which was defined  in \cite{LM08}, as an analog of the quantum adversary method to study randomized query complexity ($\R$), turns out to play a central role in our work. 
The $\CMM$ measure is well studied with a number of different formulations of it, already known \cite{AKPV}. 
We show  that certificate games (with public randomness), gives another formulation of $\CMM$.  This characterization provides new insight that helps to obtain bounds and separations on $\CMM$ that were not known earlier. }

\paragraph{Lower bounds on certificate games with shared entanglement: }  
One surprising result of our work concerns the shared entanglement model.
In order to prove lower bounds for this model, we introduce 
non-signaling certificate games. Non-signaling is a fundamental concept that comes from quantum non-locality; it states that when making a quantum measurement 
the outcome on one side should not leak any information about the  measurement made on the other side.
This ``non-signaling bottleneck'' is shared by all of our certificate game complexity measures. Identifying it turned out to be the key insight which led to  a very strong lower bound on all these measures, including the quantum model, with a single, simple proof, 
not involving any of the technical overhead  inherent to 
the quantum setting.  The simplicity of the proof comes from the fact that the non-signaling model has several equivalent formulations as linear 
programs, and the strength of the bounds comes from  the fact that it captures precisely a fundamental computational bottleneck.
It also  neatly highlights one of the key differences between quantum and classical query models, since the quantum query model somehow averts this  bottleneck. 

{ Our main lower bound result is a simple and elegant proof  (Theorem~\ref{thm:CGns-cAdv}) that for any, possibly partial, Boolean function $f$,
\[\CGns(f) \geq \CMM(f) \] 
which in turn lower bounds  the other three variants of certificate game complexity. 

The idea is that when a strategy satisfies
the non-signaling condition, the marginal distribution of one of the players' output does not depend on the other player's input. 
Therefore, the marginal distribution of one of the players can be used to give a satisfying assignment for $\CMM$ bound.}

It follows from this lower bound that while the quantum query complexity of the $\OR_n$ function%
    \footnote{%
    $\OR_n$ is the $\OR$ of $n$ variables. 
    From Grover's algorithm  \cite{Grover96,BBHT98} we have $\Q(\OR_n)= \sqrt{n}$. On the other hand $\FC(\OR_n) = \Omega(\s(\OR_n)) = \Omega(n)$.} %
is $\Theta(\sqrt{n})$, its quantum certificate game complexity is $\CGstar(\OR_n) = \Theta(n)$. 

\paragraph{Upper bounds on certificate games with shared randomness: }
The fact that $\CGstar$ is lower bounded by $\CMM$ gives us examples (like the $\OR_n$ function) where the quantum query complexity $\Q$, can be quadratically smaller than $\CGstar$.  In other words, a quantum query algorithm that computes the $\OR_n$ function using $\sqrt{n}$ queries, cannot  reveal to players of a certificate game an index where their inputs differ, with probability better than $1/n$, because of the non-signaling constraint on quantum games. This, somewhat surprisingly, contrasts with the randomized setting where the players can run their randomized query algorithm on their respective inputs using the same random bits
and pick a common random query in order to find an index where the inputs differ, with probability $\smash{\frac{1}{\R(f)}}$, for any~$f$.  Thus, we prove (Theorem~\ref{prop:CGpub-R}) that for any, possibly partial, Boolean function~$f$, \[\CGpub(f) \leq O(\R(f)).\]

{In fact we can prove something much stronger. We prove (Theorem~\ref{thm:cgpub-is-fc}) that for any, possibly partial, Boolean function $f$,
\[\CGpub(f) \leq O(\CMM(f)).\]
Combining this with our lower bound result we have a new characterization of $\CMM$.
\[\CMM(f) \leq \CGns(f) \leq \CGstar(f) \leq \CGpub(f) \leq O(\CMM(f)).\]
This gives us a different way of understanding the classical adversary bound through the lens of two-player games. Slightly weaker results, namely $\CGpub = O(\FC)$ (for total functions) and $\CGns = O(\CMM)$, were proved independently by \cite{swagato} and \cite{Krisjanis}.
}

\paragraph{Bounds on certificate games with private randomness: }
The private randomness model of certificate game complexity, $\CG$, is upper bounded by the product of $0$-certificate complexity, $\C^0$, and $1$-certificate complexity, $\C^1$,  and also by the square of $\EC$ (Theorem~\ref{thm:CG-bounds}). On the other hand $\CG$ is lower bounded by $\R_0$. (This follows from~\cite{JKKLSSV20}.)
Therefore, $\Rzero(f) \leq O(\CG(f)) \leq O(\C^0(f)\C^1(f)).$

In fact, $\CG(f)$ can be larger than the arity of the function. This is because, we show ( Theorem~\ref{thm:CG-bounds}) that $\CG(f)$ is lower bounded by the square of the Minimax formulation of the positive adversary bound, $\pAdv(f)$, which sits between $\Q(f)$  and the spectral sensitivity $\lambda(f)$.

\paragraph{Consequences on expectational certificate complexity 
and the classical adversary bound:}

{
The expectational certificate complexity~\cite{JKKLSSV20} was introduced  as a bound that is quadratically related to zero-error query complexity ($\R_0$), that is, $\EC(f) \leq \R_0(f) \leq O(\EC(f)^2)$ for any total Boolean $f$.
}

We show that $\CGpub$ is  bounded above by $\EC(f)$ up to constant factors (Theorem~\ref{thm:CGpub-EC}), so
$$ \CMM\leq O(\CGpub)\leq O(\EC).$$

{
We also extend our result that $\CGpub$ is upper bounded by $\R(f)$ to prove that $\CGpub$ is also upper bounded by the sabotage complexity $\RS(f)$ (Theorem~\ref{thm:sabotage}). This also proves that for any  Boolean function (including partial functions) $$\CMM(f) = O(\RS(f)).$$
 This gives us upper bounds on $\CMM(f)$ that were not known before,
answering a question asked by Ambainis et al~\cite{AKPV}, where they asked for a general limitation (that includes partial functions) on the power of the classical adversary method as a lower bound on randomized query complexity.
}

{
 For total Boolean functions, $\CMM$ is known to be asymptotically equal to $\FC$. Thus for total functions, the measures $\FC$, $\CMM$, $\CGns$, $\CGstar$ and $\CGpub$ are all asymptotically equal. 
 Our upper bound on $\CGpub$ by $\EC$ implies that $\CGpub$ 
 is also upper bounded by certificate complexity $\C$
 (up to constant factors), since $\EC(f) \leq \C(f)$ for total functions) \cite{JKKLSSV20}. 
 We also give a direct proof that $\CGpub(f) \leq O(\C(f))$ 
 for total functions (Theorem~\ref{thm:CGpub-C}) as a ``warmup" to the stronger upper bound by $\EC$.}

Relating $\EC$ with $\CG$ and $\CGpub$ in turn gives us results about the certificate games themselves. 
To be precise, for total functions, $\EC(f) \leq O(\FC(f){\cdot} \sqrt{\s(f)})$~\cite{JKKLSSV20}. Since $\CG(f)\leq O(\EC(f)^2)$ (Theorem~\ref{thm:CG-bounds}), we have (in Corollary~\ref{cor:CG_CGns}) 
 \[\CG(f) \leq O(\CGpub(f)^3) =O(\CMM(f)^3).\]

\noindent\textbf{Composition:}
The $\Tribes_{\sqrt{n}, \sqrt{n}}$ function is a composition of the $\AND_{\sqrt{n}}$ and $\OR_{\sqrt{n}}$ function. It is easy to show using certificate complexity that $\FC(\Tribes_{\sqrt{n}, \sqrt{n}}) = O(\sqrt{n})$, so $\CMM$, $\CGpub$, $\CGstar$ and $\CGns$ do not compose, that is, 
there are Boolean functions
$f$ and $g$ such that the measures for the function $(f\circ g)$ is not asymptotically the same as the product of the measures for $f$ and for $g$.  The question of whether $\CG$ composes is open.

\paragraph{Certificate game complexity for partial  functions: }

While $\Tribes_{\sqrt{n}, \sqrt{n}}$ demonstrates a quadratic gap between $\R$ and $\CGpub$, 
we know the largest gap between $\R$ and $\CGpub$ for total functions is at most cubic (since $\D \leq (\bs)^3$~\cite{BBC+01,Nisan}). 
But for partial  functions the situation is different.
Ben-David and Blais~\cite{bDB20} demonstrated a function, approximate index $\ApIndex$ (Definition~\ref{def:apIndex}), for which there is exponential separation between $\R$ and $\FC$~%
    \footnote{\cite{bDB20} introduced a measure called $\noisyR$ in an attempt to answer the question of whether $\R$ composes, that is, whether $\R(f\circ g) = \Theta(\R(f)\cdot\R(g))$. They studied $\noisyR$ for the  approximate index function $\ApIndex$ {and showed an exponential separation between  $\noisyR$ and $\R$ for this partial function.}}.
We improve this to show that $\CGpub$ of  $\ApIndex$ is at most $O(\log(\R))$ (Theorem~\ref{thm:CGpub-ApIndex}) and hence demonstrate an exponential separation between $\R$ and $\CGpub$ ($\CMM$) for partial Boolean functions. 

We also give a  partial function $f$
such that $\R(f)= \Omega(n)$ and $\CGpub(f) = O(1)$ (Lemma~\ref{lemma:sep_R_and_CGpub}).

\paragraph{Single-bit versions of certificate games: }

Our final set of results is in the context of single-bit versions of certificate games. Single-bit versions of certain complexity measures
were used in early circuit complexity bounds~\cite{K71,K93}.
More recently Aaronson et al.~\cite{ABKRT21} defined single-bit versions of
several formulations  of the  adversary method, and showed that 
they are all equal to the spectral sensitivity $\lambda$.
Informally, single-bit versions of these measures are obtained
by considering the
requirements only with respect to pairs $x,y$ such that 
$f(x)=0$ and $f(y)=1$
 and $x$ and $y$ differ only in a single bit.

We show that the single-bit version of private coin certificate game complexity is equal to $\lambda^2$ (Theorem~\ref{thm:cg1lambda2}). One of our main results  is that  the single-bit version of public coin certificate game complexity, $\CGpub_{[1]}(f)$ is asymptotically equal to sensitivity $\s(f)$ (Theorem~\ref{thm:cgpub1sens}). This gives a new and very different interpretation of sensitivity, which is one of the central complexity  measures in this area. This interpretation of sensitivity in the context of certificate games may give us a handle on resolving the sensitivity-block sensitivity conjecture (which asks if block sensitivity $\bs(f)$ is $O(\s(f)^2)$, and remains open in this stronger form), by trying to construct a strategy for $\CGpub$ using a strategy for $\CGpub_{[1]}$.

\begin{figure}[htbp]
    \resizebox{.95\linewidth}{!}{

\mbox{\xymatrix{
 & &\C^0\C^1 
        \ar@{->}[dll] \ar@{=>}[dr]^{(1)}& &
    (\EC)^2 
        \ar[d]
        \ar@{=>}[dl]_{(2)}\\
  \D 
        \ar[d] && &
    *+[F]{\hyperref[def:CG]\CG}  
        \ar[dlll]_{(3)} \ar@{=>}[ddr]_{(4)}&
     \FC^2  \ar@{->}[ddll] \ar[d]\\
 \Rzero 
        \ar[d] \ar[drr]
    &&&& \FC \sqrt{\s} \ar@{->}[ddll]\\
 \R 
        \ar[dr]
        \ar@/_2pc/[ddd] 
        \ar[d] 
        & &
    \hyperref[def:C]\C
        \ar@{->}[d]     & 
    &
    (\pAdv)^2 
       \ar@/^1pc/[dddddll]
       \ar[dddddddl]\\ 
  \hyperref[sec:CGpub-ApIndex]\noisyR 
        \ar@/_3pc/[dddddrr]  & 
        \hyperref[def:sabotage]\RS \ar@{=>}[dr]^{(5)}&
        \hyperref[def:EC]{\EC} 
        \ar@{=>}[d]^{(6)  }  \\
 & &*+[F]{\hyperref[def:CGpub]\CGpub} 
        \ar@{=>}[dl] \\
 \Q
        \ar[dddd]  & 
    *+[F]{{\hyperref[sec:CGstar-CGns]\CGstar}} 
         \ar@{=>}[d]  \\
 &*+[F]{\hyperref[def:CGns]\CGns}\ar@{=>}[dr]_{(7)}\\
 && \hyperref[def:CMM]\CMM \ar[d]^{(9)}     
          \ar@{=>}[uuu]_{(8)} 
          \ar[ddll] 
          \\ 
&&\hyperref[def:RC]\FC 
        \ar[d] 
        \\\
 \hyperref[def:pAdv]\pAdv 
        \ar[ddr]& &
    \bs  
        \ar[d] &\lambda^2 \ar[dl]&\\
&& 
\s\ar[dl]\\
 &
 \lambda\\
}

}\quad %
\parbox[t]{10cm}{
\begin{enumerate}
\item  Theorem~\ref{thm:CG-bounds}. Separation: $\GSS_1$ (follows from the fact that $\C^1(\GSS_1) = \Theta(n)$ and $\C^0(\GSS_1) = \Theta(n^2)$). Tightness: $\Parity$.
\item  Theorem~\ref{thm:CG-bounds}, Separation: $\OR$, Tightness: $\Parity$.
\item  Implicit in\cite{JKKLSSV20} (Theorem~\ref{thm:CG-bounds}). Separation: $\Parity$, Tightness: $\OR$.
\item Theorem~\ref{thm:CG-bounds} Separation: Pointer function in \cite{ABBLSS} and the cheat sheet version of the $k-$Forrelation function \cite{BS21, ABK16}. Tightness: $\OR$.
\item  Theorem~\ref{thm:sabotage}. Separation: $\Tribes$ (Theorem~\ref{thm:tribes} and $\RS(\Tribes_{\sqrt{n},\sqrt{n}}) = \Theta(n)$ because $\RS$ composes~\cite{BK18}). Tightness: $\Parity$.
\item Theorem~\ref{thm:CGpub-EC}. Separation: OPEN, Tightness: $\Parity$.
\item Theorem~\ref{thm:CGns-cAdv}. 
\item  Theorem~\ref{thm:cgpub-is-fc}.
\item The reverse direction is known to hold for total functions \cite{AKPV}.
\end{enumerate}

\caption{Some known relations among complexity measures for total functions.
An arrow from $A$ to $B$ indicates that for every total Boolean function $f$, $B(f) = O(A(f))$. Double arrows indicate results in this paper, and boxes indicate new complexity measures. Single arrows indicate known results and references are omitted from the diagram for space considerations. Most references can be found in the tables in~\cite{yu19, ABKRT21} and we cite others in later sections. Known relations about $\EC$ are given in~\cite{JKKLSSV20}, and $\FC = O((\pAdv)^2)$ is proven 
in~\cite{ABK21}. 
Fractional certificate complexity $\FC$ is equal to fractional block sensitivity and to randomized certificate complexity RC (up to multiplicative constants).  $\MM$ is the minimax formulation of the positive adversary method. $\MM = O(\CMM)$ is proved in~\cite{KT16}.
}

\label{fig:measures}

}
}
    
\end{figure}


\section{Certificate game complexity}



In this section, we give the formal definitions of our Certificate Game complexity measures.
 
 A two-player game $G$ is given by a relation $R(x,y,a,b)\subseteq \cal{ X}\times \cal{Y} \times \cal{A}\times \cal{B}$, where $x\in \cal{ X}$ is the first player's input, $y\in \cal{Y}$ is the second player's input. The players output a pair of values, $(a,b) \in \cal{A}\times \cal{B}$, and they win if $R(x,y,a,b)$ holds.
 A \emph{deterministic strategy}
is a pair of functions $A: \cal{X}{\rightarrow} \cal{A}$ and $B: \cal{Y}{\rightarrow} \cal{B}$.
A \emph{randomized strategy with private randomness} is the product of two mixed individual  strategies.
A \emph{randomized strategy with shared randomness} is a mixture of pairs of deterministic strategies.

A \emph{quantum} or \emph{shared entanglement strategy} is given by a shared  bipartite state that does not depend on the input, and a family of projective measurements for Alice, indexed by her input, similarly for Bob.
(More general measurements could be considered, but projective measurements suffice~\cite{CHTW04}.) 

For any strategy, we will write $p(a,b|x,y)$ to mean the probability that the players output $(a,b)$ when their inputs are  $x,y$. The marginal distribution of Alice's output is $p(a|x,y)=\sum_b p(a,b|x,y)$, and similarly, $p(b|x,y)=\sum_a  p(a,b|x,y)$ is Bob's marginal distribution.

\emph{Non-signaling} is a notion that comes from quantum games, which says that if players are spatially separated, then they cannot convey information to each other instantaneously. All the types of strategies described above verify the non-signaling condition.
 \begin{definition}[Non-signaling strategy]
 \label{def:ns}
Let $p(a,b|x,y)$ be the probability that players, on input $x,y$  output $a,b$. Then $p$ is non-signaling if
$p(a|x,y)=p(a|x,y')$ and $p(b|x,y)=p(b|x',y)$ for all inputs $x,x',y,y'$ and all outcomes $a,b$. 
 \end{definition}
Since nonsignaling means that Alice's output does not depend on Bob's input, we can write $p(a|x)$ for Alice's marginal distribution, similarly, we will write $p(b|y)$ for Bob.
 
 Surprisingly, non-signaling strategies are characterized by the \emph{affine combinations} of local deterministic strategies that lie in the positive orthant. This has been known since the 1980s 
 \cite{foulisrandall81, randallfoulis81, klayrandallfoulis87,wilce92}. A more recent proof is given in~\cite{pironio05}.
 
 \begin{proposition}
[Characterization of non-signaling strategies]
 \label{prop:ns-affine}
A strategy $p$ is non-signaling if and only if it is given by a family of coefficients $\lambda = \{\lambda_{AB}\}_{AB}$ (not necessarily nonnegative),  $AB$ ranging over  pairs $(A,B)$ of deterministic strategies,  such that
$p(a,b|x,y)=\sum_{AB : A(x)=a, B(y)=b} \lambda_{AB}$, and 
$\lambda$ verifies  $\sum_{AB}\lambda_{AB}=1$, 
and  $\sum_{AB : A(x)=a,B(y)=b} \lambda_{AB} \geq 0$ for all $a,b,x,y$.
 \end{proposition}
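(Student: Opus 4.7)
The $(\Leftarrow)$ direction is a bookkeeping check. Given $\lambda$ satisfying the stated conditions, define $p(a,b|x,y) = \sum_{AB:\, A(x)=a,\, B(y)=b} \lambda_{AB}$. Non-negativity of $p$ is precisely the last assumption on $\lambda$. Normalization follows since $\sum_{a,b} p(a,b|x,y) = \sum_{AB} \lambda_{AB} = 1$ (each deterministic pair $(A,B)$ contributes to exactly one output pair). For non-signaling, $p(a|x,y) = \sum_b p(a,b|x,y) = \sum_{AB:\, A(x)=a}\lambda_{AB}$, which is independent of $y$; Bob's marginal is symmetric.

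For the $(\Rightarrow)$ direction, given a non-signaling $p$, the task is to exhibit coefficients $\lambda$. The key step is the following affine-span identity, which I would prove separately: viewing each deterministic pair $(A,B)$ as the indicator vector $\delta_{AB}$ (with $\delta_{AB}(a,b|x,y)=1$ iff $A(x)=a$ and $B(y)=b$) in $\mathbb{R}^{|\mathcal{X}|\cdot|\mathcal{Y}|\cdot|\mathcal{A}|\cdot|\mathcal{B}|}$, the affine hull of $\{\delta_{AB}\}$ coincides with the affine subspace cut out by normalization together with the non-signaling linear equations. Granted this, any non-signaling $p$ has an expansion $p=\sum_{AB}\lambda_{AB}\delta_{AB}$ with $\sum_{AB}\lambda_{AB}=1$, and positivity of $p$ translates directly into the required partial-sum inequalities on the $\lambda_{AB}$.

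I would prove the affine-span identity by induction on $|\mathcal{X}|$. The base case $|\mathcal{X}|=1$ reduces, via a parallel induction on $|\mathcal{Y}|$, to the trivial setting where both players have a single input, in which any probability distribution on $\mathcal{A}\times\mathcal{B}$ is a convex combination of Dirac masses. For the inductive step, fix a reference input $x^*\in\mathcal{X}$, apply the hypothesis to the restriction of $p$ to $\mathcal{X}\setminus\{x^*\}$ (still non-signaling), extend each Alice-strategy obtained to all of $\mathcal{X}$ by assigning an arbitrary value at $x^*$, and then cancel the discrepancy at $x^*$ by adding correction terms coming from additional deterministic strategies. Alice's non-signaling on the $x^*$ slice is what guarantees the correction is consistent with Bob's already-determined marginals.

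The main obstacle is the bookkeeping in the inductive correction, specifically verifying that the ``difference'' pseudo-distribution we introduce at $x^*$ itself lies in the affine hull being constructed. A cleaner alternative that avoids explicit construction entirely is a duality argument: show that any linear functional on $\mathbb{R}^{|\mathcal{X}|\cdot|\mathcal{Y}|\cdot|\mathcal{A}|\cdot|\mathcal{B}|}$ that vanishes on every $\delta_{AB}$ and is constant on the normalization hyperplane must vanish on the entire non-signaling subspace, which follows from a direct dimension count of the defining linear equations. Either route yields the required $\lambda$ and completes the proof.
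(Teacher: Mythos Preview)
The paper does not actually supply a proof of this proposition: it is stated as a known fact, attributed to work from the 1980s (Foulis--Randall, Klay--Randall--Foulis, Wilce) with a more recent proof cited from Pironio~\cite{pironio05}. So there is no ``paper's own proof'' to compare your attempt against.

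That said, your outline is sound. The $(\Leftarrow)$ direction is correct as written. For $(\Rightarrow)$, you have correctly identified the heart of the matter: the affine hull of the deterministic product strategies $\delta_{AB}$ coincides with the affine subspace cut out by normalization plus the non-signaling equalities. Either of your two proposed routes (induction on $|\mathcal{X}|$ with a correction step, or a dimension/duality count) can be made to work and both appear in the literature in some form. Your inductive sketch is the less tidy of the two because, as you yourself flag, the ``correction'' pseudo-distribution at $x^*$ must itself be shown to lie in the affine span on the Bob side, and you have not said how; the duality route you mention (any linear functional constant on all $\delta_{AB}$ is a combination of the normalization and non-signaling constraints) is closer to how Pironio argues and avoids that bookkeeping entirely. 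If you want a self-contained proof, I would finish the duality version: write a general functional as $\sum_{x,y,a,b} c_{x,y,a,b}\, p(a,b|x,y)$, impose constancy over all $(A,B)$, and deduce that the coefficients decompose as a constant plus terms of the form $\alpha_{x,a,y}$ and $\beta_{y,b,x}$, which is exactly what the non-signaling relations span.
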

 
 Given a Boolean function $f$ on $n$ variables, define a two-player game such that $\mathcal{X} = f^{-1}(0)$, $\mathcal{Y} = f^{-1}(1)$, $\mathcal{A} =\mathcal{B} = [n]$ and $R(x,y,a,b) = 1$ if and only if $a=b$ and $x_a \neq y_a$. Notice that this setting gives rise to a certificate game according to Definition~\ref{def:cert-games}.

\subsection{Certificate games with private coins}
\label{sec:CGdef}

In case of private coins, a randomized strategy for each player amounts to assigning, for every
 input $x\in\zone^n$, a probability $p_{x,i}$
of producing $i$ as 
its outcome, for each $i\in [n]$.

\begin{definition}[Private coin certificate game complexity]
\label{def:CG}
For a (possibly partial) function $f$,
\[\CG(f) = 
\min_{p} \max_{x,y\in f^{-1}(0)
\times f^{-1}(1)} \frac{1}{\winning(p;x,y)},\]
with $p$  a collection of nonnegative
variables $\{p_{x,i}\}_{x,i}$ satisfying, 
$\sum_{i\in [n]} p_{x,i}=1,$ $\forall x{\in} f^{-1}(0) \cup f^{-1}(1)$,
and 
$\winning(p;x,y)= \sum_{i : x_i\neq y_i} p_{x,i}p_{y,i}$ is the probability that both players output a common index $i$ that satisfies $R_f(x,y,i)$.
\end{definition}

\subsection{Certificate games with public coins}

When the players share randomness, a \emph{public-coin randomized strategy} is 
a distribution over pairs $(A,B)$ of deterministic strategies. 
We assign a nonnegative variable $p_{A,B}$ to each strategy and
require that they sum to 1.
We say that \emph{a pair of strategies $(A,B)$ is correct on
$x,y$} if $A(x)=B(y)=i$ and $x_i\neq y_i$.
\begin{definition}[Public coin certificate game complexity]
\label{def:CGpub}
For a (possibly partial) function $f$,
\[\CGpub(f) = 
  \min_{p} \max_{x,y\in f^{-1}(0)
\times f^{-1}(1)} \frac{1}{\winning^{pub}(p;x,y)},\]
where $p$ is a collection of nonnegative variables $
\{p_{A,B}\}_{A,B}$
satisfying $ \sum_{(A,B)}p_{A,B} = 1$
and 
$\winning^{pub}(p;x,y) = \sum_{(A,B)\text{ correct on } x,y} p_{A,B}
$.
\end{definition}

{\subsection{Certificate games with quantum and non-signaling strategies}
\label{sec:CGstar-CGns}
}

Similar to non-local games (see~\cite{CHTW04}), when the players can share a bipartite quantum state, a general strategy  for a certificate game consists of a shared state $\ket{\Psi_{AB}} \in \mathcal{H}_A \otimes \mathcal{H}_B$ between the two players, and two families of projective measurements $M_A = \{M_A(x)\}_ {x \in f^{-1}(0)}$ and $M_B = \{M_B(x)\}_ {x \in f^{-1}(1)}$ made on their respective part of the shared state. Here $\mathcal{H}_A$ and $\mathcal{H}_B$ are the Hilbert spaces of respective players. For each measurement $M_*(x)$, we denote the family of orthogonal projections as $\{P_{*;x,i}\}_{i\in [n]}$ (see~\cite{NC10} for a definition of projective measurements).

We can now define the shared entanglement certificate game complexity of a Boolean function.
\begin{definition}[Shared entanglement certificate game complexity]
\label{def:CGstar} For a (possibly partial) function $f$,
\[
\CGstar(f) = 
 \min_{\ket{\Psi_{AB}},M_A, M_B} \quad 
\max_{x,y\in f^{-1}(0) \times f^{-1}(1)}
\quad  \frac{1}{\winning^{\star}((\ket{\Psi_{AB}},M_A,M_B);x,y)},
\]
where $\winning^{\star}((\ket{\Psi_{AB}},M_A,M_B);x,y)$ is the winning probability of  strategy $(\ket{\Psi_{AB}},M_A,M_B)$ on 
$x,y$,

$$\winning^{\star}((\ket{\Psi_{AB}},M_A,M_B);x,y) = 
  \sum_{i:x_i \neq y_i} 
  \bra{\Psi_{AB}} P_{A;x,i}\otimes P_{B;y,i} \ket{\Psi_{AB}}.$$
\end{definition}

Non-signaling strategies (Definition~\ref{def:ns}) are a generalization of quantum strategies and are useful to give lower bounds on quantum games. They are particularly well-suited when in a given problem, the bottleneck is that shared entanglement cannot allow players to learn any information about each others' inputs. This is the case for the $\OR$ function (Theorem~\ref{thm:CGns-cAdv}).

\begin{definition}[Non-signaling certificate game complexity]
\label{def:CGns}
For a (possibly partial) function~$f$, 
\[\CGns(f) = 
\min_{p} \quad \max_{x,y\in f^{-1}(0)
\times f^{-1}(1)} \quad  \frac{1}{\winning^{\ns}(p;x,y)}\]
where $p$ ranges over all non-signaling strategies (Def.~\ref{def:ns}) and

 \[\winning^\ns(p;x,y) = \sum_{i : x_i\neq y_i} p(i,i|x,y). \]

\end{definition}
This can be expressed as a linear program by using the affine formulation of non-signaling distributions given in Proposition~\ref{prop:ns-affine}.

Since we have considered progressively stronger models, the following holds trivially.
\begin{proposition}
\label{prop:CGhierarchy}
For any (possibly partial) Boolean function $f$,
\[ \CGns(f) \leq \CGstar(f) \leq \CGpub(f) \leq \CG(f).\] 
\end{proposition}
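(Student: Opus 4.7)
Each inclusion in the chain expresses that a strategy available in the weaker model can be simulated, with the same winning probability on every input pair $(x,y)$, in the stronger model. Since the certificate game complexity is a minimax, preservation of the winning probability immediately yields the inequality. So the task is to exhibit three such simulations.

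For $\CGpub(f)\le\CG(f)$, I take a feasible private-coin strategy $\{p_{x,i}\}_{x,i}$ from Definition~\ref{def:CG} and construct a public-coin distribution over pairs of deterministic strategies by taking the product of marginals. Specifically, define
$$
q_{A,B} \;=\; \prod_{x\in f^{-1}(0)} p_{x,A(x)} \;\cdot\; \prod_{y\in f^{-1}(1)} p_{y,B(y)}.
$$
Nonnegativity and $\sum_{A,B}q_{A,B}=1$ follow from $\sum_i p_{x,i}=1$ for every input. Summing over $(A,B)$ correct on $(x,y)$ gives $\sum_{i : x_i\ne y_i}p_{x,i}p_{y,i}=\winning(p;x,y)$. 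Hence $\winning^{pub}(q;x,y)=\winning(p;x,y)$ for every $(x,y)$, giving the bound.

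For $\CGstar(f)\le\CGpub(f)$, I simulate shared randomness using shared entanglement in the standard way. Given a public-coin distribution $\{q_{A,B}\}$, let the shared state be
$$
\ket{\Psi_{AB}} \;=\; \sum_{(A,B)} \sqrt{q_{A,B}}\,\ket{A}_A\otimes\ket{B}_B,
$$
and on input $x$ let Alice measure her register in the computational basis and output $A(x)$, with the symmetric measurement for Bob. In the notation of Definition~\ref{def:CGstar}, this corresponds to projectors $P_{A;x,i}=\sum_{A' : A'(x)=i}\ket{A'}\bra{A'}$ (analogously for Bob), and the joint outcome distribution over $(A,B)$ is exactly $\{q_{A,B}\}$. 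Hence the winning probability matches $\winning^{pub}(q;x,y)$ on every $(x,y)$.

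For $\CGns(f)\le\CGstar(f)$, I first observe that any shared-entanglement strategy gives rise to a nonsignaling distribution: the marginal $p(a|x,y)=\bra{\Psi_{AB}}P_{A;x,a}\otimes I\ket{\Psi_{AB}}=\mathrm{Tr}(P_{A;x,a}\rho_A)$, with $\rho_A=\mathrm{Tr}_B\ket{\Psi_{AB}}\bra{\Psi_{AB}}$, is independent of $y$, and symmetrically for Bob. Then I invoke Proposition~\ref{prop:ns-affine} to obtain an affine combination $\{\lambda_{A,B}\}$ of deterministic strategies representing the same distribution $p(a,b|x,y)$. Substituting into Definition~\ref{def:CGns} yields
$$
\winning^{\ns}(\lambda;x,y) \;=\; \sum_{i : x_i\ne y_i} p(i,i|x,y) \;=\; \winning^{\star}((\ket{\Psi_{AB}},M_A,M_B);x,y),
$$
completing the chain. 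There is no substantive obstacle here: each step is a direct simulation, and the only nontrivial ingredient is the already-quoted characterization of nonsignaling distributions in the last step; the remaining work is bookkeeping to check that winning probabilities are preserved on every input pair.
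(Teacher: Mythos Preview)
Your proposal is correct and follows the same approach the paper takes: the paper simply states that ``since we have considered progressively stronger models, the following holds trivially'' and gives no further argument. Your write-up is a careful expansion of exactly this idea, explicitly exhibiting the three simulations (private coin $\hookrightarrow$ public coin via product distributions, public coin $\hookrightarrow$ shared entanglement via a classically correlated shared state, shared entanglement $\hookrightarrow$ non-signaling via Proposition~\ref{prop:ns-affine}) and verifying that each preserves the winning probability on every input pair.
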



\section{Overview of our techniques}

The main contribution of this paper is to give lower and upper bounds on certificate game complexity in different models: private coin, public coin and shared entanglement. The bounds on private coin certificate game complexity are obtained by manipulating previously known results and use standard techniques.

The principal contribution, in terms of techniques, is in giving upper and lower bounds on certificate game complexity of public coin and shared entanglement model ($\CGpub$ and $\CGstar$). These techniques can naturally be divided into two parts.

\paragraph{Upper bounds.} 
We use three general techniques for upper bounds on  Certificate games. Given a decision tree, the players can pick queries by agreeing on a node of the decision tree in some way. We use this to show that $\CGpub$ is bounded by sabotage complexity.
For certificate-based measures, we use shared randomness and
hash functions to agree on an common index. We provide some details of this framework in Section~\ref{sec:overview-upper}.

Finally, our strongest general upper bound on public-coin certificate game complexity is $\CGpub(f)\leq O(\CMM(f))$ (Theorem~\ref{thm:cgpub-is-fc}), which implies that all of $\CGpub$, $\CGstar$, $\CGns$ are equal (up to constant factors). The idea behind this upper bound is to apply the \emph{correlated sampling} technique~\cite{Hol09,Bavarian20}. In the correlated sampling game, Alice and Bob receive as input distributions $p$ and~$q$, respectively, and their goal is to output, using shared randomness and no communication, samples~$X\sim p$ and $Y\sim q$ so as to maximize the agreement probability $\Pr[X=Y]$. The basic result about this game is that the players can achieve an agreement probability that depends only on the \emph{total variation distance} between $p$ and $q$. We apply this result in order to convert a non-signaling strategy (which is closely related to $\CMM$)---where $p$ and $q$ roughly correspond to the marginal distributions of the strategy---into a public-coin strategy with only constant-factor loss in the winning probability. The details appear in Section~\ref{sec:cgpub-is-cmm}.

\paragraph{Lower bounds.}
Lower bounds on $\CGpub$ can be obtained by taking the dual of its linear programming formulation. 
     For the shared entanglement model, which is not linear, we  turn to more general non-signaling  games. The resulting non-signaling certificate game complexity, $\CGns$, is a lower bound on $\CGstar$. It can be expressed as a linear program and  lower bounds on $\CGstar$ can be obtained by taking the dual of this linear program and constructing feasible solutions for it.

A more detailed overview of these techniques is given in the following sections.

\subsection{Overview of upper bound techniques for $\CGpub$} \label{sec:overview-upper}

To construct a  strategy for a certificate game, the main challenge is to \emph{match} the index of the other side.
In public coin setting, we can take advantage of having access to shared randomness to achieve this task.

We illustrate this idea by constructing a $\CGpub$ strategy for the 
 $\Tribes$  function.

Even though $\Tribes$ is a starting example for us, it already gives an example that separates $\R$ and $\CGpub$, and also implies that, under function composition, $\CGpub$ value is not the product of the $\CGpub$ value of the individual functions. 
We describe the main idea behind the strategy here. 

For the $\Tribes_{k,k}$ function, we want  a strategy that wins the certificate game with probability $\Omega(1/k)$ (instead of the obvious $\Omega(1/k^2)$). The input of $\Tribes_{k,k}$ consists of $k$ blocks of $k$ bits each. We will reduce the general problem to the case when all blocks of Alice's input have a single $0$, and Bob has exactly one block with all $1$'s and Alice and Bob wins when they both can output the unique index $i$ where Alice's bit is $0$ and Bob's bit is $1$. 

Here we discuss this special case. Let us view Alice's input as an array $A$ of $k$ values, specifying the position of the $0$ in each block (each entry is in $\{1,2, \cdots, k\})$. On the other hand, Bob's input can be thought of as an index, say $j$, between $1$ and $k$, identifying his all-$1$ block. Alice wants to find~$j$ and Bob wants to find $A[j]$, so both can output a position where their inputs differ.

First, take the simple case when each entry of Alice's array is distinct. Bob simply picks a random number $r$ and outputs the $r$-th index of the $j$-th block. Alice can use the same $r$ (due to shared randomness), and find the unique $j$ such that $A[j] = r$. Whenever Bob picks $r$ such that $A[j]=r$, they win the game. Probability that a random $r$ matches $A[j]$ is $1/k$.

For the harder case when some of the entries of $A$ coincide, we use the shared randomness to permute entries of each block. This ensures that, with constant probability, we have a unique $j$ such that $A[j] = r$. This gives the required success probability $\Omega(1/k)$.

\paragraph{A framework for upper bounds based on hashing:}
\label{sec:framework}

Let $f:\{0,1\}^n \rightarrow \{0,1\}$ be a
(possibly partial) Boolean function.
Alice is given $x\in f^{-1}(0)$ and Bob is given $y\in f^{-1}(1)$.
Their goal is to produce a common index $i \in [n]$ such that $x_i\neq y_i$.

Let $T \subseteq [n]$ be a set of potential outputs, known to both players, and let $S$ be a finite
set. $T$ and $S$ are fixed in advance as
part of the specification of the strategy (they do not depend on the input,
only on the function~$f$).
Let $A_x \subseteq T$ denote the set of potential outputs of Alice on $x$
that belong to the set $T$, and
$B_y \subseteq T$ denote the set of potential outputs of Bob  on $y$
that belong to the set $T$.
The players proceed as follows:

\begin{tcolorbox}
\begin{enumerate}
    \item Using shared randomness, they select a random mapping $h: T \rightarrow S$.
    \item Using shared randomness, they select a random element $z \in S$.
\item Alice outputs a (possibly random)  element of $h^{-1}(z) \cap A_x$ 
(if this set is empty, she outputs an arbitrary element).
Similarly, Bob outputs a (possibly random)  element of $h^{-1}(z) \cap B_y$
(if this set is empty, he outputs an arbitrary element).
\end{enumerate}
\end{tcolorbox}

This general strategy will be correct with good enough probability,
if the following two conditions can be ensured:

(i) $h^{-1}(z) \cap W$ is not empty, where
$W \subseteq A_x \cap B_y$ denotes the set of correct outputs from
$A_x \cap B_y$, that is, for any $i \in W$, $x_i \neq y_i$.

(ii) $h^{-1}(z)  \cap A_x$ and $h^{-1}(z) \cap B_y$ are ``small enough''.

Note, Condition (i) implies that both sets,
$h^{-1}(z)  \cap A_x$ and $h^{-1}(z) \cap B_y$, are not empty.

We will apply this general framework in several different ways.
We use it for proving that $\CGpub$ is upper bounded by $\C$ and 
even by $\EC$. We also use it to get a strong upper bound for the  approximate index function $\ApIndex$. Finally, we use the hashing framework to prove that the single-bit version of $\CGpub$
characterizes sensitivity up to constant factors.
While each of these proofs fits into the framework we described above, their analysis is technically quite different.

\paragraph{$\CGpub$ strategy for $\ApIndex$:}
We can use the hashing framework to show an exponential separation between $\R$ and $\CGpub$ for Approximate Index, a partial function. The analysis of the strategy reduces to a very natural question: what is the intersection size of two Hamming balls of radius $\frac{k}{2} - \sqrt{k \log{k}}$ whose centers are at a distance $\frac{k}{\log{k}}$? We are able to show that the intersection is at least an  $\Omega(\frac{1}{\sqrt{log{k}}})$ fraction of the total volume of the Hamming ball. This result and the techniques used could be of independent interest.

To bound the intersection size, we focus on the outermost $\sqrt{k}$ layers of the Hamming ball (since they contain a constant fraction of the total volume), and show that for each such layer the intersection contains an  $\Omega(\frac{1}{\sqrt{log{k}}})$ fraction of the elements in that layer.

For a single layer, the intersection can be expressed as the summation of the latter half of a hypergeometric distribution $P_{k,m,r}$ from $\frac{m}{2}$ to $m$ ($m = \frac{k}{\log{k}}$ is the distance between the Hamming Balls and $r$ is the radius of the layer). By using the ``symmetric'' nature of the hypergeometric distribution around $\frac{m}{2}$ for a sufficient range of values (Lemma~\ref{lem:symmetric}), this reduces to showing a concentration result around the expectation with width $\sqrt{m}$ (as the expectation for our choice of parameters is $ \frac{m}{2} - O(\sqrt{m})$). 

We use the standard concentration bound on hypergeometric distribution with width $\sqrt{r}$ and reduce it to the required width $\sqrt{m}$ by noticing a monotonicity property of the hypergeometric distribution (Lemma~\ref{lem:monotonicity}).

\subsection{Overview of lower bound techniques for $\CGpub$, $\CGstar$ and $\CGns$} \label{sec:overview-lower}

In the public coin setting, maximizing the winning probability in the worst case can be written as a linear program. This allows us to write a dual formulation,
so (since it becomes a minimization problem, and we are considering its multiplicative inverse) this form will be more convenient when proving lower bounds. The dual variables $\mu_{x,y}$ can be thought of as a hard distribution on pairs of inputs, and the objective function is the $\mu$-size of the largest set of input pairs where any deterministic strategy is correct. The next two propositions follow by standard LP duality.

\begin{proposition}[Dual formulation of $\CGpub$] \label{prop:CGpub-dual}

For a two-player certificate game $G_f$ corresponding to a (possibly partial) Boolean function $f$,  $ \CGpub(f) = 1/\winning^{pub}(G_f),$
where the winning probability $\winning^{pub}(G_f)$ is given by the following linear program. 
 \begin{align*}  
 \winning^{pub}(G_f)  = &  \quad \min_{\delta, \mu}
\quad  \delta
  \\
  \text{such that }    &
  \sum_{x,y:~A,B \text{ correct on } x,y} 
  \hspace{-2em}
  \mu_{x,y} \leq \delta \quad \text{ for every deterministic strategy } A,B \\
   & \sum_{x,y} \mu_{xy} = 1, \quad  \mu_{x,y} \geq 0,
 \end{align*}
 where $ \mu =\{\mu_{x,y}\}_{x\in f^{-1}(0), \, y \in f^{-1}(1)}$. $A,B$ correct on $x,y$ implies $A(x)=B(x)=i$ and $x_i\neq y_i$. 
 \end{proposition}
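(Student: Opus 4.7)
The plan is to verify the stated duality by writing the primal LP explicitly, computing its Lagrangian dual, and invoking strong LP duality. Observe first that $\CGpub(f) = 1/w^*$, where $w^* = \max_p \min_{x,y} \winning^{pub}(p;x,y)$, since $\min_p \max_{x,y} 1/\winning^{pub}(p;x,y) = 1/\max_p \min_{x,y} \winning^{pub}(p;x,y)$ (reciprocating flips min and max). So it suffices to show that $w^* = \winning^{pub}(G_f)$ as defined by the claimed LP.

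Next I would cast the computation of $w^*$ as a primal linear program by introducing an auxiliary scalar $w$ representing the worst-case winning probability:
\begin{align*}
\text{maximize } \quad & w \\
\text{such that } \quad & w - \sum_{(A,B) \text{ correct on } x,y} p_{A,B} \leq 0 \text{ for every } x \in f^{-1}(0), y \in f^{-1}(1), \\
& \sum_{(A,B)} p_{A,B} = 1, \quad p_{A,B} \geq 0, \quad w \text{ free}.
\end{align*}
This LP is feasible (use any deterministic strategy with $w=0$) and bounded above by $1$, so strong duality applies.

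Then I would form the dual, attaching a nonnegative multiplier $\mu_{x,y}$ to each inequality constraint and a free multiplier $\delta$ to the normalization equality. Taking the partial derivative with respect to the free primal variable $w$ forces $\sum_{x,y} \mu_{x,y} = 1$. Stationarity with respect to each nonnegative primal variable $p_{A,B}$ yields $\delta \geq \sum_{x,y : (A,B) \text{ correct on } x,y} \mu_{x,y}$, one constraint per deterministic strategy pair $(A,B)$. The dual objective is simply $\delta$, which matches the statement.

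The only technical point I would be careful about is sign conventions when reading off the dual from the Lagrangian, since $w$ is free while the $p_{A,B}$ are nonnegative. Once this is checked, strong LP duality gives $w^* = \winning^{pub}(G_f)$, and hence $\CGpub(f) = 1/\winning^{pub}(G_f)$. No obstacle arises beyond mechanically writing down the correct primal and verifying the dual; the proof is a direct application of LP duality.
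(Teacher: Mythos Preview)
Your proposal is correct and takes essentially the same approach as the paper, which simply states that the proposition follows by standard LP duality without spelling out the details. Your write-up correctly fills in those details: the primal is the max-min formulation of the worst-case winning probability, and the dual you derive matches the claimed LP exactly.
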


To prove lower bounds on $\CGstar$, we cannot proceed in the same way since the value of $\CGstar$ cannot be written as a linear program. However, a key observation is that in many cases (and in all the cases we have considered in this paper), the fundamental bottleneck for proving lower bounds on quantum strategies is the non-signaling property,  which says that in two-player games with shared entanglement, the outcome of one of the player's measurements cannot reveal the other player's input.
This was the original motivation for defining $\CGns$: 
if we only require the non-signaling property of quantum strategies, it suffices to prove a lower bound on $\CGns$, which is a lower bound on $\CGstar$. Using the characterization of non-signaling strategies in terms of an affine polytope (see Proposition~\ref{prop:ns-affine}), we obtain a convenient linear programming formulation for $\CGns$. 

Definition~\ref{def:CGns} shows that the value of $\winning^\ns(G)$ is a linear optimization problem. We compute its dual, a maximization problem, which allows us to prove lower bounds on $\CGns$ and in turn $\CGstar$.

\begin{proposition}[Dual formulation of $\CGns$]
\label{prop:CGns-dual}

For a  certificate game $G$ corresponding to a (possibly partial) Boolean function $f$, 
$ \CGns(f) = 1/\winning^{ns}(G_f),$  where winning probability $\winning^{ns}(G_f)$ can be written as the following linear program.
 \begin{align*}  
\winning^\ns(G_f) = &  \quad \min_{\mu, \gamma, \delta} \quad \delta
  \\
  \text{such that } & \\
  \sum_{x,y:~A,B \text{ correct on } x,y} & \hspace{-2em} \mu_{x,y}+\sum_{x,y}\gamma_{A(x),B(y),x,y} = \delta \quad \text{ for every deterministic strategy } A,B \\
  \sum_{x,y}  \mu_{xy} = & 1, \quad 
  \mu_{x,y} \geq 0, \quad  \gamma_{a,b,x,y} \geq 0,
 \end{align*}
  where $ \mu =\{\mu_{x,y}\}_{x\in f^{-1}(0), \, y \in f^{-1}(1)}$ and $\gamma =\{\gamma_{i,j,x,y}\}_{i,j\in [n], x\in f^{-1}(0), \, y \in f^{-1}(1)}$ .
 \end{proposition}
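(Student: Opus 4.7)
The plan is to derive the dual by applying standard linear programming duality to the primal formulation of $\winning^{ns}(G_f)$. Combining Definition~\ref{def:CGns} with the non-signaling characterization in Proposition~\ref{prop:ns-affine}, and using that $\CGns(f) = 1/\winning^{ns}(G_f)$ (since the reciprocal swaps max over inputs with min), I would write the primal with an auxiliary variable $w$ that, together with the objective $\max w$, encodes the worst case over input pairs:
\begin{align*}
\winning^{ns}(G_f) \;=\; & \max_{\lambda,\, w} \quad w \\
\text{s.t.} \quad & \sum_{A,B \text{ correct on }x,y} \lambda_{A,B} \;\geq\; w \qquad \forall\, x\in f^{-1}(0),\, y\in f^{-1}(1), \\
& \sum_{A,B:\, A(x)=a,\, B(y)=b} \lambda_{A,B} \;\geq\; 0 \qquad \forall\, a,b,x,y, \\
& \sum_{A,B} \lambda_{A,B} \;=\; 1,
\end{align*}
with the $\lambda_{A,B}$ and $w$ unrestricted in sign. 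The first family encodes the worst case; the second is precisely the affine polytope characterization of non-signaling from Proposition~\ref{prop:ns-affine}.

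Next, I would introduce dual multipliers: $\mu_{x,y}\geq 0$ attached to the worst-case constraints, $\gamma_{a,b,x,y}\geq 0$ attached to the marginal-nonnegativity constraints, and an unrestricted multiplier $\delta\in\mathbb{R}$ attached to the normalization equality. Forming the Lagrangian and requiring the coefficients of the unrestricted primal variables to vanish (so that the inner maximum is finite) gives the dual constraints. The coefficient of $w$ yields $\sum_{x,y}\mu_{x,y} = 1$, and the coefficient of each $\lambda_{A,B}$ yields
\[
\sum_{x,y:\,A,B \text{ correct on } x,y} \mu_{x,y} \;+\; \sum_{x,y} \gamma_{A(x),B(y),x,y} \;=\; \delta,
\]
where I have used that, for fixed $(A,B)$ and $(x,y)$, only the single term with $(a,b) = (A(x),B(y))$ contributes to the $\gamma$-sum. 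After substitution the dual objective reduces to $\min \delta$, which is exactly the LP in the statement.

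Finally, strong LP duality applies because the primal is feasible (take $\lambda$ to be a point mass on any one deterministic strategy and $w = 0$) and bounded above by $1$ (winning probabilities lie in $[0,1]$), so the dual optimum equals $\winning^{ns}(G_f)$. Combined with $\CGns(f) = 1/\winning^{ns}(G_f)$, this completes the proof. The only subtlety is careful bookkeeping of signs and of which primal variables are sign-restricted (only $\mu$ and $\gamma$ are nonnegative, while $\delta$ and the $\lambda_{A,B}$ are free), and correctly recognizing that for each deterministic pair $(A,B)$ the $\gamma$-sum collapses to a single index per $(x,y)$; once the Lagrangian is set up, no substantial obstacle remains.
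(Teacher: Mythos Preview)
Your proposal is correct and follows exactly the approach the paper indicates: the paper states only that the proposition ``follows by standard LP duality'' without spelling out the details, and your derivation (writing the primal from Definition~\ref{def:CGns} via the affine characterization in Proposition~\ref{prop:ns-affine}, assigning multipliers $\mu,\gamma,\delta$, and reading off the dual constraints from the coefficients of the free variables $w$ and $\lambda_{A,B}$) is precisely the computation the paper omits. Your justification of strong duality via feasibility and boundedness is also fine.
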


As a first step, we illustrate how the dual of the non-signaling variant can be used to prove a lower bound on $\CGstar(\pOR_n)$ (Proposition~\ref{prop:CGnsOR}).
The intuition comes from the fact that  any quantum strategy for the certificate game for $\OR$ has to  be non-signaling. Let one of the player have input $x=0^n$, and  the other player have one of $n$ strings $x^{(i)}$ ($x$ with the $i$-th bit flipped). At the end of the game, they  output $i$ with  probability $p=\frac{1}{\CGstar(\pOR)}$. If this probability were bigger than $\frac{1}{n}$, then the player with input $x$ would learn some information about the other player's input. 

The lower bound on the $\OR$ function generalizes to show that block sensitivity is a lower bound on the non-signaling value of the certificate games. We prove an even stronger result, by going
back to the original definition of $\CGns$ (Definition~\ref{def:CGns}) and giving a very simple proof that $\CGns$ is an upper bound on $\CMM$ (Theorem~\ref{thm:CGns-cAdv}).


\section{Preliminaries}


\label{sec:measures}

We define many known complexity measures in this section.
Almost all definitions are given for arbitrary Boolean functions, including partial functions. A few notable exceptions are certificate complexity, sensitivity and block sensitivity. We include additional details and definitions with respect to partial functions for these measures in Section \ref{sec:partialdefs}.
We use the following notation. A total Boolean function $f$ is $f:\zone^n \rightarrow \zone$. Except when noted otherwise, inputs $x\in \zone^n$ are in $f^{-1}(0)$ and inputs $y\in f^{-1}(1)$, and sums over $x$ range over $x\in f^{-1}(0)$, similarly for $y$. For partial functions we use $\Dom{f}$ for $f^{-1}(0) \cup f^{-1}(1)$. 

Indices $i$ range from 1 to $n$ and $x_i$ denotes the $i$th bit of $x$. We write $x^{(i)}$ to mean the string $x$ with the $i$th bit flipped. When not specified, sums over $i$ range over $i\in [n]$.

 \subsection{Query complexity and adversary bounds}

We recall briefly the standard notations and definitions of query complexity for Boolean functions
$f:\zone^n\rightarrow \zone$.
The deterministic query complexity (or decision tree complexity)  $\D(f)$ is the minimum number of queries to bits of an input $x$ required  to compute $f(x)$, in the worst case.
Randomized query complexity, denoted $\R(f)$, is the number of queries needed to compute $f$, in the worst case, with probability at least $2/3$ for all inputs.
Zero-error randomized query complexity, denoted by $\Rzero(f)$,
is the expected number of queries  needed
to compute $f$ correctly on all inputs. The relation $\R(f)\leq \Rzero(f) \leq \D(f)$ holds for all Boolean functions $f$.
It will be useful to think of a randomized decision tree as a probability distribution over deterministic decision trees. When computing the probability of success, the randomness is over the choice of a deterministic tree.

Quantum query complexity, written $\Q(f)$, is the number of quantum queries needed to compute~$f$ 
correctly on all inputs with probability at least $2/3$.

In this paper we will consider the positive adversary method, a lower bound on quantum query complexity. 
It was shown by Spalek and Szegedy~\cite{SS06} that several formulations  were
equivalent, and we use the MinMax formulation   $\pAdv$ here.

\begin{definition}[Positive adversary method, Minimax formulation]
\label{def:pAdv} For any (possibly partial) Boolean function $f$,
$\pAdv(f) = \min_p \max_{ x\in f^{-1}(0), y\in f^{-1}(1)} \frac{1}{\sum_{i:x_i \neq y_i}\sqrt{p_{x,i}p_{y,i}}}$,
where $p$ is taken over all families of nonnegative $p_{x,i} \in \mathbb{R}$ such that for all $x \in \Dom{f}$ (where $f$ is defined), $\sum_{i\in [n]} p_{x,i} = 1$
\end{definition}

The classical adversary bound was introduced in \cite{Aar06, LM08} as a lower bound for randomized query complexity $\R$. It was shown to be equal to fractional certificate complexity $\FC$ (Definition~\ref{def:RC}) for total functions (but can be larger for partial functions) and has many equivalent formulations, given in~\cite{AKPV}.

\begin{definition}[Classical Adversary Bound]
\label{def:CMM}
For any (possibly partial) Boolean function $f$, the minimax formulation of the Classical Adversary Bound is as:
$\CMM(f) = \min_p \max_{\substack{x, y\in S\\ f(x) = 1- f(y)}} \frac{1}{\sum_{i: x_i \ne y_i} \min\{p_x(i), p_y(i)\}}$,
where $p_x$
is a probability distribution over $[n]$.
\end{definition}

\subsection{Certificate complexity and its variants}

Certificate complexity is 
a lower bound on query complexity~\cite{VW85}, for total Boolean functions.

For a total Boolean function~$f$, a \emph{certificate} is a partial assignment of the bits of an input to~$f$ that forces the value of the function to be constant,
regardless of the value of the other bits. A \emph{certificate for input $x$} is a partial assignment consistent with $x$ that is a certificate for $f$.
\begin{definition}
\label{def:C}
For any total Boolean function $f$ and input $x$, 
$\C(f;x)$ is the size of the smallest  certificate for $x$. The certificate complexity of the function is 
$\C(f) = \max \{\C^0(f), \C^1(f)\}$, where $\C^b(f) = \max_{x\in f^{-1}(b)} \{\C(f;x)\}$.
\end{definition}

Randomized certificate complexity was introduced by Aaronson as a randomized version of certificate complexity~\cite{Aaronson2008QuantumCC}, and subsequently shown to be equivalent (up to constant factors) to fractional block sensitivity and fractional certificate complexity~\cite{Tal13, KT16, GSS16}. 

We use the fractional certificate complexity formulation.

\begin{definition}[Fractional certificate complexity]
\label{def:RC}

For any (possibly partial) Boolean function~$f$ 
$\FC(f) = \max_{z \in \Dom{f}} \, \FC(f,z),$ where
$\FC(f,z) =\min_v \sum_i v_{z,i}\ ,$
subject to $ \sum_{i: z_i \neq z'_i}v_{z,i} \geq 1$ for all $z'\in \Dom{f}$ such that $f(z)=1-f(z')$,
with $v$ a collection of variables $v_{z,i} \geq 0$. 
\end{definition}

Another equivalent formulation is,
$\FC(f) = 
\min_{w}\, \max_{\stackrel{z, z' \in \Dom{f}}{f(z) = 1- f(z')}}
\, \frac{\sum_i w_{z,i}}{\sum_{i:z_i\neq z'_i}w_{z,i}} \ ,
$ 
where~$w$ is a collection of non-negative
variables $w_{z,i}$.

Randomized certificate complexity (in its non-adaptive formulation)
can be viewed as a single player game where a player is given an input~$z$ and should output an index $i$ (say with probability $p_{z,i}= \frac{w_{z,i}}{\sum_j w_{z,j}}$). The player wins against an input~$z'$ (with $f(z) = 1- f(z')$) if  $z_i\neq z'_i$. 

Then, $\FC(f)$, for total functions, is (up to constant factors) the multiplicative inverse of the probability of winning the game in the worst case~\cite{Aaronson2008QuantumCC, Tal13, GSS16}. 

Expectational certificate complexity was introduced as a quadratically tight lower bound on $\Rzero$~\cite{JKKLSSV20}.

\begin{definition}[Expectational certificate complexity~\cite{JKKLSSV20}]
\label{def:EC}
For any (possibly partial) Boolean function $f$, 
$\EC(f) = \min_{w} \max_{z \in \Dom{f}} \sum_{i\in [n]}  w_{z,i}$
with $w$ a collection of variables such that $0\leq w_{z,i}\leq 1$ satisfying $\sum_{i : z_i\neq z'_i} w_{z,i}w_{z',i} \geq 1$ for all $z,z'$ s.t. $f(z) = 1-f(z')$. 
\end{definition}
Since the weights are between 0 and 1, we can associate with each $i$ a Bernoulli variable.  The players can sample from each of these variables independently and output the set of indices where the outcome was 1. The constraint says that the expected number of indices $i$ in both sets that satisfy $z_i\neq z'_i$ should be bounded below by 1. The complexity measure is the expected size of the sets.  For  example, for the OR function, a strategy could be as follows. On input $z$, pick the smallest $i$ for which $z_i=1$, output the set  $\{i\}$. If no such $i$ exists, then output the set $[n]$. The (expected) size of the set is $n$ and the (expected) size of the intersection is $1$.

The following relations are known to hold for any total Boolean function $f$.
\begin{proposition}[\cite{JKKLSSV20}]
$\FC \leq \EC \leq \C \leq O(\Rzero) \leq O(\EC^2).$ 
\end{proposition}

 \subsection{Sensitivity and its variants }

Sensitivity is a lower bound on most of the measures described above (except $\Q$ and $\pAdv$). Given a Boolean function $f$, an input $x$ is \emph{sensitive at index $i$} if flipping the bit at index $i$ (which we denote by $x^{(i)}$) changes the value of the function to $1{-}f(x)$. 

\begin{definition}[Sensitivity]
\label{def:s}
$\s(f;x)$ is the number of sensitive indices of $x$. $\s(f) = \max_x \s(f;x)$
\end{definition}

If $B$ is a subset of indices, an input $x$ is \emph{sensitive to block $B$}  if simultaneously flipping all the bits in $B$ (which we denote by $x^B$)  changes the value of the function to $1{-}f(x)$.

\begin{definition}[Block sensitivity]
\label{def:bs}
$\bs(f;x)$ is the maximum number of disjoint sensitive blocks of $x$. $\bs(f) = \max_x \bs(f;x)$
\end{definition}

Aaronson et al.~\cite{ABKRT21} recently revived interest in a measure called $\lambda$. It was first introduced by Koutsoupias~\cite{K93}, and is a spectral relaxation of sensitivity.
\begin{definition}[Spectral sensitivity, or $\lambda$]
\label{def:lambda}
For a Boolean function $f$, let $F$ be the $|\Dom{f}| \times |\Dom{f}|$ 
matrix defined by $F(x,y)=1$ when $f(x)= 1-f(y)$ and $x,y$ differ in 1 bit. Then $\lambda(f)= \norm{F}$, where $\norm{{\cdot}}$ is the spectral norm.
\end{definition}

Note that $F$ can also be taken to be a $|f^{-1}(0)| \times |f^{-1}(1)|$ matrix with rows indexed by elements of  $f^{-1}(0)$ and columns by elements of $f^{-1}(1)$. It is easy to show that both ways of defining $F$ give the same spectral norm.

\begin{proposition}[\cite{ABKRT21,Tal13,GSS16,LLS06} ]
For any (possibly partial) Boolean function~$f$,
\[\lambda(f) \leq \s(f) \leq \bs(f) \leq \FC(f) \text{ and } \lambda(f)\leq \pAdv(f)\]
\end{proposition}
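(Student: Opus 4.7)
The plan is to prove the four inequalities separately, in the order written, since each uses different ideas but none are particularly deep.

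First, for $\lambda(f) \le \s(f)$: I recall the standard bound $\|F\| \le \sqrt{\|F\|_1 \, \|F\|_\infty}$ for the spectral norm of a nonnegative matrix in terms of its maximum row and column $\ell_1$ norms. Every row of $F$ indexed by $x$ has support in the set of sensitive indices of $x$, so $\|F\|_\infty \le \s(f)$, and similarly $\|F\|_1 \le \s(f)$ by symmetry of the defining relation. This immediately gives $\lambda(f) \le \s(f)$.

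Second, $\s(f) \le \bs(f)$ is a direct consequence of the definitions: each sensitive bit at $x$ is by itself a sensitive block of size $1$, so the sensitive bits of $x$ form a collection of disjoint sensitive blocks, whence $\s(f;x) \le \bs(f;x)$. Third, for $\bs(f) \le \FC(f)$, I would use the LP formulation of $\FC$ from Definition~\ref{def:RC}. Fix $x$ achieving $\bs(f)=k$ with disjoint sensitive blocks $B_1,\dots,B_k$, and let $v$ be any feasible fractional certificate for $x$. For each $j$, the input $x^{B_j}$ (obtained by flipping block $B_j$) satisfies $f(x^{B_j})=1-f(x)$ and differs from $x$ exactly on $B_j$, so the constraint in Definition~\ref{def:RC} gives $\sum_{i\in B_j} v_{x,i} \ge 1$. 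Since the blocks are disjoint, $\sum_i v_{x,i} \ge \sum_j \sum_{i\in B_j} v_{x,i} \ge k$, i.e.\ $\FC(f,x) \ge k$.

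Fourth, for $\lambda(f) \le \pAdv(f)$: I would use the spectral-adversary-style decomposition. Let $p$ be any feasible solution for the min-max formulation of $\pAdv$, with value $M=\max_{x,y}\bigl(\sum_{i:x_i\neq y_i}\sqrt{p_{x,i}p_{y,i}}\bigr)^{-1}$. Decompose $F=\sum_i F_i$ where $F_i(x,y)=1$ iff $x,y$ differ in exactly bit $i$ and $f(x)\neq f(y)$. For unit vectors $u$ on $f^{-1}(0)$ and $v$ on $f^{-1}(1)$, write
\[
\langle u, F v\rangle \;=\; \sum_i \sum_{(x,y)\in F_i} u_x v_y \;=\; \sum_i \sum_{(x,y)\in F_i} \bigl(u_x \, p_{x,i}^{1/4} p_{y,i}^{-1/4}\bigr)\bigl(v_y \, p_{y,i}^{1/4} p_{x,i}^{-1/4}\bigr),
\]
and apply Cauchy--Schwarz to each inner sum, then sum over $i$ using that each row of $F_i$ has at most one nonzero entry (since the bit to flip determines the neighbor). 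After simplification this bounds $\langle u,Fv\rangle$ by $M$ times $\sqrt{\sum_x u_x^2 \sum_i p_{x,i}}\cdot \sqrt{\sum_y v_y^2 \sum_i p_{y,i}} = M$, using $\sum_i p_{x,i}=1$. Taking the sup over unit $u,v$ yields $\lambda(f)\le M$, and minimizing over $p$ gives $\lambda(f)\le \pAdv(f)$.

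The step I expect to require the most care is the last one, specifically choosing the right reweighting to make Cauchy--Schwarz produce exactly $M$ and verifying the row-sparsity of each $F_i$ in the partial-function setting; the other three inequalities are essentially bookkeeping from the definitions.
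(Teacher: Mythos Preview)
Your proof is correct. Note, however, that the paper does not actually prove this proposition: it is stated as a known result with citations to \cite{ABKRT21,Tal13,GSS16,LLS06}, so there is no ``paper's proof'' to compare against.

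A few remarks on your arguments. The first three inequalities are exactly the standard proofs and are fine as written. For the fourth inequality $\lambda(f)\le \pAdv(f)$, your Cauchy--Schwarz argument works, though the key step you leave to ``simplification'' is worth making explicit: after Cauchy--Schwarz on the inner sum you need the bound $\sqrt{p_{x,i}/p_{y,i}}\le M\,p_{x,i}$, which follows from $\sqrt{p_{x,i}p_{y,i}}\ge 1/M$ (the $\pAdv$ constraint specialized to the Hamming-distance-$1$ pair $(x,x^{(i)})$). This also handles your worry about division by zero, since that constraint forces $p_{x,i},p_{x^{(i)},i}>0$ whenever $(x,x^{(i)})$ is a sensitive edge. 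A slicker route, consistent with the paper's later Section on single-bit versions, is to observe that $\MM_{[1]}$ is a relaxation of $\MM=\pAdv$ (fewer constraints in the max), so $\MM_{[1]}(f)\le \pAdv(f)$, and then invoke Theorem~\ref{thm:MMandlambda} ($\lambda(f)=\MM_{[1]}(f)$). Your direct argument is essentially unwinding the proof of that theorem in the easy direction.
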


\subsection{Additional definitions for partial functions}
\label{sec:partialdefs}

Extending the definition of certificates to partial functions is slightly complex.
For $f : \{0,1 \}^n \rightarrow \{0,1,*\}$ it is natural to define the measures $\C^0(f)$, $\C^1(f)$, as well  as
$\C^{\{0,*\}}(f)$ and $\C^{\{1,*\}}(f)$ as follows:
\begin{definition}
\label{def:Cpartial1}
For $f : \{0,1 \}^n \rightarrow \{0,1,*\}$  and $b \in \{0,1 \}$ a partial assignment $\alpha$ is a $b$-certificate for $x \in f^{-1}(b)$
if $\alpha$ is consistent with $x$, and for any $x'$ consistent with $\alpha$ $f(x') = b$.

For $f : \{0,1 \}^n \rightarrow \{0,1,*\}$  and $b \in \{0,1 \}$ a partial assignment $\alpha$ is a $\{b,*\}$-certificate for $x \in f^{-1}(b)$
if $\alpha$ is consistent with $x$, and for any $x'$ consistent with $\alpha$ $f(x') \in \{b,*\}$.

For $b \in \{0,1\}$ and $x \in f^{-1}(b)$, 
$\C^{b}(f;x)$ is the size of the smallest  $b$-certificate for $x$ and
$\C^{b}(f) = \max_{x\in f^{-1}(b)} \{\C^{b}(f;x)\}$.  

For $b \in \{0,1\}$ and $x \in f^{-1}(b)$, 
$\C^{\{b,*\}}(f;x)$ is the size of the smallest  $\{b,*\}$-certificate for $x$ and
$\C^{\{b,*\}}(f) = \max_{x\in f^{-1}(b)} \{\C^{\{b,*\}}(f;x)\}$.
\end{definition}

Note that for example, while one can think of $0$-certificates for $x$ certifying that $f(x)=0$,
a $\{0,*\}$-certificate for $x$ certifies that $f(x) \neq 1$.
We also note that in the definition of $\C^{\{b,*\}}(f)$ we take the maximum over $x\in f^{-1}(b)$,
we do not include inputs $x$ where the function is not defined (e.g. where $f(x)=*$).

The above definitions are fairly straightforward and natural, but it is not immediately clear how to define $\C(f)$
for partial functions. We use the following notation:
\begin{definition}
\label{def:Cpartial2}
For $f : \{0,1 \}^n \rightarrow \{0,1,*\}$ we define
$\C(f) = \max \{\C^{\{0,*\}}(f), \C^{\{1,*\}}(f)\}$ and  
$\C'(f) = \max \{\C^0(f), \C^1(f)\}$.
\end{definition}

Notice that $\C(f) \leq \C'(f)$ for any $f$,  and for total functions $\C(f) = \C'(f)$. However, for partial functions $\C(f)$  can be much 
smaller than $\C'(f)$. The "Greater than Half" function (see section~\ref{sec:FunctionExample}) is an example of a partial function on $n$ bits with
$\C(f)= O(1)$ while $\C'(f) = \Theta(n)$.

It turns out that some results known for total functions remain valid for partial functions with respect to
$\C(f)$ but not with respect to $\C'(f)$ and others remain valid for partial functions with respect to
$\C'(f)$ but not with respect to $\C(f)$.
Thus, it is important to distinguish between the two versions.
We prefer to use this definition for $\C(f)$ since for example with this definition $\C(f)$ remains a lower bound
on deterministic query complexity (and for $\Rzero$ as well)  for partial functions. On the other hand, it is easy to construct partial functions with deterministic query complexity $O(1)$
but $\C'(f) = \Omega(n)$.
Some of our results for total functions involving $\C(f)$ 
no longer hold for partial functions, even though they remain valid with respect to $\C'(f)$.

A property of certificates often exploited in proofs is that every $0$-certificate must intersect (and contradict) every $1$-certificate
and this remains the case for partial functions. However, this property no longer holds for $\{0,*\}$ versus $\{1,*\}$-certificates.
Proofs based on this property remain valid for partial functions with respect to $\C'(f)$, but may no longer hold
for partial functions with respect to $\C(f)$.
An important example where this happens is the result that
$\EC(f)  \leq \C(f)$ by~\cite{JKKLSSV20}. This result does not hold for partial functions, as shown by the 
``Greater than Half'' function which has $\C(f) = O(1)$ and $\EC(f) = \Theta(n)$ (see section~\ref{sec:FunctionExample}), but remains valid with respect to $\C'(f)$.

For sensitivity  (block sensitivity) of partial functions, we consider an input $x$ in the domain $f^{-1}(0) \cup f^{-1}(1)$ to be sensitive to an index (or to a block)  if flipping it gives an input where $f$ is defined and takes the complementary value $1 - f(x)$. We do not consider an input to be sensitive to an index (or block) if
flipping it gives an input where $f$ is undefined.
Notice that with our definition, sensitivity can be 0 even for non-constant partial functions. 

\section{Public and private randomness in certificate games}



As a starting point, we give an upper bound of $\C$ on $\CGpub$ using a public coin protocol which illustrates how shared randomness can be used by the players to coordinate their outputs (Section~\ref{sec:CGpub-C-EC}). We then go on to show $\EC$ (Section~\ref{sec:CGpub-C-EC}), $\R$ and $\RS$ (Section~\ref{sec:CGpub-R}) are upper bounds on $\CGpub$. Finally, we give several upper bounds on private coin variant, $\CG$ (Section~\ref{sec:CG-bounds}).

\subsection{Public coin certificate game for the Tribes function }
\label{sec:CGpub-Tribes}

The $\Tribes_{s,t}$ function is a composition of two functions, $\Tribes_{s,t} = \OR_{s}\circ \AND_{t}$.

\begin{definition}[$\Tribes$]
\label{def:tribes}
$\Tribes_{s, t}:\{0,1\}^{st}\to \{0,1\}$ is defined using the DNF formula
\[\Tribes_{s, t}(x)=\bigvee_{i=1}^{s}\bigwedge_{j=1}^{t}x_{i,j}.\]
\end{definition}

The $\Tribes$ function is a very well studied problem in complexity theory. It has full randomized query complexity, in particular, $\R(\Tribes_{\sqrt{n}, \sqrt{n}}) = \Theta(n)$.   On the other hand, the functions $\OR_s$ and $\AND_t$ have full sensitivity. Thus $\CGpub$ of $\OR_{\sqrt{n}}$ and $\AND_{\sqrt{n}}$ is $\Theta(\sqrt{n})$. As a warmup to our general upper bounds on $\CGpub$, in Theorem~\ref{thm:tribes} we give a direct proof that the $\CGpub$ of $\Tribes_{\sqrt{n}, \sqrt{n}}$ is $O(\sqrt{n})$.
(This also follows from Theorem~\ref{thm:cgpub-is-fc}, the fact that for total functions $\CMM(f)=O(\C(f))$, and the fact that $\C(\Tribes_{\sqrt{n}, \sqrt{n}}) = \sqrt{n}$.)
Thus the function $\Tribes_{\sqrt{n}, \sqrt{n}}$ demonstrates a quadratic separation between $\R(f)$ and $\CGpub(f)$.

\begin{theorem}
\label{thm:tribes}$\CGpub(\Tribes_{\sqrt{n}, \sqrt{n}}) = 
O(\sqrt{n}).$
\end{theorem}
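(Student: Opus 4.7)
The plan is to construct a public-coin strategy that wins with probability $\Omega(1/k)$, where $k = \sqrt{n}$. I view inputs as $k \times k$ Boolean grids: every $x \in f^{-1}(0)$ has a $0$ in each row, every $y \in f^{-1}(1)$ has at least one all-$1$ row, and a winning output is a coordinate $(j,p)$ with $x_{j,p}=0$ and $y_{j,p}=1$.

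The key step I would treat first is a canonical case in which $x$ has exactly one $0$ per row, encoded by $A[j] \in [k]$ being the position of the $0$ in row $j$, and $y$ has exactly one all-$1$ row $j^*$, so the unique winning pair is $(j^*, A[j^*])$. The shared randomness supplies $k$ independent uniform permutations $\sigma_j : [k] \to [k]$ and an independent uniform $r \in [k]$. Bob outputs $(j^*, \sigma_{j^*}^{-1}(r))$; Alice outputs $(j, \sigma_j^{-1}(r))$ for the unique $j$ with $\sigma_j(A[j]) = r$, if one exists. Because $\{\sigma_j(A[j])\}_{j \in [k]}$ are i.i.d.\ uniform on $[k]$ over the choice of the $\sigma_j$'s, the event ``$\sigma_{j^*}(A[j^*]) = r$ and no other $j$ satisfies $\sigma_j(A[j]) = r$'' has probability $(1/k)(1-1/k)^{k-1} \ge 1/(ek)$, and on this event both players output the same winning pair.

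To handle the general case, I would reduce to the canonical one using additional shared randomness: independent uniform permutations $\rho_j : [k] \to [k]$ per row, and an independent uniform permutation $\tau : [k] \to [k]$ on the row indices. Alice defines her effective array by setting $A[j]$ to be the $\rho_j$-smallest position $p$ with $x_{j,p} = 0$; this is well defined because every row of $x$ has a $0$. Bob defines his effective all-$1$ row by $j^* := \arg\min_{j \in T(y)} \tau(j)$, where $T(y) \ne \emptyset$ is the set of all-$1$ rows of $y$. The players then run the canonical strategy with these effective encodings, so the common output is $(j^*, \sigma_{j^*}^{-1}(r)) = (j^*, A[j^*])$, which is a winning coordinate because $j^* \in T(y)$ forces $y_{j^*, A[j^*]} = 1$ while $x_{j^*, A[j^*]} = 0$ by construction.

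The main obstacle is to check that the canonical uniqueness bound still applies after the reduction. The clean way to see this is that, conditional on any fixing of $x, y, \rho_1, \dots, \rho_k, \tau$ (hence conditional on the effective array $A$ and the effective target $j^*$), the $\sigma_j$'s remain jointly independent and uniform over permutations, so $\{\sigma_j(A[j])\}_j$ is i.i.d.\ uniform on $[k]$. The uniqueness calculation from the canonical case then goes through verbatim and gives winning probability at least $1/(ek) = \Omega(1/k)$, whence $\CGpub(\Tribes_{\sqrt{n}, \sqrt{n}}) = O(\sqrt{n})$.
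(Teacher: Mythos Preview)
Your proof is correct and follows essentially the same approach as the paper: the shared permutations $\sigma_j$ together with a shared random target $r$ are exactly the mechanism the paper uses, and your uniqueness calculation $(1/k)(1-1/k)^{k-1}\ge 1/(ek)$ coincides with the paper's. The one difference is that your reduction to the canonical case introduces extra shared randomness ($\rho_j$ and $\tau$) that is not needed: in the paper Alice simply fixes, for each row, an arbitrary zero position $a_j$, and Bob fixes an arbitrary all-$1$ row $b$, deterministically from their inputs. Since the $\sigma_j$'s are independent of whatever choices Alice and Bob make, the uniqueness analysis already goes through without randomizing those choices; your $\rho_j,\tau$ do no harm but add nothing.
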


\begin{proof}

We give a public coin strategy for the Certificate game.
Let $x$ and $y$ be the two strings given to Alice and Bob respectively, that is $\Tribes_{\sqrt{n}, \sqrt{n}}(x) = 0$ and $\Tribes_{\sqrt{n}, \sqrt{n}}(y) =1$.  

Since $\Tribes_{\sqrt{n}, \sqrt{n}}(x) = 0$ for all $1\leq i\leq \sqrt{n}$ there exists $a_i$ such that $x_{i, a_i} = 0$ { where $x_{i, a_i}$ denotes the $a_i$th bit of the $i$th block of $x$}. Note that the $a_i$ is not necessarily unique. For each $i$, Alice arbitrarily picks a $a_i$ such that   $x_{i, a_i} = 0$ and then Alice considers a new string $x'$ where for all $i$,  $x_{(i,a_i)} = 0$ and for other bits of $x'$ is $1$.

Similarly, $\Tribes_{\sqrt{n}, \sqrt{n}}(y) = 1$ implies there exists an $b$ such that for all $1\leq j\leq \sqrt{n}$, $y_{b,j} = 1$. Again, note that there might be multiple such $b$ but Bob picks one such $b$ and considers the input $y'$ where $y_{b,j} = 1$ for all $1\leq j\leq \sqrt{n}$ and all other bits of $y'$ is set to $0$.

Note that $(b, a_b)$ is the unique index $(i,j)$ such that $x'{(i,j)}=0$ and $y'(i,j) = 1$. We will now present a protocol for Alice and Bob for outputting the index $(b, a_b)$ with probability at least $1/\sqrt{n}$. Note that this would imply our theorem. 

\begin{tcolorbox}
\begin{itemize}
    \item Alice and Bob uses shared randomness to select the same list of $\sqrt{n}$ permutations $\sigma_1, \dots, \sigma_{\sqrt{n}}: [\sqrt{n}]\to [\sqrt{n}]$, where the permutations are drawn (with replacement) uniformly and independently at random from the set of all possible permutations from $[\sqrt{n}]$ to $[\sqrt{n}]$.
    \item According to their pre-decided strategy both Alice and Bob picks the  same index $t$  between $1$ and $\sqrt{n}$.
    \item  Bob outputs $(b,\sigma_b^{-1}(t))$.
    \item Alice picks a number $i$ such that  $\sigma_i(a_i) = t$ and outputs $(i, a_i)$. In case no $i$ exists then Alice outputs any random index.
\end{itemize}  
\end{tcolorbox}

The probability of success of the protocol crucially depends on the fact that because Alice and Bob has shared randomness, they can pick the same set of  permutations $\sigma_1, \dots, \sigma_{\sqrt{n}}$ although the permutations are picked uniformly at random. 

We will show that with constant probability there exists a unique $i$ which satisfies $\sigma_i(a_i) = t$. Under the condition that this holds we will show that the probability of success of the above protocol is at least $1/\sqrt{n}$ which would prove the theorem. 
We start with the following claim that we will prove later.

\begin{claim}\label{cl:tribeshash}
For any fixed number $t$,  with probability at least  $(1-1/\sqrt{n})^{\sqrt{n}-1} \approx e^{-1}$,
there exists a unique $i$ such that $\sigma_i(a_i) = t$.
\end{claim}

Note that the permutation $\sigma_b$ is picked from the uniform distribution over all possible permutations from $[\sqrt{n}]$ to $[\sqrt{n}]$, i.e. $\sigma_b$ is a random bijection from $[\sqrt{n}]$ to $[\sqrt{n}]$. So with probability $1/\sqrt{n}$, $t = \sigma_b(a_b)$. Assuming that $t = \sigma_b(a_b)$ and that there exists a unique $i$ such that $\sigma_i(a_i) = t$, note that output of both Alice and Bob is indeed $(b, a_b)$. Thus the probability of success of the protocol is $\Omega(1/\sqrt{n})$.
\end{proof}

\begin{proof}[Proof of Claim~\ref{cl:tribeshash}] Consider the event $$\mathcal{E}_k:= \mbox{$\sigma_k(a_k) = t$ and for all $i\neq k$, $\sigma_i(a_i) \neq t$}.$$ The 
 probability that the event $\mathcal{E}_k$ occurs is $\frac{1}{\sqrt{n}}\cdot (1-\frac{1}{\sqrt{n}})^{\sqrt{n} - 1}$. The event that there exists a unique $i$ such that $\sigma_i(a_i) = t$ is $\cup_{k=1}^{\sqrt{n}} \mathcal{E}_k$. The events $\mathcal{E}_k$ are disjoint and the claim follows.
\end{proof}



\subsection{Upper bounds on \texorpdfstring{$\CGpub$}{CGpub} by \texorpdfstring{$\C$}{C} and \texorpdfstring{$\EC$}{EC}}
\label{sec:CGpub-C-EC}

We will take advantage of having access to shared randomness by using the hashing based approach outlined in Section 
\ref{sec:overview-upper}.
To illustrate the ideas of the proof,
we start with a simple argument to show that $\CGpub$ is always
upper bounded by certificate complexity.

Both players pick a certificate for their respective inputs.
They permute the indices $\{1,\ldots,n\}$  with shared randomness and each player outputs the first index in this new order within their certificate. Since their certificates must intersect, the probability that they are correct is at least one over the size of the union which is at most $2\C(f)$, so for any total function $
\CGpub(f) \leq 2\C(f))$.\footnote{We thank an anonymous referee for suggesting this simple and elegant proof.}

We now provide a different proof based on hashing as a warmup before we prove the stronger result $\CGpub(f)= O(\EC(f))$.

\begin{theorem}\label{thm:CGpub-C}
  For a total Boolean function $f$, $\CGpub(f) \leq O(\C(f))$.
\end{theorem}

\begin{proof}
  Let $S$ be a finite set of cardinality $\C(f)$.
  An element  $z \in S$ is fixed as part of the specification of the
  protocol ($z$ does not depend on the input).
  
  Using shared randomness, the players select a function
  $h: [n] \rightarrow S$ 
  as follows.
    Let $h: [n] \rightarrow S$ be a random
  hash function such that for each $i \in [n]$, $h(i)$ is selected
  independently and uniformly from $S$.
  
  For $x \in f^{-1}(0)$ we fix an optimal 0-certificate $C_x$,
  and denote by $A_x \subseteq [n]$ the set of indices fixed by $C_x$.
Similarly, for $y \in f^{-1}(1)$ we fix an optimal 1-certificate $C_y$,
and denote by $B_y \subseteq [n]$ the set of indices fixed by $C_y$.

After selecting $h$ using shared randomness, the players
proceed as follows.
On input $x$, Alice outputs an index $i \in A_x$ such that $h(i) = z$,
and on input $y$, Bob outputs an index $j \in B_y$ such that $h(j) = z$.
If they have several valid choices, they select randomly, and if they have
no valid choices they output arbitrary indices.

Let $i^* \in A_x \cap B_y$, such that $x_{i^*} \neq y_{i^*}$.
By the definition of certificates, such element $i^*$ exists for  any
$x \in f^{-1}(0)$ and $y \in f^{-1}(1)$, and $i^*$ is a correct answer
on input $(x,y)$ if both players output~$i^*$.
Next, we estimate what is the probability that both players output $i^*$.

First recall that by the definition of $h$,
the probability that $h(i^*)=z$
is $\frac{1}{|S|} = \frac{1}{\C(f)}$.
Next, notice that for any $i \in A_x \cup B_y$
the number of elements different from $i$ in $A_x \cup B_y$ is
$\ell = |A_x \cup B_y| -1 
\leq |A_x| + |B_y| - 2$.
Thus
for any $z \in S$ and any $i \in A_x \cup B_y$
the probability (over the choice of $h$)
that no element other than $i$
in $A_x \cup B_y$
is mapped to $z$ by~$h$ is
$( 1 - \frac{1}{|S|})^{\ell} \geq \frac{1}{e^2}$, since
$\max \{|A_x|, |B_y| \} \leq \C(f) = |S|$ and thus $\ell \leq 2(|S|-1)$.

Thus, the players output a correct answer with probability at least
$\frac{1}{e^2} \frac{1}{\C(f)}$.
\end{proof}

The previous theorem is stated for total functions and its proof critically depends on the intersection property of 0- and 1-certificates which does not hold for $\{0,*\}$- vs. $\{1,*\}$-certificates.
The theorem fails to hold for the partial function "Greater than Half" (see Section~\ref{sec:FunctionExample}), for which it is the case that $\C(\GTH)=1$ whereas $\CGpub(\GTH)$ is $\Theta(n)$.
However, the theorem and its proof remain valid for partial functions with respect to $\C'(f)$ (see Section~\ref{sec:partialdefs}). We obtain a stronger upper bound on $\CGpub$ by $\EC$.

\begin{theorem}\label{thm:CGpub-EC}
For a (possibly partial) Boolean function~$f$, $\CGpub(f) \leq O(\EC(f))$.
\end{theorem}

\begin{proof}
  The proof will be similar but slightly more involved than the
  proof of the upper bound by $\C$.
  We will rely on the ``weights'' $w_{x,i}$ from the definition of $\EC(f)$.

  Let $S$ be a finite set of cardinality $\lceil \EC(f) \rceil$.
  Using shared randomness, the players select a function
  $h: [n] \rightarrow S$ and an element $z \in S$ as follows.
    Let $h: [n] \rightarrow S$ be a random
  hash function such that for each $i \in [n]$, $h(i)$ is selected
  independently and uniformly from $S$. In addition,
  $z$ is selected uniformly from $S$
  and independently from the choices for $h$.

  For all inputs $x \in \{0,1\}^n$ consider the weights $w_{x,i}$
  achieving $\EC(f)$. 
  Denote by  $EC_x$ the sum $\sum_{i \in [n]}w_{x,i}$
  and recall that by the definition of $\EC$,
  for each $x \in \{0,1\}^n$ we have
$EC_x \leq \EC(f)$.
  
    For a given $z \in S$, consider the preimage $h^{-1}(z)$.
  We use the notation
  $$W_x(z) =  \sum_{i \in h^{-1}(z)} w_{x,i}\;.$$
  Notice that for any $z \in S$,
  $$E[W_x(z)] = \sum_{i \in [n]}\frac{w_{x,i}}{|S|} = \frac{EC_x}{|S|} $$
  where the expectation is over the choice of the hash function.

  After selecting $h$ and $z$ using shared randomness, the players
proceed as follows.
On input $x \in f^{-1}(0)$ Alice selects an index $i$ from $h^{-1}(z)$
such that each $i$ is chosen with probability $\frac{w_{x,i}}{W_x(z)}$.
Similarly, on input $y \in f^{-1}(1)$ Bob selects an index $i$ from $h^{-1}(z)$
such that each $i$ is chosen with probability $\frac{w_{y,i}}{W_y(z)}$.
Note that these choices are made using Alice's and Bob's private randomness,
so for fixed $z$ and $h$ Alice's choices are independent from Bob's choices.
However,
they both depend on $z$ and $h$.
In what follows, we will denote by $Pr_z$ and $Pr_h$, respectively,
the probabilities that are only over the choice of $z$ and $h$, respectively.

Recall that $W_x(z)$ and $W_y(z)$ are measures of the preimage of $z$
with respect to the weights for $x$ and $y$ respectively. 
Since $\frac{\EC_x}{|S|} \leq 1$ for any $x \in \{0,1\}^n$,
the preimage of most elements in $S$ will have small measure.
  Next we estimate the probability that a given element $i$ is mapped
to a value $h(i)$ whose preimage has small measures $W_x(h(i))$ and
$W_y(h(i))$. Note that this only depends on the choice of $h$.

  For a given $i$, consider first selecting the values $h(j)$
for all $j \neq i$ from $[n]$.
Consider the measure of the preimages of elements in $S$ at this point
(without taking into account what happens to $i$).
Since $\frac{\EC_x - w_{x,i}}{|S|} \leq 1$ for any $x \in \{0,1\}^n$,
at most $\frac{1}{t-1}$ fraction of the elements in $S$ can have measure
more than $t-1$ at this point.
Since $w_{x,i} \leq 1$, we get that for any $x \in \{0,1\}^n$ and $i \in S$,
$Pr_h[ W_x(h(i)) > t] \leq \frac{1}{t-1}$.

For $i \in [n]$,
let $\SMALL_i$ denote the event that both $W_x(h(i))$ and $W_y(h(i))$
are at most $t$.
Then $Pr_h[\SMALL_i] \geq 1 - \frac{2}{t-1}$.

For a given $i \in [n]$,
let $\BOTH_i$ denote the event that both players select
$i$.  
Let $I(x,y) = \{i | x_i \neq y_i \}$. Since $f(x) = 1- f(y)$,
$I(x,y) \neq \emptyset $.

Recall that the players goal is that they both output the same $i$
from $I(x,y)$.
Denote by $P(x,y)$ the probability
that they both output the same $i$ from $I(x,y)$. 
Note that $P(x,y)$ is at least as large as the
probability that they both output the same $i$
from $I(x,y)$, and both $W_x(h(i))$ and $W_y(h(i))$
are at most $t$.

Thus, using that the events $\BOTH_i$ are pairwise disjoint, we have

\begin{equation*}
  \begin{split}
P(x,y) & \geq 
\sum_{i \in I(x,y)} Pr[ \BOTH_i \cap (z{=}h(i)) \cap \SMALL_i] \\
& = \sum_{i \in I(x,y)}
Pr[ \BOTH_i | (z{=}h(i)) \cap \SMALL_i] Pr[(z{=}h(i)) \cap \SMALL_i] \;.
  \end{split}
\end{equation*}

Note that the events $z=h(i)$ and $\SMALL_i$ are independent,
since the choice of $z$ is independent of $h$.
For any $i^* \in I(x,y)$, and $h: [n] \rightarrow S$,
$Pr_z[z = h(i^*)]= \frac{1}{|S|}$,
Thus,
$Pr[z=h(i) \cap \SMALL_i] = Pr_z[z=h(i)] Pr_h[\SMALL_i]
= \frac{1}{|S|}Pr_h[\SMALL_i] \geq \frac{1}{|S|} (1 - \frac{2}{t-1})$.

For any $i \in [n]$, we have
$$Pr[ \BOTH_i | z=h(i)] = \frac{w_{x,i}}{W_x(z)} \frac{w_{y,i}}{W_y(z)}
\mbox{  and  }
Pr[ \BOTH_i | z=h(i) \cap \SMALL_i] \geq \frac{w_{x,i}}{t} \frac{w_{y,i}}{t}\;.$$
Thus, we get
$$P(x,y) \geq \frac{1}{t^2} \frac{1}{|S|} (1 - \frac{2}{t-1})
\sum_{i \in I(x,y)} w_{x,i} w_{y,i}
\geq \frac{1}{t^2} \frac{1}{|S|} (1 - \frac{2}{t-1})$$
where the last inequality follows by the definition of $\EC(f)$.

Setting $t = 5$, we get that
the players output the same element from $I(x,y)$ with probability
at least
$\frac{1}{50} \frac{1}{\lceil \EC(f) \rceil} = \Omega(\frac{1}{\EC(f)})$.
\end{proof} 
   

\subsection{Lower bound on sabotage complexity}
\label{sec:CGpub-R}

Randomized sabotage complexity $\RS$~\cite{BK18} is a measure of complexity introduced to study the behavior of randomized query complexity $\R$ under composition. It was shown that $\RS$ is a lower bound on $\R$ and that it behaves perfectly under composition.
\begin{definition}[Sabotage Complexity \cite{BK18}]
\label{def:sabotage}
The sabotage complexity of a function $f$, denoted $\RS(f)$, is defined using a concept of \emph{sabotaged}
inputs $P_f \subseteq \{0,1,*\}^n$ which is the set of all partial assignments of a function $f$ consistent with a $0-$input and a $1-$input. Let $P_f^\dagger$ is defined similarly with the symbol $*$ being replaced by $\dagger$. Given a (possibly partial) function $f$, a partial function $f_{sab} : P_f \cup P_f^\dagger \mapsto \zone$ is defined as $f_{sab}(x) = 1 $ if $x \in P_f$ and $f_{sab}(x) = 0 $ if $x \in P_f^\dagger$ (here we view $P_f,P_f^\dagger$ as subsets of $\{0,1,*,\dagger\}^n$). The sabotage complexity is defined as the randomized query complexity of computing $f_{sab}$ i.e. $\RS(f) = \R(f_{sab})$.
\end{definition}

The classical adversary method $\CMM$ was introduced as a lower bound on $\R$~\cite{LM08} but there were no limitations known on this quantity that hold for partial functions~\cite{AKPV}.
In this section we  show that on sabotage complexity $\RS$ is an upper bound on $\CGpub$ and therefore on $\CMM$ (see Theorem~\ref{thm:cgpub-is-fc}). As a warm-up, we give an easy proof that 
$\CGpub(f) = O(\R(f))$. 

\begin{theorem}
\label{prop:CGpub-R}
For any Boolean (possibly partial) function $f$,
$\CGpub(f) \leq O(\R(f)).$
\end{theorem}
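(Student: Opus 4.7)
The plan is to let Alice and Bob use the shared random coin to simulate the same randomized query algorithm for $f$ on their respective inputs, and then jointly pick a random position in the execution at which to output the queried index. The key observation is that, with shared randomness, the two executions proceed in lockstep until the first moment at which they receive different answers, and such a moment must occur at an index where $x$ and $y$ differ.

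Let $T$ be a bounded-error randomized query algorithm computing $f$ with worst-case query complexity $R=\R(f)$ and error at most $1/3$ on every input in the domain. Pad $T$ with dummy queries so it always makes exactly $R$ queries. The strategy is:
\begin{enumerate}
  \item Using public coins, sample a random seed $r$ for $T$.
  \item Using public coins, sample $t \in [R]$ uniformly at random, independently of $r$.
  \item Alice runs $T$ on $x$ with seed $r$ and outputs the index of the $t$-th query; Bob does the same on $y$.
\end{enumerate}

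For the analysis, fix $x\in f^{-1}(0)$ and $y\in f^{-1}(1)$, and let $E$ be the event (over $r$) that $T$ is correct on both $x$ and $y$; by a union bound, $\Pr[E]\geq 1/3$. Under $E$, the two executions end at leaves labeled $0$ and $1$, so there must be some position at which the answers received by the two executions differ; let $t^*$ be the smallest such position. By induction on the position, for every $s<t^*$ the queries and answers are identical in the two executions; since the next query depends only on $r$ and on the prior answers, the $t^*$-th query is the same index $i^*$ in both executions. The fact that the answers at position $t^*$ differ means exactly $x_{i^*}\neq y_{i^*}$, so $i^*$ is a correct output for the certificate game.

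Conditioning on $E$, the independent coin $t$ equals $t^*$ with probability $1/R$, in which case both players output $i^*$. Hence the strategy succeeds with probability at least $\Pr[E]\cdot \tfrac{1}{R} \geq \tfrac{1}{3R}$, which gives $\CGpub(f)\leq 3\,\R(f)=O(\R(f))$. The argument is agnostic to whether $f$ is total or partial, since we only use that $T$ is correct on both $x$ and $y$, both of which lie in the domain. There is no significant obstacle here; the only subtlety to check is that the first divergence in answers forces the queries to agree at that step, which follows immediately from the determinism of $T$ once the seed is fixed.
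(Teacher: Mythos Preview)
Your proof is correct and follows essentially the same approach as the paper: both players simulate the same randomized query algorithm with shared coins, output the index of a jointly-chosen random query position, and succeed when that position is the first point of divergence. Your treatment is in fact slightly more careful than the paper's---you pad to exactly $R$ queries and use a union bound to get $\Pr[E]\ge 1/3$, whereas the paper writes $4/9$ as if the two correctness events were independent---but the argument and the resulting bound are the same.
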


\begin{proof} 
From the definition of $\R(f)$ there is a randomized decision tree $\mathcal{R}$ that on any input $x$ outputs $f(x)$ correctly with probability at least $2/3$, and  $\mathcal{R}$ only reads at most $\R(f)$ number of bits of $x$. To prove $\CGpub(f) \leq \R(f)$ let us consider the following strategies used by the two players: 

\textit{Both the players run the algorithm $\mathcal{R}$ on their respective inputs using the same random coins (using the shared randomness). Both the player also use shared randomness to  pick a number $t$ uniformly at random between $1$ and $\R(f)$. Both the players output the $t$-th index that is queried by~$\mathcal{R}$.}

Let $x$ and $y$ be the inputs to the players respectively.
Since $f(x) = 1- f(y)$, with probability at least $4/9$ the algorithm $\mathcal{R}$
will output different answers when the players run the algorithm on their respective inputs. Also since the algorithm $\mathcal{R}$ is run using the same internal coins, the initial sequence of indices queried by both the runs of the algorithm is the same until the algorithm queries an index $k$ such that $x_k \neq y_k$. Note that with probability $1/\R(f)$, the random number $t$ picked by $t$ is the same as $k$. So with probability $\frac{4}{9}\cdot \frac{1}{\R(f)}$, the players correctly output the same index $t$ such that $x_t \neq y_t$. Hence $\CGpub(f) \leq O(\R(f))$.
\end{proof}

Using the same idea we can show that the public coin certificate game complexity $\CGpub$ is bounded above by  randomized sabotage complexity.

\begin{theorem}
\label{thm:sabotage}
The public coin certificate game complexity of a (possibly partial) function~$f$ is at most its sabotage complexity: 
$\CGpub (f) \leq \frac{9}{2} \RS(f)$.
\end{theorem}
\begin{proof}
We show this by using the sabotage complexity protocol to build a $\CGpub$ protocol. 
Assuming that Alice has input $x$ and Bob an input $y$ such that $f(x)= 1- f(y)$, we construct a sabotaged input $z_{x,y}$ that is consistent with $x$ and $y$ as follows:
\begin{align*}
z_{x,y} (i) = 
\begin{cases}
x(i) &\quad\text{ if } x(i) = y(i)\\
* &\quad\text{ otherwise.}
\end{cases}
\end{align*} 
On the input $z_{x,y}$, we know that a decision tree sampled from the distribution given by the $\RS$ protocol succeeds in finding a $*$ or $\dagger$ with probability $\geq 2/3$. The $\CGpub$ protocol is as follows: using public randomness, Alice and Bob sample a decision tree from the $\RS$ protocol and follow the path on the decision tree according to their respective inputs for at most $\RS(f)$ steps.
With probability at least $2/3$ the randomly chosen tree finds a $*$ on input $z_{x,y}$ in $\RS(f)$ steps. Since the sabotaged input $z_{x,y}$ is consistent with both Alice's and Bob's input, the path on $x$ and $y$ on the decision tree is the same as that on $z_{x,y}$ until they reach a place where they differ (or encounter a $*$ in $z_{x,y}$). Alice and Bob pick a random position $t$ such that $1 \leq t \leq \RS(f)$ and output the $t^{th}$ query made in the corresponding paths on the tree. With probability $\frac{1}{\RS(f)}$, it is place corresponding to a $* \in z_{x,y}$ and they succeed in finding a place where the inputs differ. This gives a success probability $\geq 2/3 \frac{1}{\RS(f)}$ as the random decision tree sampled finds a $*$ on the sabotaged input $z_{x,y}$ with probability $\geq 2/3$.    
\end{proof} 



\subsection{Upper and lower bounds for private coin certificate games} \label{sec:CG-bounds}

We first observe that the following formulation is equivalent to ${\CG}$. The essential idea is rescaling, and the objective function gets squared because the constraints are quadratic.
\begin{proposition}[Equivalent formulation for $\CG$]\label{prop:sqrt-CG}
For any (possibly partial) function $f$,
 \begin{align*}
 {\CG(f)}  = &  \min_{\{w_{x,i}\}} \max_x  \left\{\sum_i w_{x,i}\right\}^2
  \\
  \text{such that } &  
  \sum_{i: x_i \neq y_i}  w_{x,i} w_{y,i} \geq 1 \quad \forall x\in f^{-1}(0), y\in f^{-1}(1) \\
  ~& w_{x,i} \geq 0 \quad \forall x,i
 \end{align*}
\end{proposition}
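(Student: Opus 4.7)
The plan is to establish this equivalence by a straightforward rescaling argument. The formulation in Definition~\ref{def:CG} fixes the scale by requiring each $p_{x,\cdot}$ to be a probability distribution and minimizes $\max_{x,y} 1/\winning(p;x,y)$; the proposed formulation instead frees the scale, stiffens the quadratic constraint to $\sum_{i:x_i\ne y_i} w_{x,i}w_{y,i}\geq 1$, and absorbs the scale into the objective $\max_x\{\sum_i w_{x,i}\}^2$. Since the constraint is quadratic in the $w$ variables, a uniform rescaling of $p$ by a factor $c$ scales the constraint by $c^2$, so the objective naturally picks up a square.

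For the direction that the new value is at most $\CG(f)$, I would take an optimal $p$ achieving $V=\CG(f)$, so $\winning(p;x,y)\geq 1/V$ for every $(x,y)\in f^{-1}(0)\times f^{-1}(1)$, and set $w_{x,i}=\sqrt{V}\,p_{x,i}$. Then $\sum_i w_{x,i}=\sqrt{V}$ for every $x\in\Dom{f}$, giving objective value exactly $V$, while
$$\sum_{i:x_i\ne y_i}w_{x,i}w_{y,i}=V\cdot\winning(p;x,y)\geq 1,$$
so the constraints all hold and this $w$ is feasible with value $V$.

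For the converse direction, take any feasible $\{w_{x,i}\}$ with objective $W=\max_x S_x^2$, where $S_x:=\sum_i w_{x,i}$. Observe $S_x>0$ for every $x\in\Dom{f}$ admitting an opposite-valued partner, since otherwise the constraint $\sum_{i:x_i\ne y_i}w_{x,i}w_{y,i}\geq 1$ would reduce to $0\geq 1$ (such a partner $y$ necessarily differs from $x$ in some coordinate because $f(x)\neq f(y)$). Set $p_{x,i}=w_{x,i}/S_x$, a probability distribution on $[n]$; then
$$\winning(p;x,y)=\frac{1}{S_xS_y}\sum_{i:x_i\ne y_i}w_{x,i}w_{y,i}\geq\frac{1}{S_xS_y}\geq\frac{1}{W},$$
so $\CG(f)\leq W$. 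Taking infima on both sides yields the claimed equality. There is no real obstacle here: the only subtlety is the degenerate case $S_x=0$, which is immediately ruled out by feasibility whenever $f$ takes both values, so for the functions under consideration the normalization step is always well-defined.
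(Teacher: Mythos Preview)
Your proof is correct and is essentially the same rescaling argument as the paper's own proof: both directions pass between $p$ and $w$ via $w_{x,i}=\sqrt{\CG(f)}\,p_{x,i}$ (equivalently $p_{x,i}=w_{x,i}/S_x$), using that the quadratic constraint scales by the square of the normalization factor. Your explicit remark about why $S_x>0$ is a small nicety the paper leaves implicit, but otherwise the two arguments coincide.
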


\begin{proof}
We will first show that the value of the objective function in the formulation in terms of weights is at most ${\CG}$. 
Let $p$ be an optimal probability distribution that achieves $\CG(f)$ and let 
$$\Delta = \min_{x, y: f(x) = 1 - f(y)}  \sum_{i: x_i \ne y_i} {p_{x,i} p_{y,i}} = \frac{1}{\CG(f)}. $$
We construct the following weight scheme using $p$,  
$w_{x,i} = \frac{p_{x,i}}{\sqrt{\Delta}}$ 
and this is a feasible solution for the above formulation since $\forall  x, y$ such that $ f(x) = 1- f(y)$,
\begin{align*}
   \sum_{i: x_i \ne y_i} {w_{x,i} w_{y,i}} = \frac{1}{\Delta} \sum_{i: x_i \ne y_i} {p_{x,i} p_{y,i}} \geq \frac{\Delta}{\Delta} =1
\end{align*}
We now have
\begin{align*}
 \min_{\{w'_{x,i}\}} \max_{x} \left\{\sum_{i \in [n]} w'_{x,i}\right\}^2 \leq \max_{x}  \left\{\sum_{i \in [n]} w_{x,i}\right\}^2 = \max_{x} \left\{\sum_{i \in [n]}  \frac{p_{x,i}}{\sqrt{\Delta}}\right\}^2 =  \frac{1}{{\Delta}} = {\CG(f)}
\end{align*}
For the other direction, let $w$ be an optimal weight scheme $w$ that minimises $\max_x  \sum_i w_{x,i}$. We construct the following family of probability distributions:
$p_{x,i} = \frac{w_{x,i}}{\sum_j w_{x,j}}$ 
This gives the following.
 \begin{align*}
\CG(f) \leq &
\max_{\substack{x, y\\ f(x) = 1- f(y)}} \frac{1}{\sum_{i: x_i \ne y_i} {p_{x,i} p_{y,i}}} \\
= & \max_{\substack{x, y\\ f(x) = 1- f(y)}} \frac{\sum_j w_{x,j} \sum_j w_{y,j}}{\sum_{i: x_i \ne y_i} w_{x,i} w_{y,i}} \\
\leq & \max_{\substack{x, y\\ f(x) = 1- f(y)}} \sum_j w_{x,j} \sum_j w_{y,j}.
\end{align*}
Thus we have 
${\CG(f)} \leq \max_{x} \left\{ \sum_j w_{x,j} \right\}^2$.
\end{proof}

We show that the following  relations hold for $\CG$.
\begin{theorem}
\label{thm:CG-bounds}
For any total Boolean function $f$, 
\begin{multicols}{2}
\begin{enumerate}
    \item  $\pAdv(f)^2 \leq \CG(f) $
    \item $\R_0(f) \leq \CG(f)\leq O(\EC(f)^2)$ \cite{JKKLSSV20}
    \item $\CG(f) \leq O(\CGpub(f)^2 \s(f))$ \cite{JKKLSSV20}
    \item $\CG(f) \leq \C^0(f)\C^1(f)$
\end{enumerate}
\end{multicols}

\end{theorem}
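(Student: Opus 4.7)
I would tackle the four bounds in the order (4), (1), (2), (3), since later parts lean on earlier ones. Part (4) is the classical bound: every minimum $0$-certificate for $x$ and every minimum $1$-certificate for $y$ must agree on the values of at least one common index $i$, and since $f(x) \neq f(y)$ this forces $x_i \neq y_i$ on that shared index. Hence the private-coin strategy in which Alice outputs a uniformly random index from a fixed minimum $0$-certificate for her input (and Bob does the same on a $1$-certificate) succeeds with probability at least $1/(\C^0(f)\cdot\C^1(f))$.

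For (1), the key observation is that $\CG$ and $\pAdv$ have essentially the same optimization structure (Definitions~\ref{def:CG} and~\ref{def:pAdv}): both minimize over families of probability distributions $p_{x,\cdot}$, but $\CG$ sums $p_{x,i}p_{y,i}$ while $\pAdv$ sums $\sqrt{p_{x,i}p_{y,i}}$. Using the elementary inequality $\sum_i a_i^2 \leq (\sum_i a_i)^2$ for nonnegative reals, applied with $a_i = \sqrt{p_{x,i}p_{y,i}}$, one gets
\[
\sum_{i : x_i \neq y_i} p_{x,i}p_{y,i} \;\leq\; \Bigl(\sum_{i : x_i \neq y_i} \sqrt{p_{x,i}p_{y,i}}\Bigr)^{2}.
\]
Taking reciprocals, then $\max_{x,y}$ and $\min_p$ on both sides, yields $\CG(f) \geq \pAdv(f)^2$ by direct comparison with the two definitions.

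For (2), the upper bound $\CG \leq O(\EC^2)$ follows by normalizing the weights that achieve $\EC(f)$: if $\{w_{x,i}\}$ are such weights, then the distributions $p_{x,i} = w_{x,i}/\sum_j w_{x,j}$ form a valid $\CG$ strategy whose success probability on any pair $(x,y)$ satisfies
\[
\frac{\sum_{i:x_i\neq y_i} w_{x,i}w_{y,i}}{\bigl(\sum_j w_{x,j}\bigr)\bigl(\sum_j w_{y,j}\bigr)} \;\geq\; \frac{1}{\EC(f)^{2}},
\]
using the $\EC$ constraint in the numerator and $\sum_j w_{z,j} \leq \EC(f)$ in the denominator. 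The lower bound $\R_0(f) \leq O(\CG(f))$ transfers the \cite{JKKLSSV20} argument for $\R_0 \leq O(\EC^2)$ essentially verbatim, using the $\CG$ probabilities as the sampling distribution for their zero-error decision tree. Finally, (3) is a chain of inequalities already in hand: $\CG \leq O(\EC^2) \leq O(\FC^2 \s) \leq O(\CGpub^2 \s)$, combining the \cite{JKKLSSV20} bound $\EC \leq O(\FC\sqrt{\s})$ with the basic hierarchy $\FC \leq \CGns \leq \CGpub$ from Theorem~\ref{thm:CGns-bs} and Proposition~\ref{prop:CGhierarchy}. The only genuinely new content is the short elementary inequality in (1); the other parts are either folklore or direct reductions to prior work.
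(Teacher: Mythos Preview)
Your proposal is correct and follows essentially the same approach as the paper: the same certificate-intersection strategy for (4), the same elementary inequality $\sum a_i^2 \leq (\sum a_i)^2$ for (1), the same normalization of $\EC$ weights and appeal to the \cite{JKKLSSV20} argument for (2), and the same chain $\CG \leq O(\EC^2) \leq O(\FC^2\s) \leq O(\CGpub^2\s)$ for (3). The paper makes one point slightly more explicit than you do: via Proposition~\ref{prop:sqrt-CG} it observes that $\sqrt{\CG}$ is exactly $\EC$ with the constraint $w_{x,i}\leq 1$ dropped, and then notes that the \cite{JKKLSSV20} proof of $\R_0 \leq O(\EC^2)$ never uses that constraint, so it is literally a proof of $\R_0 \leq O(\CG)$ rather than merely transferring ``essentially verbatim.''
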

The first and last items also hold for partial functions.
However,  
the ''Greater than Half" function (see section~\ref{sec:FunctionExample}) is an example of a partial function that would violate item 4 using the alternate definitions $\C^{\{0,*\}}$ and $\C^{\{1,*\}}$.

\begin{proof}

{\bf Item 1} Let $p$ be an optimal solution for $\CG(f)$ so that 
$\omega(p;x,y)\geq \frac{1}{\CG(f)} $ for all $x,y$ satisfying $f(x)= 1- f(y)$.
Using the same assignment for $\pAdv$ (Definition~\ref{def:pAdv}), 
it is the case that
\begin{align*}
\frac{1}{\pAdv(f)^2}&\geq 
\min_{\substack{x \in f^{-1}(0)\\ y \in f^{-1}(1)}} \left(\sum_{i:x_i \neq y_i}\sqrt{p_{x,i}p_{y,i}}\right)^2\\[3pt]
&\geq \min_{\substack{x \in f^{-1}(0)\\ y \in f^{-1}(1)}} \sum_{i:x_i \neq y_i}{p_{x,i}p_{y,i}}
\end{align*} 
so $\pAdv(f)^2 \leq \CG(f) $.

{\bf Item 2} From Proposition~\ref{prop:sqrt-CG}, the formulation of $\sqrt{CG}$ is a relaxation of the definition of $\EC$, where the constraint $w_{x,i} \leq 1$ is dropped in 
$\sqrt{CG}$,
giving the second inequality $\sqrt{CG}(f) \leq \EC(f)$.

For the first inequality, it was shown in~\cite{JKKLSSV20} that $\R_0\leq O(\EC^2)$. However, their proof does not make use of the constraints $w_{x,i} \leq 1$. 
Therefore, their proof already shows that $\R_0(f) \leq O(\CG(f))$.

{\bf Item 3} Jain et al.~\cite{JKKLSSV20} showed that $\EC(f)^2\leq O(\FC(f)^2 \s(f))$. From the previous item $\CG(f)\leq O(\EC(f)^2)$, and $\FC(f) = O(\CMM(f)) 
$~\cite{AKPV}, $\CMM(f)\leq \CGns(f) \leq \CGpub(f)$ from Theorem~\ref{thm:CGns-cAdv} and Proposition~\ref{prop:CGhierarchy}. We get the desired result by combining these inequalities.

{\bf Item 4} It is easy to see that $\CG(f) \leq \C^0(f) \cdot \C^1(f)$:
on input $x$, each player outputs uniformly at random some 
index $i$ in a minimal certificate for their input. The certificates must intersect in at least one index, otherwise we could simultaneously fix the value of $f$ to 0 and to 1 by fixing both certificates. The strategy therefore succeeds
when both players output the same index in the intersection, which occurs with probability at least $\frac{1}{\C^0(f)}\frac{1}{\C^1(f)}$.
This argument remains valid for partial functions, however the ``Greater than Half'' function (see section~\ref{sec:FunctionExample}) is an example of a partial function that would violate item 4 using the alternate definitions $\C^{\{0,*\}}$ and $\C^{\{1,*\}}$.
\end{proof}


\section{Lower bounds on quantum certificate game complexity}
\label{sec:CGstar-lb}


In this section, we give a very short and simple proof that the classical adversary ($\CMM$)  is a lower bound on all of our certificate game models.

To illustrate the idea behind the proof  and the technique we use, we start with a quantum lower bound on the $\OR$ function. Consider a hypothetical  strategy with shared entanglement that would allow two players to win the certificate game with probability more than $1/n$. Then the players could use this strategy for the certificate game  as a black box, to convey information (without using communication) in the following way. Assume Alice wants to send an integer $i\in \{1,\ldots,n\}$ to Bob. Bob uses input $y=0^n$ and Alice uses input $x=y^{(i)}$ (all $0$s with the $i$-th bit 1). By running this game several times, Bob could learn $i$ by taking the majority output of several runs of this game, which would violate the non-signaling principle of quantum information.

In order to give a formal proof, we show that $\CGns(\pOR)\geq n$. 
Since $\CGns(f)\leq \CGstar(f)$ for every $f$, the following proposition implies that $\CGstar(\pOR_n)\geq n$.
\begin{proposition}
\label{prop:CGnsOR}
$\CGns(\pOR_n) \geq n$.
\end{proposition}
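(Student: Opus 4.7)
The plan is to apply the dual formulation of $\CGns$ from Proposition~\ref{prop:CGns-dual} and exhibit a feasible solution achieving $\delta = 1/n$; since $\CGns(f) = 1/\winning^\ns(G_f)$ and $\winning^\ns$ is a minimum, any feasible $\delta$ upper bounds $\winning^\ns$, hence yields the desired lower bound on $\CGns(\pOR_n)$.

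First I would unpack the structure of $\pOR_n$. The unique $0$-input is $x = 0^n$, and the $1$-inputs are the standard basis vectors $e_1,\ldots,e_n$. For each $i$, the pair $(0^n, e_i)$ differs in exactly one coordinate, so a deterministic strategy $(A,B)$ is correct on $(0^n, e_i)$ if and only if $A(0^n) = B(e_i) = i$. Guided by this, I would take $\mu$ to be uniform over the support: $\mu_{0^n, e_i} = 1/n$, which satisfies $\sum_{x,y}\mu_{x,y} = 1$. The non-signaling intuition driving the construction is that on fixed input $x=0^n$, Alice's output marginal is some distribution $q$ on $[n]$ independent of Bob's input, so winning on $(0^n, e_i)$ is capped by $q_i$; since $\sum_i q_i = 1$, the worst-case index has $q_i \leq 1/n$.

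To turn this intuition into a dual certificate, I would set
\[
\gamma_{a,b,\,0^n,\,e_i} \;=\; \tfrac{1}{n}\,\indicator[\,i = a \text{ and } b \neq a\,],
\]
with all other $\gamma$ entries zero. Fixing any deterministic $(A,B)$ with $A(0^n)=a$ and $B(e_i)=b_i$, the $\mu$-contribution to the constraint equals $\tfrac{1}{n}\indicator[b_a = a]$ (the only pair where correctness is even possible is $(0^n, e_a)$, and only when $b_a = a$), while the $\gamma$-contribution collapses to the single term $\gamma_{a, b_a, 0^n, e_a} = \tfrac{1}{n}\indicator[b_a \neq a]$ since every other $\gamma_{a, b_i, 0^n, e_i}$ with $i \neq a$ vanishes. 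These two contributions are complementary indicators and sum to exactly $\delta = 1/n$ for every deterministic strategy, so the dual constraints hold.

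The main subtlety is that the $\gamma$-variables are chosen globally, yet the dual equality must hold for every deterministic $(A,B)$ simultaneously; if the $\gamma$-mass were spread carelessly across several pairs $(0^n, e_i)$, a strategy could trigger too many $\gamma$ terms at once and overshoot~$\delta$. The construction sidesteps this by concentrating the $\gamma$-compensation on the ``witness pair'' $(0^n, e_a)$ keyed to Alice's answer $a$, so that per-strategy bookkeeping telescopes to exactly the shortfall of the $\mu$-term. Once the choice of $\gamma$ is identified, verification is mechanical, and the resulting feasible solution with $\delta = 1/n$ immediately gives $\winning^\ns(G_{\pOR_n}) \leq 1/n$ and therefore $\CGns(\pOR_n) \geq n$.
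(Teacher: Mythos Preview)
Your proposal is correct and follows essentially the same approach as the paper: the same uniform $\mu$ over the pairs $(0^n,e_i)$, the same $\gamma$ supported on $(0^n,e_a)$ with $b\neq a$, and the same two-case verification that the $\mu$- and $\gamma$-contributions are complementary indicators summing to $1/n$.
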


\begin{proof}[Proof of Proposition~\ref{prop:CGnsOR}]
We give a feasible solution to the dual, composed of a hard distribution $\mu$ and an assignment to the variables $\gamma_{i,j,x,y}$ that satisfy the constraints of the dual  given in Proposition~\ref{prop:CGns-dual}.

 Let $\delta=\frac 1 n, x = 0^n$, and consider $\mu_{xy}= \frac 1 n$ when $y = x^{(i)}$ ($x$ with the $i^{\text{th}}$ bit flipped to 1), and~0 everywhere else.
 To satisfy the correctness constraint, we use $\gamma$ to pick up weight $1/n$ whenever a strategy $AB$ fails on some pair $x,x^{(i)}$. To do this, we define $\gamma_{i,j,x,x^{(i)}} = \frac 1 n$ for all $j{\neq}i$ (and~0 everywhere else).
 To see that this satisfies the constraints, consider any strategy $AB$ and let $i = A(x)$ be $A$'s output on~$x$.

 \ 
 
 \textbf{Case 1:} If $B(x^{(i)})=i$ then $AB$ is correct on $x,x^{(i)}$, but cannot be correct  on any other input pair with non-zero weight under $\mu$. Therefore,
 $$\sum_{x,y': A(x) =B(y')=i  \text{ and } x_{i} \neq y_{i}}\mu_{x,y} = \frac 1 n \quad \text{and}\quad \sum_{x,y'}\gamma_{A(x), B(y'),x,y'} = 0.$$

 \ 
 
 \textbf{Case 2:} If $B(x^{(i)})=j\neq i$, then $AB$ is incorrect on all non-zero weight input pairs, and we have
 $$\sum_{x,y': A(x) =B(y')=i  \text{ and } x_{i} \neq y_{i}}\mu_{x,y}=0 \quad \text{ and} \quad  \sum_{x,y'}\gamma_{A(x),B(y'),x,y'}=\frac 1 n
 .$$ 
 
 Since $\delta=\frac 1 n$ this is satisfying assignment, which shows that $$\CGns(\pOR)= \winning^\ns(R_{\pOR})^{-1}\geq n .$$
 \end{proof} 
 
Note that $\Q(\pOR)$ is $\Theta(\sqrt{n})$~\cite{Grover96, BBC+01}. Thus, $\pOR$ shows that there exist a function for which $\CGstar(f) = \omega(\Q(f))$ (as opposed to the randomized model where $\CGpub(f) \leq O(\R(f))$. On the other hand,  
note that the function constructed by \cite{ABK16} demonstrates that there exists a total Boolean function $f$ with $\C(f) = O(\sqrt{\Q(f)})$; this $f$ also shows that $\CGpub(f)$ could be as small as $O(\sqrt{\Q(f)})$.

The previous lower bound on the $\OR$ function can be generalized,  with a slightly more complicated weight assignment, to show that block sensitivity is a lower bound on the non-signaling value of the certificate games. 
However, using a different technique, we can prove an even stronger result. We do this by going  back to the original definition of $\CGns$ (Definition~\ref{def:CGns}) and giving a very simple proof that $\CGns$ is an upper bound on $\CMM$.

\begin{theorem}
\label{thm:CGns-cAdv}
For any (possibly partial) Boolean function~$f$,  $\CMM(f)\leq \CGns(f)$.
\end{theorem}
\begin{proof}
Let $p(i,j|x,y)$ be the distribution over outcomes  in an optimal nonsignaling strategy for $\CGns(f)$.
Then $p$ verifies the nonsignaling condition, $\sum_j p(i,j|x,y)=\sum_j p(i,j|x,y')$ for all $x,y,y',i$, so we can write the marginal distribution for $x$ as $p(i|x)=\sum_j p(i,j|x,y)$, since it does not depend on $y$.
Notice that $p(i|x)=\sum_j p(i,j|x,y)\geq p(i,i|x,y)$ for all $x,y,i$,
so $\min \{p(i|x), p(i|y)\} \geq p(i,i|x,y)$.
\begin{align*}
\CMM(f) = & \min_p \max_{\substack{x, y\in S\\ f(x) = 1- f(y)}} \frac{1}{\sum_{i: x_i \ne y_i} \min\{p(i|x), p(i|y)\}} \\ 
\leq & \min_p \max_{\substack{x, y\in S\\ f(x) = 1- f(y)}} \frac{1}{\sum_{i: x_i \ne y_i} p(i,i|x,y)} 
\end{align*}
Since we have that  $\sum_{i:x_i\neq y_i} p(i,i|x,y)\geq \frac{1}{\CGns(f)}$ for all $x,y$ such that $f(x) = 1- f(y)$, $
\CMM(f) \leq {\CGns(f)}$. 
\end{proof}

To summarize the key idea  of this section, introducing the non-signaling model of Certificate games provides a very clean and simple way to give lower bounds on all of our previous models, including the shared entanglement model. It has several linear formulations, making it very easy to give upper and lower bounds. Finally, it captures an essential feature of zero-communication games, which we think of as the ``non-signaling bottleneck''.
As an added bonus, it allows us to give proofs on the shared entanglement model without having to get into the technicalities of what characterizes quantum games.


\section{Closing the loop}
\label{sec:cgpub-is-cmm}


In this section we will show that all of $\CGpub$, $\CGstar$, $\CGns$, and $\CMM$ are actually asymptotically equal.
\begin{theorem} \label{thm:cgpub-is-fc}
For any (possibly partial) Boolean function $f$,
\[
\CGpub(f)=\Theta(\CGstar(f))=\Theta(\CGns(f))=\Theta(\CMM(f)).
\]
\end{theorem}
The key idea is to apply the \textit{correlated sampling} technique of Holenstein~\cite{Hol09}. We use the following formulation from the Rao--Yehudayoff textbook~\cite[Lemma~7.5]{Rao20}. Here, \emph{total variation distance} between distributions $p$ and $q$ is defined by $\TV(p,q)\coloneqq \frac{1}{2}\sum_i|p(i)-q(i)|$.

\begin{lemma}[Correlated sampling~\cite{Hol09,Rao20}]
\label{lem:corr-sampling}
Suppose Alice is given as input a distribution $p$ over a set $\mathcal{U}$, and Bob is given as input a distribution $q$ over $\mathcal{U}$. There is a protocol using public randomness and no communication with the following guarantees.
\begin{itemize}
\item Alice outputs a value $X$ which is distributed according to $p$.
\item Bob outputs a value $Y$ which is distributed according to $q$.
\item We have\footnote{The original formulation from~\cite[Lemma~7.5]{Rao20} states the incomparable bound $\Pr[X\neq Y]\leq 2\TV(p,q)$ (which is useful when $\TV$ is small). However, it is straightforward to inspect the protocol and see that it also satisfies our lower bound (which is useful when $\TV$ is close to 1).}
$\Pr[X=Y]\geq \frac{1}{2}(1-\TV(p,q))$.
\end{itemize}
\end{lemma}

\begin{proof}[Proof of Theorem~\ref{thm:cgpub-is-fc}]
Given Theorem~\ref{thm:CGns-cAdv}, it remains to prove the inequality $\CGpub(f)\leq O(\CMM(f))$ by designing a protocol for $f$ that wins with probability $\Omega(1/\CMM(f))$. Recall from  Definition~\ref{def:CMM} that
\begin{equation} \label{eq:fc-cmm}
\CMM(f) = \min_p \max_{\substack{x \in f^{-1}(1) \\ y \in f^{-1}(0)}}  \frac{1}{\sum_{i:x_i \neq y_i} \min\{p_x(i),p_y(i)\}}.
\end{equation}
Starting with a distribution $p_x$ over $[n]$, we define distributions $p'_x$ and $p''_x$ over $[n]\times\{0,1\}$ as follows. To define $p'_x$, first sample $i \sim p_x$ and then output $(i,x_i) \sim p'_x$. To define $p''_x$, first sample $i \sim p_x$ and then output~$(i,1-x_i) \sim p''_x$. (Note how $p'_x$ and $p''_x$ are the same except for the flipped bit.) We can now write the denominator in \eqref{eq:fc-cmm} as
\begin{align*}
\sum_{i:x_i \neq y_i} \min\{p_x(i),p_y(i)\}
&=
\sum_{\alpha \in[n]\times\{0,1\}}\min\{p'_x(\alpha),p''_y(\alpha)\} \\
&=
\sum_\alpha{\textstyle \bigg(
\frac{1}{2}(p'_x(\alpha)+p''_y(\alpha)) - \frac{1}{2}|p'_x(\alpha)-p''_y(\alpha)|
\bigg)} \\
&=1 - \TV(p'_x,p''_y).
\end{align*}
The $\CGpub$ protocol for $f$ is now defined from the optimal $p$ in \eqref{eq:fc-cmm} as follows. On input $(x,y)$ the players use the protocol from Lemma~\ref{lem:corr-sampling}
to compute correlated samples $X\coloneqq (i_X,b_X)\sim p'_x$ and~$Y\coloneqq (i_Y,b_Y)\sim p''_y$, respectively, and then they output $(i_X,i_Y)$. The players win the game with probability
\[\textstyle
  \Pr[X=Y] \geq \frac{1}{2}(1-\TV(p'_x,p''_y))
  \geq \Omega(1/\CMM(f)). 
\]
\end{proof}

Combining the above theorem with  the result of~\cite{AKPV}, which states that $\FC(f)=\Theta(\CMM(f))$ for total~$f$, we get the following immediate corollary.

\begin{corollary}
For any total Boolean function $f$, $\CGpub(f)=\Theta(\FC(f))$.
\end{corollary}

\section{Single bit versions}

Aaronson et al.~\cite{ABKRT21} defined single-bit versions of
several formulations  of the  adversary method, and showed that 
they are all equal to the spectral sensitivity $\lambda$.
Informally, single-bit versions of these measures are obtained
by considering the
requirements only with respect to pairs $x,y$ such that 
$x,y\in f^{-1}(0) \times f^{-1}(1)$ and $x$ and $y$ differ only in a single bit.

We denote by $d(x,y)$ the Hamming distance of $x$ and $y$,
and by $x^{(i)}$ the string obtained from $x$ by flipping the value
of the $i$-th bit $x_i$ to its negation.
The single-bit version of $\MM(f)$ was defined in~\cite{ABKRT21}
as follows.
\begin{equation}\label{def:MM1one}
 \MM_{[1]}(f) = \min_{\{w_{x,i}\}} \max_x  \sum_i w_{x,i}
  \\
  \text{ such that }   
  w_{x,i} w_{x^{(i)},i} \geq 1 \quad \forall x,i \text{ with }
  f(x) = 1 - f(x^{(i)})
  \end{equation}
  where $x \in \{0,1\}^n$ and $i \in [n]$.

  Similarly to the proof of  Proposition \ref{prop:sqrt-CG}
  it can be shown that this is equal to the following formulation,
which we  include for comparison with some of our other definitions.
\begin{equation}\label{def:MM1two}
  \MM_{[1]}(f) := \min_p \max_{\stackrel{x,y\in f^{-1}(0) \times f^{-1}(1)}
  { d(x,y)=1}} \frac{1}{\sum_{i:x_i \neq y_i}\sqrt{p_{x,i}p_{y,i}}}
= \min_p \max_{x,i: f(x)=1- f(x^{(i)})} \frac{1}{\sqrt{p_{x,i}p_{x^{(i)},i}}}
\end{equation}
where $p$ is taken over all families of
nonnegative $p_{x,i} \in \mathbb{R}$ such that for all $x$, $\sum_{i\in [n]} p_{x,i} = 1$.

Note that the definition of $\MM_{[1]}(f)$ is well defined for partial functions provided that there exist $x, y \in f^{-1}(0) \times f^{-1}(1)$ such that $x$ and $y$ differ in exactly one bit. This is equivalent to sensitivity, $\s(f)$, being non-zero.  Aaronson et al.~\cite{ABKRT21} proved the following theorem which also hold for these partial functions.

\begin{theorem}(Thm. 28 in \cite{ABKRT21})\label{thm:MMandlambda}
  For any  Boolean function $f$,
  $\lambda(f) = \MM_{[1]}(f)\;.$
\end{theorem}

Here we consider single-bit versions of $\CGpub$ and $\CG$ and show
that they characterize sensitivity and $\lambda^2$, respectively,
up to constant factors.
\begin{definition}[Single-bit private coin certificate game complexity]
\label{def:CG1}
For any (possibly partial) Boolean function $f$ with $\s(f) \neq 0$
\[\CG_{[1]}(f) := 
\min_{p} \max_{\stackrel{x,y\in f^{-1}(0) \times f^{-1}(1)}
  {d(x,y)=1}} 
\frac{1}{\winning(p;x,y)}
= \min_p \max_{x,i: f(x)=1- f(x^{(i)})}
  \frac{1}{p_{x,i}p_{x^{(i)},i}},
  \]
  where $p$ is a collection of nonnegative
variables $\{p_{x,i}\}_{x,i}$ 
that satisfies, for each $x{\in}\zone^n$,
$\sum_{i\in [n]} p_{x,i}=1,$ and 
$\winning(p;x,x^{(i)})$ is the probability
that both players output the unique index $i$
where $x$ and $x^{(i)}$ differ.
(Note that $\winning(p;x,x^{(i)})= p_{x,i}p_{x^{(i)},i}$.)
\end{definition}
Recall that when the players share randomness, a public-coin randomized
strategy is 
a distribution over pairs $(A,B)$ of deterministic strategies. 
We assign a nonnegative variable $p_{A,B}$ to each strategy and
require that they sum to 1.
We say that a pair of strategies $(A,B)$ is correct on
$x,y$ if $A(x)=B(y)=i$ and $x_i\neq y_i$.

\begin{definition}[Single-bit public coin certificate game complexity]
\label{def:CGpub1}
For any (possibly partial) Boolean function $f$ with $\s(f) \neq 0$
\[\CGpub_{[1]}(f) := 
\min_{p} \max_{\stackrel{x,y\in f^{-1}(0) \times f^{-1}(1)}
  {d(x,y)=1}} 
\frac{1}{\winning^{pub}(p;x,y)}
= \min_p \max_{x,i: x\in f^{-1}(0), x^{(i)}\in f^{-1}(1)}
  \frac{1}{\winning^{pub}(p;x,x^{(i)})},
\]
where $p$ is a collection of nonnegative variables $\{p_{A,B}\}_{A,B}$
satisfying $ \sum_{(A,B)}p_{A,B} = 1$
and 
$ \winning^{pub}(p;x,y) = \sum_{(A,B)\text{ correct on } x,y} p_{A,B}$.
\end{definition}

\begin{theorem}\label{thm:cgpub1sens}
  For any (possibly partial) Boolean function $f: \{0,1\}^n \rightarrow \{0,1\}$ with $\s(f) \neq 0$
  $\CGpub_{[1]}(f) = \Theta( \s(f))\;.$
\end{theorem}

\begin{proof} 

  \noindent
  {\bf Upper bound by sensitivity}
  We use the hashing based approach, similarly to the upper bounds
  on $\CGpub$ by $\C$ and $\EC$ (Section \ref{sec:CGpub-C-EC}).

  Let $S$ be a finite set of cardinality $\s(f)$. 
  An element  $z \in S$ is fixed as part of the specification of the
  protocol ($z$ does not depend on the input).

Using shared randomness, the players select a function
  $h: [n] \rightarrow S$ 
  as follows.
    Let $h: [n] \rightarrow S$ be a random
  hash function such that for each $i \in [n]$, $h(i)$ is selected
  independently and uniformly from $S$.

  For $x \in f^{-1}(0)$ 
  let $A_x$ be the set of indices of the sensitive bits of $x$, that is
  $A_x = \{ i \in [n] | f(x) = 1- f(x^{(i)}\}.$
Similarly, for $y \in f^{-1}(1)$ 
let $B_y = \{ i \in [n] | f(y) = 1- f(y^{(i)}\}$.

After selecting $h$ using shared randomness, the players
proceed as follows.
On input $x$, Alice outputs an index $i \in A_x$ such that $h(i) = z$,
and on input $y$, Bob outputs an index $j \in B_y$ such that $h(j) = z$.
If they have several valid choices, or if they have
no valid choices they output arbitrary indices.

Let $i^* \in A_x \cap B_y$, such that $x_{i^*} \neq y_{i^*}$.
Notice that for $x \in f^{-1}(0)$ and $y \in f^{-1}(1)$ such that
$d(x,y)=1$ there is exactly one such index $i^*$.

Next, we estimate what is the probability that both players output $i^*$.
Recall that by the definition of $h$,
the probability that $h(i^*)=z$
is $\frac{1}{|S|} = \frac{1}{\s(f)}$.
Notice that for any $i \in A_x \cup B_y$
the number of elements different from $i$ in $A_x \cup B_y$ is
$\ell = |A_x \cup B_y| -1 \leq 2(|S|-1)$,
since
$\max \{|A_x|, |B_y| \} \leq \s(f) = |S|$.
Thus
for any $z \in S$ and any $i \in A_x \cup B_y$
the probability (over the choice of $h$)
that no element other than $i$
in $A_x \cup B_y$
is mapped to $z$ by $h$ is
$( 1 - \frac{1}{|S|})^{\ell} \geq \frac{1}{e^2}$.

Thus, the players output a correct answer with probability at least
$\frac{1}{e^2} \frac{1}{\s(f)}$.

 {\bf Lower bound by sensitivity}
We will use the
dual formulation of $\CGpub_{[1]}$ obtained  similarly to
Proposition \ref{prop:CGpub-dual}. The only difference is that the distribution
$\mu$ takes nonzero values only on pairs $x, x^{(i)}$ (on pairs with
Hamming distance 1).
Let $x^*$ be an input such that $\s(f;x^*) = \s(f) =:s$, and assume without
loss of generality that $f(x^*)=0$.
Consider the following distribution $\mu$  over input pairs at
Hamming distance 1.
$\mu_{x^*,y} = \frac{1}{s}$ for $y \in f^{-1}(1)$ such that $d(x^*,y) = 1$
and $\mu_{x^*,y}=0$ for every other $y$.
Furthermore, $\mu_{x',y}=0$ for any $y$ and $x' \neq x^*$.
Thus, we only have $s$ input pairs with nonzero measure.

Let $A,B$ be any pair of deterministic strategies for Alice and Bob.
Since $A$ is a deterministic strategy, Alice will output the same
index $i$ for every pair $x^*,y$.
This means that the probability over $\mu$ that the players win
is at most $\frac{1}{s(f;x)} = \frac{1}{s}  = \frac{1}{\s(f)}$ for any pair of deterministic
strategies. 
  \end{proof}

  We define single-bit versions of $\FC$ and $\EC$,
  and show that both are equal to sensitivity.

  \begin{definition}\label{def:singleFC}
  For any (possibly partial) Boolean function $f$ with $\s(f) \neq 0$,
  \begin{itemize}
      \item $\FC_{[1]}(f) = \max_{x\in \zone^n} \, \FC_{[1]}(f,x),$ where
$\FC_{[1]}(f,x) =\min_v \sum_i v_{x,i},$
subject to $ v_{x,i} \geq 1$ for all $i$ such that $f(x)=1- f(x^{(i)})$,
with $v$ a collection of variables $v_{x,i} \geq 0$.
\item   $\EC_{[1]}(f) = \min_{w} \max_{x} \sum_{i\in [n]}  w_{x,i}$,  with $w$ a collection of variables $0\leq w_{x,i}\leq 1$ satisfying
    $w_{x,i}w_{x^{(i)},i} \geq 1$ for all $x,i$ s.t. $f(x) = 1- f(x^{(i)})$.
  \end{itemize}
 
    \end{definition}

  \begin{proposition}
    For any (possibly partial) Boolean function $f: \{0,1\}^n \rightarrow \{0,1\}$ with $\s(f) \neq 0$,
    $\s(f) = \FC_{[1]}(f) = \EC_{[1]}(f)\;.$
  \end{proposition}
  \begin{proof}
    We can think of the values $v_{x,i}$ and $w_{x,i}$
    as weights assigned to the edges of the Boolean hypercube.
    We say that an edge $(x, x^{(i)})$ is sensitive
    (with respect to the function $f$)
    if $f(x) = 1- f(x^{(i)})$.
  First notice, that both definitions require to place weight at least 1
  on each sensitive edge, thus both $\FC_{[1]}(f)$ and  $\EC_{[1]}(f)$
  are at least $s(f)$.
  On the other hand, placing weight 1 on each sensitive edge and weight 0
  on every other edge satisfies the constraints of both definitions,
  thus both $\FC_{[1]}(f)$ and  $\EC_{[1]}(f)$
  are at most $s(f)$.
  \end{proof}

Thus we get the following.

\begin{corollary}
  For any (possibly partial) Boolean function $f: \{0,1\}^n \rightarrow \{0,1\}$ with $\s(f) \neq 0$,
  $\s(f) = \FC_{[1]}(f) = \EC_{[1]}(f) = \Theta(\CGpub_{[1]}(f)) \,.$
\end{corollary}

In case of the single-bit version of private coin certificate game complexity we have: 

\begin{theorem}\label{thm:cg1lambda2}
  For any (possibly partial) Boolean function $f: \{0,1\}^n \rightarrow \{0,1\}$ with $\s(f) \neq 0$, 
  $\CG_{[1]}(f) = \lambda^2 \;.$
\end{theorem}

\begin{proof}
  Comparing the definitions of $\MM_{[1]}$ and $\CG_{[1]}$
  (e.g. the formulation of $MM_{[1]}$ in Equation~\ \eqref{def:MM1two} with
  Definition \ref{def:CG1})
  notice that
  $\sqrt{\CG_{[1]}} = \MM_{[1]}$.
  (One can also restate Definition \ref{def:CG1}
  with weights as in Proposition \ref{prop:sqrt-CG} and
  compare that version with the formulation of $\MM_{[1]}$ in Equation~\
  \eqref{def:MM1one}.)
  The statement then follows from Theorem \ref{thm:MMandlambda}.
\end{proof}


\section{Separations between classical adversary and randomized query complexity for partial functions}

For a partial function~$f$, it is even possible to have an exponential separation between $\R(f)$ and $\CGpub(f)$.

\begin{lemma}
\label{lemma:sep_R_and_CGpub}
There is a partial Boolean function $f\colon\{0,1\}^n\to\{0,1,*\}$ such that
\[
\R(f)\geq \Omega(n)
\qquad\text{but}\qquad
\CGpub(f)\leq O(1).
\]
\end{lemma}
\begin{proof}
Fix any error-correcting code $C\subseteq\{0,1\}^n$ of constant rate and constant relative distance, that is, $|C|\geq 2^{\Omega(n)}$ and for every distinct $x,y\in C$ we have that $x$ and $y$ differ in~$\Omega(n)$ coordinates. (For example, we can use a Justesen code or a random code.) Note that any partial function~$f\colon\{0,1\}^n\to\{0,1,*\}$ with domain $f^{-1}(\{0,1\})=C$ has $\CGpub(f)=O(1)$. Indeed, both players simply output a uniform random coordinate.

Finally, we show that if $f\colon C\to\{0,1\}$ is chosen uniformly at random, then $\R(f)\geq \Omega(n)$ with high probability. Indeed, let $\mathcal{T}$ be a randomized decision tree of depth $d$ computing~$f$. That is, $\mathcal{T}$ is a probability distribution over deterministic depth-$d$ decision trees. By standard randomness sparsification techniques (e.g., Newman's theorem~\cite{Newman91}) we may assume that~$\mathcal{T}$ is a uniform distribution over $n^{O(1)}$ many deterministic trees. Each tree $T\in\supp(\mathcal{T})$ can be encoded as a $O(\binom{n}{d})$-bit string (for each leaf of $T$, encode the root-to-leaf path). Hence~$\mathcal{T}$ can be encoded as a string of length $n^{O(1)}\cdot O(\binom{n}{d})$. However, a random function $f$ needs~$\Omega(|C|)=2^{\Omega(n)}$ bits to describe it, with high probability. It follows that $d\geq\Omega(n)$.
\end{proof}

An example of an explicit function to separate $\R$ and $\CGpub$ is the approximate index function constructed by Ben-David and Blais~\cite{bDB20}. The proof of this exponential separation is given in Appendix~\ref{sec:CGpub-ApIndex}.

We know that $\CGpub$ and $\FC$ cannot be asymptotically different for a total function. Though, there is a partial function, $\GTH$ (defined by Ambainis et al.~\cite{AKPV}, for definition see Appendix~\ref{sec:FunctionExample}), for which $\FC$ is constant~\cite{AKPV} but $\CGpub$ is $\Theta(n)$ (follows from Theorem~\ref{thm:CGns-cAdv} and $\CMM(\GTH) = \Theta(n)$~\cite{AKPV}).

\section{Relations and separations between measures}


Understanding the relationships between the various models of certificate game complexity would help us understand the power of 
shared randomness over private randomness and the power of quantum shared entanglement over shared randomness in the context of 
certificate games.

The first natural separation to consider is the relation between $\CG$ and $\CGpub$.

\begin{corollary}\label{cor:CG_CGns}
For any total Boolean function $f$,
$\CGpub(f) \leq \CG(f)\leq O(\CGpub(f)^{3})$.
\end{corollary}
\begin{proof} 
The first inequality follows from the definitions and the second inequality follows from 
\[\CG \leq O(\EC(f)^2) \leq O(\FC^2(f)\cdot\s(f)) \leq O(\CGns(f)^2\cdot\s(f)) \leq O(\CGns(f)^3),\]
where the first inequality follows from Theorem~\ref{thm:CG-bounds}, the second was proved in \cite{JKKLSSV20} and the last two inequality follows from Theorem~\ref{thm:CGns-cAdv}.
\end{proof} 

Note that the above corollary follows from $\CG\leq O(\CGpub(f)^2 \cdot \s(f))$ (Theorem~\ref{thm:CG-bounds})

\vspace{.5em} 

\Open{Is there a $c< 3$ such that $\CG(f)\leq O(\CGpub(f)^{c})$?}

\vspace{.5em}  

There are total functions~$f$, for which $\CG(f) = \Theta(\CGpub(f)^2)$.  One such example is the $\Tribes$ function. For $\Tribes_{\sqrt{n}, \sqrt{n}} := \OR_{\sqrt{n}}\circ \AND_{\sqrt{n}}$,   we have $\CGpub(\Tribes_{\sqrt{n}, \sqrt{n}})
    = \Theta(\sqrt{n}),$ and $\CG(\Tribes_{\sqrt{n}, \sqrt{n}}) = \Theta(n)$.

\begin{proof} 
Firstly, note that since the functions $\OR$ and $\AND$ has full sensitivity, from Theorem~\ref{thm:CGns-cAdv} we have $\CGpub(\OR_{\sqrt{n}}) = \CGns(\OR_{\sqrt{n}}) = \Theta(\sqrt{n})$. 

Also, the sensitivity of $\Tribes_{\sqrt{n}, \sqrt{n}}$ is $\Theta(\sqrt{n})$
and hence from Theorem~\ref{thm:CGns-cAdv} we have that the $\CGpub$ and $\CGns$ of $\Tribes_{\sqrt{n}, \sqrt{n}}$ is $\Omega(\sqrt{n}$. The upper bound follows Theorem~\ref{thm:CGpub-EC} and the fact that the certificate complexity of $\Tribes_{\sqrt{n}, \sqrt{n}}$ is at most $\sqrt{n}$. But we have also provided a separate proof (Theorem~\ref{thm:tribes}) for the upper bound of the  $\Tribes_{\sqrt{n}, \sqrt{n}}$. Thus we have 
$\CGns(\Tribes_{\sqrt{n}, \sqrt{n}})= \CGpub(\Tribes_{\sqrt{n}, \sqrt{n}})
= \Theta(\sqrt{n}).$ 

Now for the certificate game complexity with shared randomness, from Theorem~\ref{thm:CG-bounds} we know that $\CG$ is bounded below by $\R_0$ 
and we know that $\R_0(\Tribes_{\sqrt{n}, \sqrt{n}}) = \Theta(n)$. On the other, 
Theorem~\ref{thm:CG-bounds}  also helps us to upper bound $\CG$ by $(\EC)^2$, and since $\EC(\Tribes_{\sqrt{n}, \sqrt{n}}) \leq \C(\Tribes_{\sqrt{n}, \sqrt{n}}) \leq \sqrt{n}$, so  we have that $\CG(\Tribes_{\sqrt{n}, \sqrt{n}}) = \Theta(n)$.
\end{proof}

The $\Tribes$ function also demonstrates a quadratic separation between $\CGpub$ and $\R$ while showing that the $\CGpub$ measure does not compose.
Also note that any function with $\lambda(f) = n$, like the parity function, demonstrates a quadratic gap between $\CG$ and $\CGpub$. This is because $\CG(f) = \Omega((\pAdv(f))^2)$, from Theorem~\ref{thm:CG-bounds}, and $\pAdv(f) = \Omega(\lambda(f))$. Thus for any such functions $\CG$ is $\Theta(n^2)$ while $\CGpub$ is $\Theta(n)$.

    One possible attempt to tighten the relation between $\CG$ and $\CGpub$ is to
modify the inequality $\CG = O(\EC^2)$. We observe that the bound $\CG(f) \leq O(\EC(f)^2)$ is indeed tight (Parity function). Though, we could possibly find a better relation between $\EC$ and $\CGpub$. Since $\EC(f) \leq C(f)$, we know that $\EC(F) = O(\CGpub(f)^2)$.

\vspace{.5em} 

\Open{What is the minimum $c$ such that $\EC(f)\leq O(\CGpub(f)^{c})$?}

\vspace{.5em}

Note that if $\CGpub = \Theta(\EC)$ we have $R_0\leq O((\CGpub)^2) = O(\FC^2)$, which is a well-known open problem~\cite{JKKLSSV20}. Theorem~\ref{thm:CG-bounds} shows that $\R_0(f) \leq O(\CG(f))$ (though parity shows that these two measures need not be equal). A quadratic bound on $\CG$ with respect to $\CGpub$ will also settle the well-known open problem mentioned above.

~\\
Another possible direction to tighten the relation between $\CG$ and $\CGpub$ is to improve the inequality $\CG = \Omega(\pAdv^2)$. 
\vspace{.5em} 

\Open{What is the biggest separation between $\CG(f)$ and $\pAdv(f)$?}

\vspace{.5em}

To the best of our knowledge, the best upper bound on $\CG$ for total functions in terms of $\pAdv$ is \[\CG \leq O(\FC^2\s) \leq O(\pAdv^{6}),\] 
where the final inequality follows from the fact that $\FC \leq \pAdv^2$\cite{ABK21} and $\s \leq \lambda^2\leq \pAdv^2$.
The biggest separation between $\CG$ and $\pAdv$ in this direction is cubic: there is a total Boolean function $f$ for which $\CG(f) \leq \Omega(\EC(f)^{3/2})$. In \cite{ABBLSS} they constructed a ``pointer function" $g$, for which $\R_0(g) = \Omega(\Q(g)^3)$.  We observe that, for the pointer function, \[\CG(g) \geq \Omega(\R_0(g)) \geq \Omega(\Q(g)^3) \geq \Omega(\pAdv(g)^3),\]
where the first inequality follows from Theorem~\ref{thm:CG-bounds} and the other inequalities follows from earlier known results. This separation can also be achieved by the cheat sheet version of $k-$Forrelation function that gives a cubic separation between $\Q$ and $\R$ \cite{BS21, ABK16}.

However (from Theorem~\ref{thm:CG-bounds}) for any total Boolean function $f$,  $(\pAdv(f))^2 \leq O(\CG(f))$ and this inequality is in fact tight (for any total function with full spectral sensitivity, such as parity). In fact, the two quantities, $\CG$ and $(\pAdv)^2$, are asymptotically identical for symmetric functions~\cite{MNP21}.

~\\
Another upper bound on $\CG$ that we observe is $\CG \leq \C^0\cdot \C^1$. While for some functions (like the $\Tribes$ function) the two quantities  
$\CG$ and $\C^0\cdot \C^1$ are asymptotically equal we note that there are functions for which $\CG$ is significantly less than $\C^0\cdot \C^1$. 
\begin{corollary}[\cite{JKKLSSV20, GSS16}]
There exists a total function $f:\{0,1\}^N \to \{0,1\}$ for which, 
$\C^0(f) = \Theta(N)$, $\C^1(f) = \Theta(\sqrt{N})$ and $\EC(f) = \Theta(\sqrt{N})$.
Thus $C^0(f)\cdot C^1(f) = \Omega(\CG(f)^{3/2})$.
\end{corollary}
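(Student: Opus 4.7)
The natural strategy is to use the function $\GSS_1$ referenced in the caption of Figure~\ref{fig:measures} (item (1)), which is a known construction with $\C^1 = \Theta(m)$ and $\C^0 = \Theta(m^2)$ on $m$-bit instances. Rescaling by setting $N = m^2$ immediately yields $\C^0(\GSS_1) = \Theta(N)$ and $\C^1(\GSS_1) = \Theta(\sqrt{N})$, so the two certificate-complexity requirements are handled for free. The only nontrivial task is to establish the third requirement, $\EC(\GSS_1) = \Theta(\sqrt{N})$.

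For the lower bound $\EC = \Omega(\sqrt{N})$, the plan is to invoke the chain $\EC(f) \geq \FC(f) \geq \s(f)$ together with a direct structural check that $\GSS_1$ has sensitivity (or at least fractional block sensitivity) $\Omega(\sqrt{N})$ on some input; typically this is witnessed by a canonical 1-input possessing $\Theta(\sqrt{N})$ disjoint sensitive blocks. The upper bound $\EC = O(\sqrt{N})$ is the delicate part, since the trivial inequality $\EC \leq \C$ gives only $O(N)$. Two routes are available: either invoke the relation $\EC \leq O(\FC \cdot \sqrt{\s})$ from \cite{JKKLSSV20} after separately bounding $\FC$ and $\s$ so that their product is $O(\sqrt{N})$, or construct the weights $w_{z,i} \leq 1$ directly, placing mass $\Theta(1/\sqrt{N})$ on the $\Theta(\sqrt{N})$ ``structurally important'' coordinates of each input and verifying $\sum_{i : x_i \neq y_i} w_{x,i} w_{y,i} \geq 1$ using the combinatorial fact that any disagreeing pair of inputs to $\GSS_1$ is forced to overlap on one of these coordinates.

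Once the three bounds are in place, the conclusion is immediate: $\C^0(f) \cdot \C^1(f) = \Theta(N) \cdot \Theta(\sqrt{N}) = \Theta(N^{3/2})$, while by Theorem~\ref{thm:CG-bounds} we have $\CG(f) \leq O(\EC(f)^2) = O(N)$, so $\CG(f)^{3/2} = O(N^{3/2})$, giving the claimed $\C^0(f) \cdot \C^1(f) = \Omega(\CG(f)^{3/2})$.

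The main obstacle is unambiguously the $\EC$ upper bound. The tension is that the standard inequality $\EC \leq \C$ is off by a factor of $\sqrt N$, so one must genuinely exploit that \emph{fractional} weighting can be much cheaper than a minimal deterministic certificate. The delicate balance — constructing weights of total mass $O(\sqrt N)$ per input that nevertheless satisfy the quadratic intersection constraint against every opposite-valued input, including the worst-case 0-inputs of deterministic certificate complexity $\Theta(N)$ — is precisely why $\GSS_1$ (and not, say, a threshold function, whose sensitivity is already $\Omega(N)$) is the right witness.
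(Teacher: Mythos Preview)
The paper's own proof is a one-line citation: the function with $\C^0=\Theta(N)$, $\C^1=\Theta(\sqrt N)$ and $\EC=\Theta(\sqrt N)$ is taken directly from \cite[Theorem~11]{JKKLSSV20}, and then $\CG\le O(\EC^2)=O(N)$ from Theorem~\ref{thm:CG-bounds} finishes exactly as you say. Your identification of $\GSS_1$ is almost certainly the same function (the paper's Table~\ref{Table:total} records $\EC(\GSS_1)=\Theta(n)$ with a pointer to \cite{JKKLSSV20}), and your final deduction is identical to the paper's.

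Where your proposal has a genuine gap is the $\EC$ \emph{upper} bound, and both of your suggested routes fail quantitatively for $\GSS_1$. For the first route, the known values are $\FC(\GSS_1)=\Theta(\sqrt N)$ and $\s(\GSS_1)=\Theta(\sqrt N)$, so $\FC\cdot\sqrt{\s}=\Theta(N^{3/4})$, not $O(\sqrt N)$; the inequality $\EC\le O(\FC\sqrt{\s})$ is simply too weak here. For the second route, weights of size $\Theta(1/\sqrt N)$ on $\Theta(\sqrt N)$ coordinates give $\sum_i w_{z,i}=\Theta(1)$ (which would mean $\EC=O(1)$, contradicting your own lower bound) and, worse, $\sum_{i:x_i\neq y_i} w_{x,i}w_{y,i}\le \Theta(\sqrt N)\cdot\Theta(1/N)=\Theta(1/\sqrt N)\ll 1$, so the feasibility constraint is violated. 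The actual construction in \cite{JKKLSSV20} is more delicate: one places weights of order $1$ (not $1/\sqrt N$) on a carefully chosen set of $\Theta(\sqrt N)$ coordinates per input, and uses the specific combinatorics of $\GSS_1$ to guarantee that every $0$/$1$ pair meets on at least one such coordinate. That argument is the real content of the corollary, and it cannot be replaced by either of the generic bounds you propose.
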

\begin{proof} 
In~\cite[Theorem 11]{JKKLSSV20} they constructed a total function $f:\{0,1\}^N \to \{0,1\}$ such that $\C^0(f) = \Theta(N)$ and $\C^1(f) = \Theta(\sqrt{N})$ and $\EC(f) = \Theta(\sqrt{N})$. Thus, from Theorem~\ref{thm:CG-bounds} we have
$\CG(f) = \Theta(\EC(f))^2 \leq \Theta(N)$. Thus we have the corollary.
\end{proof} 

The separations between single bit version and general version of certificate games is pretty interesting too. One of the enticing open problems in this area of complexity theory is the sensitivity-block sensitivity conjecture. The best gap between $\bs(f)$ and $\s(f)$ is quadratic: that is there exists a function $f$ such that $\bs(f) = \Theta(\s(f)^2)$. The conjecture is that this is indeed tight, that is, for any Boolean function $f$, $\bs(f) = O(\s(f)^2)$. In the seminal work of \cite{Huang} the degree of a Boolean function was bounded by the square of sensitivity, and this is tight for Boolean functions. Since the degree of a Boolean function is quadratically related to the block sensitivity, we have $\bs(f) \leq O(\s(f)^4$. Unfortunately, this approach via degree will not be able to give any tighter bound on block sensitivity in terms of sensitivity. 

Estimating certificate game complexity may be a possible way to prove a tighter 
bound on block sensitivity in terms of sensitivity.  Given the result in Theorem~\ref{thm:cgpub1sens}, designing a strategy for $\CGpub$ using 
$\CGpub_{[1]}$ may help us solve the sensitivity-block sensitivity conjecture. 

\vspace{.5em} 

\Open{What is the smallest $c$ such that, for any Boolean function $f$, $\CGpub(f) = O(\CGpub_{[1]}(f)^c)$?}

\vspace{.5em}

Note that $\CGpub(f) = O(\CGpub_{[1]}(f)^2)$ is equivalent to $\bs(f) \leq O(\s(f)^2)$ (the separation between $\FC$ and $\s$ is same as $\bs$ and $\s$ by~\cite{KT16}). It may seem too much to expect that the single-bit version
of the game can help get upper bounds on the general public coin setting, but thanks to Huang's breakthrough result \cite{Huang}, we already know that 
$\CGpub(f) = O(\CGpub_{[1]}(f)^4)$ for any total Boolean function $f$.


\subsubsection*{Acknowledgements}

We thank Jérémie Roland for helpful discussions, Chandrima Kayal for pointing out a function which separates $\C$ and $\Q$, and  the anonymous referees  for helpful comments.
RM would like to thank IRIF, Paris for hosting him where part of the work was done. AS has received funding from the European Union’s Horizon 2020 research and innovation program under the Marie Sklodowska-Curie grant agreement No 754362. Additional support comes from the French ANR projects ANR-18-CE47-0010 (QUDATA) and ANR-21-CE48-0023 (FLITTLA) and the QOPT project funded by the European Union's Horizon 2020 Research and Innovation Programme under Grant Agreement no. 731473 and 101017733.
MG is supported by the Swiss State Secretariat for Education, Research and Innovation (SERI) under contract number MB22.00026.
 
\addcontentsline{toc}{section}{Bibliography}



\begin{appendix}

\section{Approximate Index: Exponential gap between \texorpdfstring{$\R$}{R} and \texorpdfstring{$\CGpub$}{CGpub} for a \textit{partial Boolean} function} 
\label{sec:CGpub-ApIndex}

We saw that $\CGpub$ of a Boolean function lies between its randomized query complexity and randomized certificate complexity; the same is true for $\noisyR$.

The measure $\noisyR$ was introduced in \cite{bDB20} (please refer to \cite{bDB20} for the formal definition) to study how randomised query complexity $\R$ behaves under composition and it was shown that $\R(f \circ g) = \Omega(\noisyR(f)\R(g))$. As it was also shown that almost all lower bounds (except $\Q$) on $\R$ are also lower bounds on $\noisyR$, it is interesting to see whether $\CGpub$ is also a lower bound on $\noisyR$. 

\vspace{.5em} 

\Open{Is it the case that for all $f$, $\CGpub(f)\leq O(\noisyR(f))$?}

\vspace{.5em} 

Ben-David and Blais~\cite{bDB20} constructed the approximate index function, which is the only function known where $\noisyR$ and $\R$ are different. But the approximate index function that they construct is not a total Boolean function but a partial Boolean function. 

Let $\ApIndex_k$ be the approximate index function where the input has an address part, say~$a$, of $k$ bits and a table with $2^k$ bits. The function is defined on inputs where all positions of the table labelled by strings within $\frac{k}{2} - \sqrt{k \log{k}}$ Hamming distance from $a$ have the same value (either 0 or 1), and all positions that are farther away from $a$ have 2 in them, i.e. 

\begin{definition}
\label{def:apIndex}
$\ApIndex_k :\{ 0,1\}^k \times \{ 0,1,2\}^{2^k} \rightarrow \{ 0,1,\ast \} $ is defined as
 \begin{align*}
 \ApIndex_k(a,x) =\begin{cases}
    x_a  &\text{if } x_b = x_a \in  \{ 0, 1\} \text{ for all } b \text{ that satisfy } |b-a|  \leq  \frac{k}{2} - \sqrt{k \log{k}}\\
   & \text{and }x_b = 2 \text{ for all other $b$,} \\
    \ast &\text{otherwise.}
\end{cases}
 \end{align*}
 \end{definition}
Note that, even though the range of $\ApIndex_k$ (as defined above) is non-Boolean, it can be converted into a Boolean function by encoding the input appropriately. This will only affect the lower/upper bounds by a factor of at most two. 
 
Ben-David and Blais showed that $\noisyR(\ApIndex_k) = O (\log k)$, and $\R(\ApIndex) = \Theta (\sqrt{ k \log k})$.
As an indication that $\CGpub$ could be a lower bound on $\noisyR$, we show the following theorem. 

\begin{theorem} \label{thm:CGpub-ApIndex}
The public coin certificate game complexity of $\ApIndex$ on $n = k + 2^k$ bits is
$\CGpub(\ApIndex_k) = O(\log{k})$.
\end{theorem}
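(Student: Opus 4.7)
The plan is to instantiate the hashing framework of Section~\ref{sec:framework} with a strategy that splits into two sub-strategies chosen via a shared coin. Let Alice's input be $(a,x)$ with $\ApIndex_k(a,x)=0$, Bob's input be $(a',x')$ with $\ApIndex_k(a',x')=1$, and let $d$ denote the Hamming distance between $a$ and $a'$ (unknown to the players). The key observation is that depending on whether $d$ is large or small, a different set of differing coordinates is easy to expose.

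First I would handle the case of large $d$ by a simple address sub-strategy: using shared randomness, both players sample $i\in [k]$ uniformly and output the $i$-th address index. They succeed iff $a_i\neq a'_i$, which happens with probability $d/k$; whenever $d \geq k/\log k$, this gives success probability $\geq 1/\log k$.

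Next I would handle small $d$ by a table sub-strategy that instantiates the hashing framework. Set $r = k/2 - \sqrt{k\log k}$, let $T=\{0,1\}^k$ index the table, let $A_x = B(a,r)$ (the positions where $x_b=0$) and $B_y=B(a',r)$ (the positions where $x'_b=1$). Every element of $A_x\cap B_y$ is automatically a valid differ-index (Alice has $0$, Bob has $1$). Applying the framework with $|S|=|A_x|$, a routine calculation (Markov's inequality to control the preimage sizes $|h^{-1}(z)\cap A_x|$ and $|h^{-1}(z)\cap B_y|$, each of expected size $O(1)$) gives a success probability of $\Omega(|A_x\cap B_y|/|A_x|)$. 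The key technical claim, advertised in the techniques overview, is that for $d\leq k/\log k$ the intersection satisfies $|A_x\cap B_y|\geq \Omega(|A_x|/\sqrt{\log k})$, so the table sub-strategy wins with probability $\Omega(1/\sqrt{\log k})$. Since the players cannot tell which regime they are in, they use a shared coin to pick one sub-strategy with probability $1/2$; in either regime the winning probability is $\Omega(1/\log k)$, giving $\CGpub(\ApIndex_k) = O(\log k)$.

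The main obstacle is the Hamming ball intersection bound. I would attack it by focusing on the outermost $\sqrt{k}$ layers of $B(a,r)$, which together contain a constant fraction of the ball's volume, and showing that for each fixed layer the intersection with $B(a',r)$ captures an $\Omega(1/\sqrt{\log k})$ fraction of its elements. For a fixed layer of radius $r'$, this fraction can be expressed as a tail of a hypergeometric distribution $P_{k,m,r'}$ summed from $m/2$ to $m$ with $m = d \leq k/\log k$. I would exploit the symmetry of the hypergeometric distribution around $m/2$ (Lemma~\ref{lem:symmetric}) to reduce the question to a concentration statement of width $\sqrt{m}$ around the expectation (which sits at distance $O(\sqrt{m})$ from $m/2$), then combine the standard hypergeometric concentration bound of width $\sqrt{r'}$ with the monotonicity property (Lemma~\ref{lem:monotonicity}) to tighten the bound to the finer scale $\sqrt{m}$ required here.
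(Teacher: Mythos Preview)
Your proposal is correct and follows essentially the same two-case strategy (address vs.\ table) and the same intersection-of-Hamming-balls analysis via hypergeometric tails as the paper. The only notable difference is your instantiation of the hashing framework: you take $|S|=|A_x|$ and use Markov to keep preimages $O(1)$, which yields success probability $\Omega(|A_x\cap B_y|/|A_x|)=\Omega(1/\sqrt{\log k})$ for the table sub-strategy; the paper instead packages this step as Lemma~\ref{lemma:hashing} with $|S|=\lfloor |A_x|/(2t)\rfloor$ and Chernoff, obtaining only $\Omega(1/t^2)=\Omega(1/\log k)$ there. Your variant is in fact sharper and slightly simpler, but since the address sub-strategy already bottlenecks at $\Omega(1/\log k)$, the final bound $\CGpub(\ApIndex_k)=O(\log k)$ is the same either way.
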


\begin{proof}[Sketch of Proof of Theorem~\ref{thm:CGpub-ApIndex}]

We can use the hashing framework to show an exponential separation between $\R$ and $\CGpub$ for Approximate Index, a partial function.

A central ingredient to the proof of this theorem is the following lemma that captures yet another application of the hashing based framework introduced in Section~\ref{sec:framework}
(we state it in a more general form).

\begin{lemma}\label{lemma:hashing}
  Let $L$ be an integer.
  Assume that for every $x \in f^{-1}(0)$ and $y \in f^{-1}(1)$
  there are sets $A_x$ depending only on $x$, and $B_y$ depending only on $y$, of size $L$, 
    such that 
  any element of $A_x \cap B_y$ is a correct output
  on the input pair $(x,y)$, i.e. for any
  $i \in A_x \cap B_y$, we have $x_i \neq y_i$.
  If for any $x \in f^{-1}(0)$ and $y \in f^{-1}(1)$,
  $L=|A_x|=|B_y| \leq t |A_x \cap B_y| \;,$
  then $\CGpub(f) \leq O(t^2)$.
\end{lemma}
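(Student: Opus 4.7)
The plan is to instantiate the hashing framework from Section~\ref{sec:framework} with an appropriately chosen target set $S$, and then lower-bound the winning probability by a direct second-moment style calculation.

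Concretely, let $|S|=\lceil L/t\rceil$. Using shared randomness the players sample a uniformly random hash function $h:[n]\to S$ and an independent uniformly random $z\in S$. On input $x$, Alice outputs a uniformly random element of $h^{-1}(z)\cap A_x$ (or an arbitrary index if this set is empty); Bob does the same with $B_y$. By the hypothesis, every element of $A_x\cap B_y$ is a correct output, so it suffices to lower bound the probability that both players output a common element of $A_x\cap B_y$.

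Fix any $i^*\in A_x\cap B_y$ and write $N_A=|h^{-1}(h(i^*))\cap A_x|$ and $N_B=|h^{-1}(h(i^*))\cap B_y|$. Then
$$
\Pr[\text{both players output }i^*]\ =\ \frac{1}{|S|}\cdot \E_h\!\left[\frac{1}{N_A\,N_B}\right].
$$
By Jensen's inequality applied to the convex function $x\mapsto 1/x$, this is at least $\frac{1}{|S|\cdot \E_h[N_A N_B]}$. To bound $\E_h[N_A N_B]$, the key step is to decompose $A_x$ and $B_y$ along the partition $A_x\cap B_y$, $A_x\setminus B_y$, $B_y\setminus A_x$, so that the contributions coming from disjoint parts are independent. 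Writing $p=1/|S|\le t/L$ and using $|A_x|=|B_y|=L$, each of $\E[N_A]$ and $\E[N_B]$ is at most $1+(L-1)p\le 1+t$, and the cross terms involving the three binomial counts are each bounded by $O(t^2)$, giving $\E_h[N_A N_B]=O(t^2)$. Summing the per-$i^*$ bound over all $i^*\in A_x\cap B_y$, and using the hypothesis $|A_x\cap B_y|\ge L/t$, the total winning probability is at least
$$
|A_x\cap B_y|\cdot \frac{1}{|S|}\cdot \frac{1}{O(t^2)}\ \ge\ \frac{L/t}{\lceil L/t\rceil}\cdot \frac{1}{O(t^2)}\ =\ \Omega\!\left(\frac{1}{t^2}\right),
$$
which yields $\CGpub(f)\le O(t^2)$.

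The main technical obstacle is the bound $\E_h[N_A N_B]=O(t^2)$: $N_A$ and $N_B$ are correlated through the hash values of the shared elements in $A_x\cap B_y$, so one cannot simply multiply marginal expectations. Handling this cleanly requires splitting into the three disjoint parts above and exploiting that the hash values on disjoint subsets are independent, after which the calculation is routine. The only other place where care is needed is when $|A_x\cap B_y|=L/t$ is not an integer, but ceiling/floor issues are absorbed into the $O(\cdot)$ constant.
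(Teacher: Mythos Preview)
Your argument is correct, and it takes a genuinely different route from the paper's proof. The paper also chooses $|S|$ proportional to $L/t$ (they take $|S|=\lfloor L/(2t)\rfloor$) and the same randomized output rule, but instead of a second-moment computation they use a ``good event'' argument: they show via a direct calculation that $h^{-1}(z)\cap A_x\cap B_y$ is nonempty with probability at least $1-e^{-2}$, and via Chernoff bounds that $|h^{-1}(z)\cap A_x|$ and $|h^{-1}(z)\cap B_y|$ each exceed $3t$ with probability at most $e^{-\Omega(t)}$. Conditioning on all three good events (which occur simultaneously with probability $>1/2$), they pick a single $i^*\in h^{-1}(z)\cap A_x\cap B_y$ and conclude both players select it with probability at least $1/(3t)^2$.

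Your Jensen-plus-second-moment approach avoids concentration inequalities entirely, at the cost of having to track the correlation between $N_A$ and $N_B$ through the shared part $A_x\cap B_y$; the decomposition you sketch handles this cleanly (indeed $\E[N_A N_B]=\E[N_A]\E[N_B]+\mathrm{Var}(X_C)\le (1+t)^2+t$ where $X_C$ counts collisions from $(A_x\cap B_y)\setminus\{i^*\}$). The paper's approach is perhaps more transparent about why the protocol works, while yours is more elementary and gives the bound without invoking any tail inequality.
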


\begin{proof} 
  Let $A_x$ and $B_y$ be sets of size $L$ guaranteed by the statement of
  the lemma.
  We can assume that for $t$ in the statement of the lemma
  $20 \leq t \leq 0.1 L$ holds,
  since $O(L^2)$ is a trivial upper bound on $\CGpub(f)$.
    Let $S$ be a finite set with $|S| = \lfloor \frac{L}{2t} \rfloor > 1$.
    Let $z$ be a fixed element of $S$ (e.g. the first element of $S$)
    given as part of the specification of the protocol.
    (Note that $z$ could also be selected using shared randomness, but
    this is not necessary.)
    
Let $T \subseteq [n]$ be a set of possible outputs that contains
  the sets $A_x$ and $B_y$ for every $x \in f^{-1}(0)$ and $y \in f^{-1}(1)$.
  Let $h: T \rightarrow S$ be a random
  hash function such that for each $i \in T$, $h(i)$ is selected
  independently and uniformly from $S$.
  The players select such $h$ using shared randomness.
  Then, on input $x$, Alice outputs a uniformly random element
  from  $h^{-1}(z) \cap A_x$(if this set is empty, she outputs an arbitrary element).
  On input $y$,  Bob outputs a uniformly random element of
  $h^{-1}(z) \cap B_y$
(if this set is empty, he outputs an arbitrary element).
 
\begin{claim}
  For any $x \in f^{-1}(0)$ and $y \in f^{-1}(1)$, 
  $$Pr[h^{-1}(z) \cap A_x \cap B_y = \emptyset] \leq \frac{1}{e^2}$$
  where the probability is over the choice of the hash function $h$.
\end{claim}

\begin{proof}
  Notice that our setting  implies that for
  any $x \in f^{-1}(0)$ and $y \in f^{-1}(1)$,
  $|A_x \cap B_y| \geq \frac{L}{t} \geq 2|S|$. Thus,  
  $Pr[h^{-1}(z) \cap A_x \cap B_y = \emptyset] =
  (1 - \frac{1}{|S|})^{|A_x \cap B_y|} \leq (1 - \frac{1}{|S|})^{2|S|} \leq 
  \frac{1}{e^2}$.
\end{proof}

\begin{claim}
  For any $x \in f^{-1}(0)$ and $y \in f^{-1}(1)$, 
  $Pr[|h^{-1}(z) \cap A_x| > 3t ] \leq \epsilon$ and
  $Pr[|h^{-1}(z) \cap B_y| > 3t ] \leq \epsilon$,
  where $\epsilon = e^{-0.1 t}$.
\end{claim}
\begin{proof} 
  Notice that the expected size (over the choice of the hash function $h$)
  of the intersection of the pre-image of $z$ with the set $A_x$ is
  $E[|h^{-1}(z) \cap A_x|] = \frac{|A_x|}{|S|} \leq 2.1 t $.
  The claim follows by using the following form of the
  Chernoff bound \cite{MU}:
  $$Pr[X \geq (1 + \delta) \mu] \leq e^{- \frac{\delta^2 \mu}{2 + \delta}}$$
  where X is a sum of independent random variables with values from $\{0,1\}$
  and $\mu = E[X]$.
  The proof with respect to $B_y$ is identical.
\end{proof}

Using the above two claims, we obtain that with probability at least
$1 - e^{-2} - 2e^{-0.1 t} > \frac{1}{2}$ the following conditions hold:

(i) $h^{-1}(z) \cap A_x \cap B_y \neq \emptyset $ and

(ii) $h^{-1}(z)  \cap A_x$ and $h^{-1}(z) \cap B_y$ are both nonempty
and have size at most $3t$.

Let $i^* \in h^{-1}(z) \cap A_x \cap B_y$.
Then $i^*$ is a correct output, and the probability that
both Alice and Bob select $i^*$ as their output is at least
$\frac{1}{9t^2}$.
Thus on any input $x \in f^{-1}(0)$ and $y \in f^{-1}(1)$,
the players output a correct answer with probability at least
$\frac{1}{18 t^2}$.

\end{proof}

Before we see how the hashing lemma helps prove Theorem~\ref{thm:CGpub-ApIndex}, we define the following notation. 
 
The Hamming Sphere of radius $r$ centred at a $k$-bit string $a$, denoted as $\HSph_a(r)$, contains all strings $z \in \{0,1\}^k$ that are at distance exactly $r$ from $a$. Similarly the Hamming Ball of radius $r$ centred at $a$, denoted as $\HBall_a(r)$, contains all strings $z \in \{0,1\}^k$ such that $d(a,z)\leq r$. For the $\ApIndex_k$ function, a valid input has the function value in all positions in the table indexed by strings in $\HBall_a\left(\frac{k}{2} - \sqrt{k \log{k}}\right)$ where $a$ is the address part.

\begin{proof} [Proof of Theorem~\ref{sec:CGpub-ApIndex}]

We consider two different strategies for different kinds of inputs: the first for when the Hamming distance between the address parts $a, b$ of the inputs is large, i.e. $d(a,b) \geq k/\log{k}$ and the second when the distance is smaller. For the first case, Alice and Bob use public randomness to sample an index $i \in [k]$ and this bit differentiates $a$ from $b$ with probability $\geq 1/\log{k}$. In the other case, we first show that $\Omega(1/\sqrt{\log{k}})$ fraction of the Hamming Ball $\HBall_a\left(\frac{k}{2} - \sqrt{k \log{k}}\right)$ around $a$ (or $b$) intersects that around $b$ (or $a$). We then use the hashing lemma (Lemma~\ref{lemma:hashing}) for Alice and Bob to pick an index in the intersection with probability $\Omega(1/\log{k})$.

\paragraph*{Public coin strategy for $\ApIndex$:}
Let us suppose that Alice has an input $(a,x) \in f^{-1}(1)$ and Bob has $(b,y) \in f^{-1}(0)$. We will consider two separate strategies for Alice and Bob to win the public coin Certificate Game with probability $\Omega(\frac{1}{\log{k}})$. They choose to play either strategy with probability $1/2$. 
\begin{itemize}
\item {\bf Strategy 1: }
\begin{tcolorbox}
Alice and Bob sample a random element $z \in [k]$ using public coins and output the element $z$.
\end{tcolorbox}
This strategy  works for inputs for which the Hamming distance between the address parts $a$ and $b$ is large, i.e. $d(a,b) \geq \frac{k}{\log{k}}$. The probability that this strategy succeeds, 
$\Pr[ a_z \ne b_z] \geq \frac{1}{\log{k}}.$
\item {\bf Strategy 2:}
We use the strategy described in Lemma \ref{lemma:hashing} where $A_{(a,x)}$ and $B_{(b,y)}$ are Hamming Balls of radius $\frac{k}{2} - \sqrt{k\log{k}}$ centred at $a$ and $b$ respectively. Let $S$ be a set of size $\left\lfloor \frac{|A_x|}{2\sqrt{\log{k}}} \right\rfloor$
\begin{tcolorbox}
 \begin{itemize}[label = $\bullet$]
 \item Alice and Bob agree on a $z \in S$ in advance.
\item They sample a random hash function $h : \{0,1\}^k \mapsto S$ using public randomness.
\item Alice outputs a uniformly random element from $h^{-1}(z) \cap A_{(a,x)}$ (if this set is empty, she
outputs an arbitrary element). Similarly, Bob outputs a uniformly random element of $h^{-1}(z) \cap B_{(b,y)}$, and if empty, an arbitrary element.
\end{itemize}
\end{tcolorbox}
 The proof that this strategy works for inputs where the Hamming distance between the address parts $a$ and $b$ is small, i.e. $d(a,b) \leq \frac{k}{\log{k}}$ essentially relies on the following lemma. 
\end{itemize}

\begin{lemma}{(Intersection Lemma): } \label{lem:int_surface}
For two $k-$bit strings $a$ and $b$ at Hamming distance $\frac{k}{\log k}$, a Hamming sphere of radius $r$ centred at $a$  has $\frac{c}{\sqrt{\log{k}}}$ fraction of it lying in the Hamming ball of the same radius centred at $b$
\begin{equation*}
\frac{\lvert \HSph_a\left(r\right) \cap \HBall_b\left(r\right)\rvert}{\lvert \HSph_a\left(r\right) \rvert} \geq \frac{c}{\sqrt{\log k}}
\end{equation*} 
where $\frac{k}{2} - 100\sqrt{k\log{k}}\leq r \leq \frac{k}{2} - \sqrt{k\log{k}}$ and $c$ is a constant.
\end{lemma}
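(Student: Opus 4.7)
The plan is to reduce the problem to a question about the hypergeometric distribution. Let $D$ denote the set of $m = k/\log k$ coordinates on which $a$ and $b$ differ. For any $z \in \HSph_a(r)$, let $J(z)$ be the number of coordinates of $D$ on which $z$ differs from $a$; then a direct count gives
\[
d(z,b) \;=\; (m - J(z)) + (r - J(z)) \;=\; m + r - 2 J(z),
\]
so $z \in \HBall_b(r)$ if and only if $J(z) \geq m/2$. Since a uniformly random $z \in \HSph_a(r)$ corresponds to choosing the $r$ positions where $z$ disagrees with $a$ uniformly from $[k]$, the variable $J$ follows the hypergeometric distribution $P_{k,m,r}$. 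Proving the lemma therefore reduces to showing $\Pr[J \geq m/2] \geq c/\sqrt{\log k}$.

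A short calculation gives $E[J] = mr/k = m/2 - \alpha \sqrt{m}$ for some $\alpha \in [1,100]$, using the hypothesis $r = k/2 - \alpha \sqrt{k \log k}$ together with $\sqrt{k \log k}/\log k = \sqrt{k/\log k} = \sqrt{m}$. The variance works out to $\sigma^2 = \Theta(m)$, so $m/2$ is only $\Theta(1)$ standard deviations above the mean of $J$. The trouble is that Chernoff-type inequalities give only \emph{upper} bounds on tails, whereas we need a \emph{lower} bound on an upper-tail mass. I would proceed in three steps: (i) apply Lemma~\ref{lem:symmetric} to replace the upper-tail mass $\Pr[J \geq m/2]$ by the mass of a width-$\sqrt{m}$ band straddling $m/2$; (ii) apply a standard hypergeometric concentration inequality to localise $J$ within an interval $I$ of width $O(\sqrt{r}) = O(\sqrt{k})$ around $E[J]$ with total probability $\Omega(1)$; (iii) use the unimodality/monotonicity of the hypergeometric pmf (Lemma~\ref{lem:monotonicity}) to transfer a pigeonhole bound inside $I$ to the specific width-$\sqrt{m}$ bin near $m/2$ that we care about.

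The core of the argument is step (iii). After step (ii), $I$ has total mass $\Omega(1)$ and can be partitioned into $|I|/\sqrt{m} = \Theta(\sqrt{r/m}) = \Theta(\sqrt{\log k})$ consecutive bins of width $\sqrt{m}$, so averaging guarantees that \emph{some} bin has mass $\Omega(1/\sqrt{\log k})$. Lemma~\ref{lem:monotonicity} lets us conclude that the bin closest to the mode carries at least as much mass as any other bin in $I$, and because $m/2$ lies within $\alpha\sqrt{m} = O(\sqrt{m})$ of the mode, the bin around $m/2$ is within $O(1)$ bins of the maximal one. Hence that bin also has mass $\Omega(1/\sqrt{\log k})$, and step (i) converts this into the desired bound $\Pr[J \geq m/2] = \Omega(1/\sqrt{\log k})$.

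The main obstacle, and the place to take most care, is precisely this transfer from ``some bin'' to ``the bin near $m/2$.'' The pigeonhole step is uninformative on its own, and one has to invoke the monotonicity of the hypergeometric pmf in exactly the right direction (decreasing past the mode, which lies below $m/2$ because $E[J] < m/2$) so that a bin closer to the mode dominates a bin farther from it. A secondary subtlety is uniformity in the parameter $r$: because $\alpha$ ranges over $[1,100]$, the constants from Lemmas~\ref{lem:symmetric} and~\ref{lem:monotonicity}, as well as those in the hypergeometric Chernoff bound, must be absolute so that the final constant $c$ in the intersection bound does not degrade with $\alpha$. Once these ingredients are in place, the inequality in the statement follows by the three-step chain outlined above.
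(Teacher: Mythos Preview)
Your reduction to the hypergeometric distribution and the three tools you identify (Hoeffding concentration, unimodality via Lemma~\ref{lem:monotonicity}, and Lemma~\ref{lem:symmetric}) are exactly those of the paper. The assembly in step (iii), however, has a gap. Pigeonhole plus unimodality correctly give that the bin containing the mode $E$ has mass $\Omega(1/\sqrt{\log k})$. You then assert that the bin straddling $m/2$, being only $O(1)$ bins to its right, ``also has mass $\Omega(1/\sqrt{\log k})$.'' But monotonicity runs the wrong way here: since $E < m/2$, the pmf is already decreasing at $m/2$, so moving $\alpha\sqrt{m}$ steps to the right of the mode only tells you the bin near $m/2$ is \emph{lighter} than the maximal one, with no lower bound on the ratio. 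You would need an additional decay estimate (that the pmf drops by at most a constant factor over $O(\sqrt{m})$ steps in this range), and that estimate is essentially the content of Lemma~\ref{lem:symmetric} --- which you have already spent in step (i) on a different comparison.

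The paper's proof avoids this by reordering the steps. It first shows via concentration and monotonicity that $[E-\sqrt{m},\,E+\sqrt{m}]$ carries mass $\Omega(1/\sqrt{\log k})$ (your argument for the maximal bin gives exactly this), and only then applies Lemma~\ref{lem:symmetric} to \emph{this} interval. Since $[E-\sqrt{m},\,E+\sqrt{m}] \subseteq [m/2 - 101\sqrt{m},\, m/2]$, every point has the form $m/2 - \delta$ with $0 \le \delta \le 200\sqrt{m}$, and the lemma reflects the whole interval into $[m/2,\, m]$ at constant cost, landing directly inside $\Pr[J \ge m/2]$. With this ordering your step (i) becomes unnecessary and the unjustified transfer in step (iii) never arises.
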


The proof of Lemma~\ref{lem:int_surface} is given in Appendix~\ref{app:int_surface}. The basic outline of the proof is as follows: the fraction $\frac{\lvert \HSph_a\left(r\right) \cap \HBall_b\left(r\right)\rvert}{\lvert \HSph_a\left(r\right) \rvert}$ is at least the sum of probabilities from a hypergeometric distribution $P_{k,m,r}$ from $\frac{m}{2}$ to $m$ where $m = \frac{k}{\log{k}}$ is the distance between the Hamming Ball and the Sphere. We show in Lemma~\ref{lem:symmetric} that the hypergeometric distribution $P_{k,m,r}$ is symmetric about $\frac{m}{2}$ for a range up to $200\sqrt{m}$.  The expected value $E$ of $P_{k,m,r}$ for our choice of $m$ and $r$ lies between $\frac{m}{2} - 100\sqrt{m}$ and $\frac{m}{2} - \sqrt{m}$. We have a concentration bound for hypergeometric distribution $P_{k,m,r}$ by Hoeffding \cite{Hoeffding} stated in Lemma~\ref{lem:concentration} that the sum of the probabilities around the expected value of width $\sqrt{r}$ is at least 0.7. Using the property of hypergeometric distributions that it is monotone increasing up to the expected value $E$ and monotone decreasing beyond it shown in Lemma~\ref{lem:monotonicity}, we derive a concentration bound of width $\sqrt{m}$ around $E$ that the probabilities in this range sum to at least $0.7 \times \frac{\sqrt{m}}{\sqrt{r}}$, which for our choice of $m$ and $r$ is at least $\frac{1}{\sqrt{\log{k}}}$. This gives us that the $\frac{\lvert \HSph_a\left(r\right) \cap \HBall_b\left(r\right)\rvert}{\lvert \HSph_a\left(r\right) \rvert} \geq \frac{c'}{\sqrt{\log{k}}}$ for a constant $c'$.

Since we can show most of the weight of the Hamming ball is concentrated on outer layers (proof of which is given in the Appendix \ref{lem:outersurface}) and since the size of the intersection of the Hamming Balls increases as the distance between them decreases, we easily get the following corollary from Lemma~\ref{lem:int_surface}.

\begin{corollary} \label{cor:int_ball}
For two $k-$bit strings $a$ and $b$ at Hamming distance at most $\frac{k}{\log k}$, the ratio of $k-$bit strings in the intersection between the Hamming balls  of radius $\frac{k}{2} - \sqrt{k\log{k}}$ centred at $a$ and $b$ to the total size of each Hamming Ball is at least $\frac{c_1}{\sqrt{\log k}}$.
\begin{equation*}
\frac{\lvert \HBall_a\left(\frac{k}{2} - \sqrt{k\log{k}}\right) \cap \HBall_b\left(\frac{k}{2} - \sqrt{k\log{k}}\right)\rvert}{\lvert \HBall_a\left(\frac{k}{2} - \sqrt{k\log{k}}\right) \rvert} \geq \frac{c_1}{\sqrt{\log k}}
\end{equation*} 
where $c_1$ is a constant.
\end{corollary}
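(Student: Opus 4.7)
}
The plan is to reduce the ball-intersection statement to the sphere-intersection statement (Lemma~\ref{lem:int_surface}) by slicing the ball into spherical layers, and to handle distances smaller than $k/\log k$ via monotonicity of the intersection size in the distance between centers. Write $R = \frac{k}{2} - \sqrt{k\log k}$ and $R_0 = \frac{k}{2} - 100\sqrt{k\log k}$.

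First I would treat the extremal case $d(a,b) = k/\log k$. Decompose
\[
\HBall_a(R) \;=\; \bigsqcup_{r=0}^{R} \HSph_a(r),
\]
and restrict attention to the outer layers $R_0 \leq r \leq R$. For any such $r$, since $\HBall_b(r) \subseteq \HBall_b(R)$, we have $\HSph_a(r) \cap \HBall_b(R) \supseteq \HSph_a(r) \cap \HBall_b(r)$, and by Lemma~\ref{lem:int_surface} the latter has size at least $\frac{c}{\sqrt{\log k}} |\HSph_a(r)|$. Summing the disjoint contributions over $r \in [R_0, R]$ yields
\[
|\HBall_a(R)\cap \HBall_b(R)| \;\geq\; \frac{c}{\sqrt{\log k}} \sum_{r=R_0}^{R} |\HSph_a(r)|.
\]
Now I would invoke Lemma~\ref{lem:outersurface} (the outer-layer concentration fact that the paper has already cited) to conclude that the outer shells carry a constant fraction of the ball's volume: $\sum_{r=R_0}^{R} |\HSph_a(r)| \geq \Omega(|\HBall_a(R)|)$. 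Combining the two bounds gives the desired inequality with some constant $c_1 > 0$ for the case $d(a,b) = k/\log k$.

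To extend to $d(a,b) \leq k/\log k$, I would use the standard monotonicity fact that for fixed radius $R$, the intersection size $|\HBall_a(R)\cap \HBall_b(R)|$ depends only on $d(a,b)$ (by Hamming cube symmetry) and is a nonincreasing function of $d(a,b)$. A clean way to see this is a coupling/shifting argument: if $d(a,b') \leq d(a,b)$, one can find an isometry of the cube mapping $a$ to itself and $b'$ to a string at the same distance with a ``suppressed'' differing coordinate, after which $\HBall_{b'}(R) \supseteq \HBall_b(R)$ componentwise, making the intersection with $\HBall_a(R)$ only larger. Applying this monotonicity to reduce any $d(a,b) \leq k/\log k$ to the extremal case above finishes the proof.

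The main obstacle I anticipate is writing the monotonicity step cleanly; one must justify that shrinking the distance between centers only increases the intersection, without accidentally appealing to a geometric intuition that fails on the discrete cube. A direct symmetry/coupling argument (or, alternatively, an explicit formula expressing $|\HBall_a(R) \cap \HBall_b(R)|$ as a sum of binomial coefficients indexed by $d(a,b)$ and observing term-by-term monotonicity) should make this rigorous. Once that is in place, the rest of the proof is a straightforward layer-by-layer accounting using Lemma~\ref{lem:int_surface} together with the outer-shell concentration of Lemma~\ref{lem:outersurface}.
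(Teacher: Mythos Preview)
Your proposal is correct and follows essentially the same approach as the paper: the paper's justification is precisely to slice into spherical layers, apply Lemma~\ref{lem:int_surface} on the outer layers $R_0 \le r \le R$, invoke Lemma~\ref{lem:outersurface} for the outer-shell concentration, and then appeal to monotonicity of the intersection size in the distance between centers. If anything, you have been more careful than the paper, which simply asserts the monotonicity step without further argument.
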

Using the hashing-based framework described in Lemma~\ref{lemma:hashing} with $A_x = \HBall_a\left(\frac{k}{2} - \sqrt{k\log{k}}\right)$ and $B_y = \HBall_b\left(\frac{k}{2} - \sqrt{k\log{k}}\right)$, we get that $\CGpub(\ApIndex) = O(\log{k})$ as $t = \sqrt{\log{k}}/c$ where $c$ is a constant.
\end{proof}

The analysis of the strategy reduces to a very natural question: what is the intersection size of two Hamming balls of radius $\frac{k}{2} - \sqrt{k \log{k}}$ whose centers are at a distance $\frac{k}{\log{k}}$? We are able to show that the intersection is at least an  $\Omega(\frac{1}{\sqrt{log{k}}})$ fraction of the total volume of the Hamming ball. This result and the techniques used could be of independent interest.

To bound the intersection size, we focus on the outermost $\sqrt{k}$ layers of the Hamming ball (since they contain a constant fraction of the total volume), and show that for each such layer the intersection contains an  $\Omega(\frac{1}{\sqrt{log{k}}})$ fraction of the elements in that layer.

For a single layer, the intersection can be expressed as the summation of the latter half of a hypergeometric distribution $P_{k,m,r}$ from $\frac{m}{2}$ to $m$ ($m = \frac{k}{\log{k}}$ is the distance between the Hamming Balls and $r$ is the radius of the layer). By using the ``symmetric'' nature of the hypergeometric distribution around $\frac{m}{2}$ for a sufficient range of values, 
this reduces to showing a concentration result around the expectation with width $\sqrt{m}$ (as the expectation for our choice of parameters is $ \frac{m}{2} - O(\sqrt{m})$). 

We use the standard concentration bound on hypergeometric distribution with width~$\sqrt{r}$ and reduce it to the required width $\sqrt{m}$ by noticing a monotonicity property of the hypergeometric distribution. 
\end{proof}

Although we have proven an upper bound on $\CGpub(\ApIndex)$, a lower bound has not been shown and we leave it as an open problem.

\vspace{.5em} 

\Open{Give a lower bound on $\CGpub(\ApIndex)$.}
\subsection{Proof of the Intersection Lemma~\ref{lem:int_surface}}
\label{app:int_surface}

The Hamming sphere $\HSph_a(r)$ centred at the $k-$bit string $a$ of radius $r$ contains $\binom{k}{r}$ $k-$ bit strings, i.e.
$\lvert \HSph_a(r) \rvert =  \binom{k}{r}$.

Suppose we denote the Hamming distance between $a$ and $b$ as $m$. For our purposes, we choose  $m = \frac{k}{\log k}$. 
A $k-$bit string $z$ at a distance $r$ from $a$ lies in $\HBall_b(r)$ if on the $m$ indices that $a$ differs from $b$, $z$ is closer to $b$ than $a$. The number of $k-$bit strings at a distance $r$ from $a$ that lie in $\HBall_b(r)$,
\begin{equation*}
\lvert \HSph_a(r) \cap \HBall_b(r)\rvert = \left\lvert\left\{ z\in \{0,1\}^k \mid d_H(a,z) = r \wedge d_H(b,z) \leq r  \right\}\right\rvert \geq \sum_{j = m/2}^m\binom{m}{j}\binom{k-m}{r-j}
\end{equation*} 
The hypergeometric distribution on parameters $k, m $ and $r$, for $0 \leq j \leq m$ is given by,
\begin{equation*}
P_{k,m,r}(j) = \frac{\binom{m}{j}\binom{k-m}{r-j}}{\binom{k}{r}}
\end{equation*} 

\begin{proposition}\label{prop:int_size}
The fraction of the size of the intersection to the size of the Hamming Ball can be expressed as a sum of probabilities from a hypergeometric distribution, 
$$\frac{\lvert \HSph_a(r) \cap \HBall_b(r)\rvert}{\lvert \HSph_a(r) \rvert} \geq \sum_{j=m/2}^m P_{k,m,r}(j)$$
\end{proposition}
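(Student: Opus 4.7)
The plan is to establish the claim by a direct combinatorial counting argument, partitioning coordinates into the $m$ indices on which $a$ and $b$ disagree and the $k-m$ indices on which they agree. First I would write $|\HSph_a(r)| = \binom{k}{r}$, which is just the number of $k$-bit strings at Hamming distance exactly $r$ from the fixed center $a$. This will serve as the denominator.

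Next I would count $|\HSph_a(r) \cap \HBall_b(r)|$ by stratifying over the parameter $j$, defined as the number of positions, among the $m$ indices where $a$ and $b$ differ, at which $z$ agrees with $b$ (equivalently, differs from $a$). For each choice of $j$, there are $\binom{m}{j}$ ways to choose which of the $m$ disagreement positions are flipped relative to $a$. Among the remaining $k-m$ agreement positions, every bit flipped relative to $a$ is equally a flip relative to $b$, so to have $d_H(a,z)=r$ exactly $r-j$ of these positions must be flipped, contributing $\binom{k-m}{r-j}$ choices (with the convention that this binomial is $0$ when $r-j < 0$ or $r-j > k-m$).

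With this decomposition, $d_H(a,z) = r$ by construction, while $d_H(b,z) = (m-j) + (r-j) = r + m - 2j$, so the condition $z \in \HBall_b(r)$ becomes $j \geq m/2$. Summing over the admissible range,
\begin{equation*}
|\HSph_a(r) \cap \HBall_b(r)| = \sum_{j=\lceil m/2 \rceil}^{m} \binom{m}{j}\binom{k-m}{r-j}.
\end{equation*}
Dividing by $\binom{k}{r} = |\HSph_a(r)|$ and recognizing each term as the hypergeometric probability $P_{k,m,r}(j)$ yields the claim (the $\geq$ in the statement absorbs the difference between $\lceil m/2 \rceil$ and $m/2$ when $m$ is odd).

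No step here is a real obstacle; the only mild care needed is to ensure the index $j$ correctly tracks both distances simultaneously, and to handle the boundary convention for $\binom{k-m}{r-j}$ when $r-j$ falls outside $[0, k-m]$. Once the two-block partition of coordinates is in place, the identification with the hypergeometric distribution is immediate.
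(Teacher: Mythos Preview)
Your proposal is correct and follows essentially the same approach as the paper: partition the coordinates into the $m$ positions where $a$ and $b$ differ and the $k-m$ where they agree, stratify by the number $j$ of disagreement positions where $z$ matches $b$, compute $d_H(b,z)=r+m-2j$, and divide by $\binom{k}{r}$ to recognize the hypergeometric terms. The paper presents this counting in the text immediately preceding the proposition rather than as a separate proof, but the argument is identical.
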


The proof relies on following three lemmas about hypergeometric distribution.

\begin{lemma}[Concentration Lemma~\cite{Hoeffding}]
\label{lem:concentration}
For a hypergeometric distribution $P$ with parameters $k, m$ and $r$, 
\begin{align*}
\sum_{i = 0}^{E - \sqrt{r}} P_{k,m,r}(i) &\leq e^{-2}\\
\sum_{i = E + \sqrt{r}}^{r} P_{k,m,r}(i) &\leq e^{-2}\\
\end{align*}
where $E = \frac{mr}{k}$ is the expected value of the distribution $P$.
\end{lemma}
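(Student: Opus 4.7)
The plan is to invoke Hoeffding's classical tail bound for the hypergeometric distribution. Recall that $P_{k,m,r}$ can be interpreted as the distribution of the number $X$ of ``marked'' elements obtained by drawing $r$ items without replacement from a population of size $k$ containing exactly $m$ marked elements; hence $E = E[X] = mr/k$. In his 1963 paper, Hoeffding proved that sums of bounded random variables sampled without replacement satisfy the same subgaussian tail bounds as their ``with replacement'' (binomial) counterparts, namely
\begin{equation*}
P\bigl[X \leq E - t\bigr] \leq \exp(-2t^2/r)
\quad\text{and}\quad
P\bigl[X \geq E + t\bigr] \leq \exp(-2t^2/r)
\end{equation*}
for any $t > 0$.

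My proof would consist of setting $t = \sqrt{r}$ in both of the above inequalities, which instantly yields $\exp(-2) = e^{-2}$ on the right-hand side. Rewriting each probability as the corresponding sum $\sum_{i \leq E-\sqrt{r}} P_{k,m,r}(i)$ or $\sum_{i \geq E + \sqrt{r}} P_{k,m,r}(i)$ gives exactly the two displayed inequalities of Lemma~\ref{lem:concentration}. (Strictly speaking one should take the floor/ceiling of $E \pm \sqrt{r}$, but this does not change the bound.)

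The only non-routine point is justifying the tail bound itself, for which I would simply cite Hoeffding's 1963 paper ``Probability inequalities for sums of bounded random variables.'' The key insight there is a coupling/convexity argument showing that the moment generating function of a hypergeometric random variable is dominated by that of the corresponding binomial with the same mean; once this domination is established, the standard Chernoff-style derivation yields the subgaussian bound with variance proxy $r/4$. Since the lemma is explicitly attributed to Hoeffding, no new work is required beyond the citation and the substitution $t = \sqrt{r}$, so I do not anticipate any real obstacle in writing out this proof.
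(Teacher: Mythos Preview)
Your proposal is correct and matches the paper's treatment: the paper does not give its own proof of this lemma but simply cites Hoeffding, and your derivation---applying Hoeffding's subgaussian tail bound $P[|X-E|\geq t]\leq 2e^{-2t^2/r}$ for the hypergeometric distribution with $t=\sqrt{r}$---is exactly the intended justification.
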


\begin{lemma}[Symmetric Property]
\label{lem:symmetric}
For the hypergeometric distribution with parameters $m = \frac{k}{\log k}$ and $k/2 - c\sqrt{k\log{k}} \leq r \leq k/2 - \sqrt{k\log{k}}$ 
\begin{equation*}
 \frac{P_{k,m,r}(m/2 + j)}{P_{k,m,r}(m/2 - j)} \geq {c'}
\end{equation*}
where $0 \leq j \leq 2c\sqrt{m}$ and $c, c'$ are constants. 
\end{lemma}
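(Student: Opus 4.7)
The plan is to exploit the symmetry of $\binom{m}{\cdot}$ and then to show that the remaining part of the ratio is close to $1$ because it only compares two central binomial coefficients of $\binom{k-m}{\cdot}$ whose lower indices differ by just $2j$.

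First I would write out the ratio explicitly:
\[
 \frac{P_{k,m,r}(m/2+j)}{P_{k,m,r}(m/2-j)}
 \;=\; \frac{\binom{m}{m/2+j}}{\binom{m}{m/2-j}} \cdot \frac{\binom{k-m}{r-m/2-j}}{\binom{k-m}{r-m/2+j}} \;=\; \frac{\binom{k-m}{r-m/2-j}}{\binom{k-m}{r-m/2+j}},
\]
since $\binom{m}{m/2+j}=\binom{m}{m/2-j}$. Setting $N:=k-m$ and $R:=r-m/2-j$, the ratio becomes
\[
 \prod_{i=1}^{2j} \frac{R+i}{N-R-i+1}.
\]
The key computation is that with $r = k/2 - t$ for $\sqrt{k\log k}\le t\le c\sqrt{k\log k}$, one gets $N-R = R + 2(t+j)$, so both $R$ and $N-R$ are of the form $N/2 \pm (t+j)+O(1)$. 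In particular, the individual factors are close to $1$ and can be analyzed by taking logarithms.

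Next I would estimate $\ln$ of the product. Writing each factor as $1 + \frac{2R+2i-N-1}{N-R-i+1}$ and using $2R - N = -2(t+j)$, the numerator becomes $2(i-j-t)-1$ while the denominator is $\Theta(N)=\Theta(k)$. Using $|\ln(1+x)-x|=O(x^2)$ for $|x|$ small, the dominant contribution is
\[
 \sum_{i=1}^{2j}\frac{2(i-j-t)-1}{N-R-i+1} \;=\; -\frac{8\,j\,t}{N}\bigl(1+o(1)\bigr),
\]
because $\sum_{i=1}^{2j}(j+t-i+1/2) = 2jt$ by direct summation. The ranges $j\le 2c\sqrt{m}=2c\sqrt{k/\log k}$ and $t\le c\sqrt{k\log k}$ give $jt \le 2c^2 k$, hence $8jt/N = O(1)$. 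The quadratic error in the $\ln$ expansion is negligible because each individual $x$ is $O((t+j)/N) = O(1/\sqrt{k})$ and there are only $2j = O(\sqrt{k/\log k})$ terms. Exponentiating yields the ratio is at least $e^{-16c^2}=:c'$, a positive constant depending only on $c$.

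The main obstacle is keeping the arithmetic clean across the three parameter scales ($m=k/\log k$, $t\simeq \sqrt{k\log k}$, $j\le 2c\sqrt{m}$) and verifying that indeed $jt/N = O(1)$ uniformly; once that is noted, the rest is a routine Taylor estimate. An alternative, slightly less computational route is to pair each factor $\frac{R+i}{N-R-i+1}$ with its symmetric counterpart $\frac{R+(2j-i+1)}{N-R-(2j-i+1)+1}$; the paired product telescopes into something of the form $\bigl(1 - \tfrac{(2t+1)^2}{(N-R-i+1)(N-R-2j+i)}\bigr)$, which is even easier to bound below by a constant since each factor of the pair is at least $1 - O(t^2/N^2)$ and there are $j$ such pairs giving total deviation $O(j t^2/N^2) = O(1)$.
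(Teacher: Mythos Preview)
Your proof is correct and follows essentially the same route as the paper: cancel the symmetric $\binom{m}{m/2\pm j}$ factors, write the remaining ratio $\binom{k-m}{r-m/2-j}\big/\binom{k-m}{r-m/2+j}$ as a product of $2j$ factors each close to~$1$, and bound the product by a constant using $jt/N=O(1)$, arriving at the same $c'\approx e^{-16c^2}$. The only difference is cosmetic: the paper bounds every factor by the single worst one, $\bigl(\tfrac{r-m/2-j}{\,k-m/2-r+j\,}\bigr)^{2j}$, and then applies $(1-x/n)^n\approx e^{-x}$, whereas you Taylor-expand $\ln$ of each factor and sum; both are routine and yield the same bound.
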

\begin{proof}
From the definition 
\begin{align*}
 \frac{P_{k,m,r}(m/2 + j)}{P_{k,m,r}(m/2 - j)} &= \frac{\binom{m}{m/2 + j}\binom{k-m}{r - m/2 - j}}{\binom{m}{m/2 - j}\binom{k-m}{r - m/2 +j }} \\
 &= \frac{(r - m/2 - j+1)\cdots(r - m/2 + j) }{(k-m/2 - r -j +1)\cdots(k - m/2 - r + j)}\\
 &\geq \left(\frac{r - m/2 - j}{k - m/2 - r + j}\right)^{2j} = \left(1 - \frac{k - 2r + 2j}{k - m/2 - r + j} \right)^{2j}
\end{align*}
where in the last line we have approximated all the terms in the numerator by a factor smaller than the smallest factor and  in the denominator by the largest factor.
On substituting the values for $m$ and $r$, we have
\begin{align*}
 \frac{P_{k,m,r}(m/2 + j)}{P_{k,m,r}(m/2 - j)} &\geq 
 \left(1 - \frac{1}{2j}\left(\frac{2j \left(2 c\sqrt{k\log{k}} + 2j\right)}{k/2 - \frac{k}{2\log{k}} + \sqrt{k\log{k}} + j}\right)\right)^{2j}\\
 &\approx e^{-\left(\frac{2j \left(2 c\sqrt{k\log{k}} + 2j\right)}{k/2 - \frac{k}{2\log{k}} + \sqrt{k\log{k}} + j}\right)} \geq e^{-16c^2} 
\end{align*}
We get the last inequality after replacing $j$ by the largest possible value that we consider which is $2c\sqrt{m}$ and we get $c' \approx e^{-16c^2} $.
\end{proof}

\begin{lemma}[Monotonicity Property]
\label{lem:monotonicity}
For the hypergeometric distribution where $k$ is large and $m = \frac{k}{\log k}$ and $k/2 - c\sqrt{k\log{k}} \leq r \leq k/2 - \sqrt{k\log{k}}$ , $P_{k,m,r}(j+1) \geq P_{k,m,r}(j)$ for $ j \leq E - 1/2 $ and $P_{k,m,r}(j+1) \leq P_{k,m,r}(j) $ otherwise. Here, $E = \frac{mr}{k}$ is the expected value of the distribution $P$.
\end{lemma}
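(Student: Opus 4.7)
The plan is to analyze the ratio of consecutive probabilities $P_{k,m,r}(j+1)/P_{k,m,r}(j)$ directly from the hypergeometric formula. Using $\binom{m}{j+1}/\binom{m}{j} = (m-j)/(j+1)$ and $\binom{k-m}{r-j-1}/\binom{k-m}{r-j} = (r-j)/(k-m-r+j+1)$, we obtain
$$\frac{P_{k,m,r}(j+1)}{P_{k,m,r}(j)} = \frac{(m-j)(r-j)}{(j+1)(k-m-r+j+1)}.$$
The numerator is decreasing in $j$ (for $j < \min(m,r)$, which covers our range since $m \ll r$) while the denominator is increasing in $j$, so the ratio is strictly monotone decreasing in $j$. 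Hence there is a unique threshold $j^\star$ such that the ratio is $\geq 1$ for $j \leq j^\star$ and $< 1$ for $j > j^\star$, which already gives the unimodal ``first increase, then decrease'' shape claimed by the lemma. Thus the whole task reduces to locating $j^\star$.

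Next I would solve $(m-j)(r-j) \geq (j+1)(k-m-r+j+1)$ for $j$ in closed form. The quadratic terms $j^2$ on both sides cancel and, after collecting, the inequality becomes the linear condition
$$j(k+2) \leq mr + m + r - k - 1, \qquad \text{i.e.,}\qquad j^\star = \frac{mr + m + r - k - 1}{k+2}.$$
So the threshold is exactly this explicit expression for every choice of parameters; the rest is just showing $j^\star$ is within $o(1)$ of $E - 1/2$ in our parameter window.

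The remaining step uses $m = k/\log k$ and $r = k/2 - \Theta(\sqrt{k\log k})$. A direct subtraction gives
$$j^\star - E \;=\; \frac{k(mr + m + r - k - 1) - mr(k+2)}{k(k+2)} \;=\; \frac{k(m+r) - k^2 - k - 2mr}{k(k+2)}.$$
Plugging in the bounds on $m$ and $r$, the dominant term of the numerator is $kr - k^2 = -k^2/2 + O(k^{3/2}\sqrt{\log k})$, while $km = k^2/\log k$ and $2mr = k^2/\log k \cdot (1 + o(1))$ are both of smaller order. The denominator is $k^2(1+o(1))$, so $j^\star - E = -1/2 + o(1)$. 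Consequently, for $j \leq E - 1/2$ we have $j \leq j^\star$ and hence $P(j+1) \geq P(j)$, while for $j \geq E - 1/2$ the reverse holds.

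The only potential obstacle is bookkeeping: the statement uses the clean threshold $E - 1/2$ while the true crossover $j^\star$ matches it only up to an $o(1)$ error (and up to integer rounding, since $j$ is an integer). This is harmless here because Lemma~\ref{lem:int_surface} only invokes monotonicity to compare summed masses on each side of $E$, and shifting the crossover by a vanishing amount changes the relevant partial sums by at most a single term of mass $O(1/\sqrt{r})$, which is absorbed by the concentration statements already in hand.
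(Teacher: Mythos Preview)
Your proof is correct and follows essentially the same route as the paper: compute the consecutive ratio $\frac{(m-j)(r-j)}{(j+1)(k-m-r+j+1)}$, solve the linear inequality obtained by setting it $\geq 1$ to get the exact threshold $j^\star = \frac{mr+m+r-k-1}{k+2}$, and then check that $j^\star = E - 1/2 + o(1)$ in the given parameter window. Your version is in fact slightly cleaner than the paper's, since you explicitly note that the ratio is strictly decreasing in $j$ (so unimodality is automatic) and you address the $o(1)$ rounding issue, which the paper leaves implicit.
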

\begin{proof}
From the definition of hypergeometric distribution, we have
\begin{align*}
\frac{P_{k,m,r}(j+1)}{P_{k,m,r}(j)} &= \frac{\binom{m}{j+1}\binom{k-m}{r-j-1}}{\binom{m}{j}\binom{k-m}{r-j}} = \frac{(m-j)(r-j)}{(j+1)(k-m-r+j+1)}
\end{align*} 
If $P_{k,m,r}(j+1) \geq P_{k,m,r}(j) $, we have $
 \frac{(m-j)(r-j)}{(j+1)(k-m-r+j+1)} \geq 1.
$
On simplifying this expression, we get
$
j \leq \frac{mr+ m -k+r-1}{(k+2) } .
$
Similarly we have $P_{k,m,r}(j+1) \leq P_{k,m,r}(j) $ when $j \geq \frac{mr+m -k+r-1}{(k+2) } $. When $k$ is large, $k+2 \approx k$ and $\frac{mr+m -k+r-1}{(k+2) } \approx E - (1 - \frac{m+r}{k})$. On substituting for $m$ and $r$, we get $\frac{m+r}{k} \approx 1/2 + \epsilon$ where $\epsilon \ll 0$. Thus we can conclude that when $k$ is large enough, $P_{k,m,r}(j+1) \geq P_{k,m,r}(j) $ when $j \leq E - 1/2$ and $P_{k,m,r}(j+1) \leq P_{k,m,r}(j) $ otherwise.
\end{proof}

We can now prove the main result of this section.
\begin{proof}[Proof of Lemma~\ref{lem:int_surface}]
To prove this theorem, from Proposition~\ref{prop:int_size} it is enough to show that 
\begin{equation*}
\sum_{j=m/2}^m P_{k,m,r}(j) \geq \frac{c'}{\sqrt{\log k}}
\end{equation*} 
when $m = \frac{k}{\log k}$ and $k/2 - c\sqrt{k\log{k}} \leq r \leq k/2 - \sqrt{k\log{k}}$.
 From the monotonicity property in Lemma~\ref{lem:monotonicity}, we have that 
\begin{equation*}
\sum_{j=E-\sqrt{m}}^{j=E+\sqrt{m}} P_{k,m,r}(j) \geq \frac{\sqrt{m}}{\sqrt{r}}\ \sum_{j=E-\sqrt{r}}^{j=E+\sqrt{r}} P_{k,m,r}(j) 
 > \sqrt{\frac{2}{\log{k}}}\ \sum_{j=E-\sqrt{r}}^{j=E+\sqrt{r}} P_{k,m,r}(j) 
\end{equation*}
From Lemma~\ref{lem:concentration}, we have 
\begin{equation*}
\sum_{j=E-\sqrt{r}}^{j=E+\sqrt{r}} P_{k,m,r}(j) \geq 0.72
\end{equation*}
This gives,
\begin{equation*}
\sum_{j=E-\sqrt{m}}^{j=E+\sqrt{m}} P_{k,m,r}(j) > \sqrt{\frac{2}{\log{k}}} \times 0.72 > \frac{1}{\sqrt{\log{k}}}
\end{equation*}
 For our choice of $m$ and $r$, we have the expected value $m/2- c\sqrt{m} \leq E \leq m/2- \sqrt{m}$. Using Lemma~\ref{lem:symmetric}, by the symmetric property of the hypergeometric distribution for our choice of $m$ and $r$, on reflecting about $m/2$ we have
\begin{equation*}
\sum_{j=m/2}^m P_{k,m,r}(j)  \geq {c'} \sum_{j=E-\sqrt{m}}^{j=E+\sqrt{m}} P_{k,m,r}(j) \geq \frac{c'}{\sqrt{\log k}}.
\end{equation*} 
where $c' \approx e^{-16c^2}$.
\end{proof}

\subsection{Most of the weight is concentrated on outer surfaces of the Hamming ball}

\begin{lemma}\label{lem:outersurface}
 For a Hamming Ball of radius $r = k/2 - \sqrt{k\log{k}}$, the weight contributed by Hamming Spheres of radius $\leq k/2 - 100\sqrt{k\log{k}}$ is small.
 \begin{equation*}
  \frac{\sum_{i =0}^{\frac{k}{2} - 100\sqrt{k\log{k}}}\ \lvert \HSph_a\left(i\right)\rvert}{\lvert \HBall_a\left(\frac{k}{2} - \sqrt{k\log{k}}\right)\rvert} \leq c_1
 \end{equation*}
where $c_1$ is a constant.
\end{lemma}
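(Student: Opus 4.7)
}
The plan is to upper bound the numerator and lower bound the denominator using standard binomial estimates, then compare. Since $|\HSph_a(i)| = \binom{k}{i}$, the numerator of the ratio is the volume of the Hamming ball $\HBall_a(k/2 - 100\sqrt{k\log k})$. Interpreting $\binom{k}{i}/2^k$ as the probability mass of $\mathrm{Bin}(k,1/2)$ at $i$, I would invoke Hoeffding's inequality with deviation $t = 100\sqrt{k\log k}$ to get
$$\sum_{i=0}^{k/2 - 100\sqrt{k\log k}} \binom{k}{i} \;\leq\; 2^k \cdot \exp\!\left(-\tfrac{2\,(100\sqrt{k\log k})^2}{k}\right) \;=\; 2^k \cdot k^{-20000}.$$

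For the denominator, I would keep only a single sphere: $|\HBall_a(k/2 - \sqrt{k\log k})| \geq \binom{k}{k/2 - \sqrt{k\log k}}$. By Stirling, $\binom{k}{k/2} = \Theta(2^k/\sqrt{k})$, and the standard local central limit estimate (valid here since $t=\sqrt{k\log k}$ satisfies $t = o(k^{2/3})$) gives $\binom{k}{k/2 - t}/\binom{k}{k/2} = \Theta(e^{-2t^2/k})$. Plugging in $t = \sqrt{k\log k}$ yields a factor $\Theta(e^{-2\log k}) = \Theta(k^{-2})$, so
$$\binom{k}{k/2 - \sqrt{k\log k}} \;=\; \Theta\!\left(\frac{2^k}{k^{5/2}}\right).$$

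Combining the two estimates, the ratio is at most $O(k^{-20000}) / \Omega(k^{-5/2}) = O(k^{-19997.5})$, which is bounded by any fixed constant $c_1$ for $k$ large enough (and the finitely many small $k$ can be absorbed by adjusting $c_1$). The main (mild) obstacle is being careful with the binomial estimate $\binom{k}{k/2 - t}/\binom{k}{k/2} = \Theta(e^{-2t^2/k})$ at $t = \sqrt{k\log k}$, but since $t^2/k = \log k$ grows while $t/k \to 0$, this is precisely the regime where the Gaussian approximation is tight, so the estimate follows from a routine Stirling computation. No deeper ideas are required beyond the concentration inequality for the numerator and the single-term lower bound plus Stirling for the denominator.
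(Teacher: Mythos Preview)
Your proposal is correct and follows the same overall strategy as the paper: upper bound the numerator by a Chernoff/Hoeffding tail bound on $\mathrm{Bin}(k,1/2)$, lower bound the denominator, and compare. The only difference is in how the denominator is lower bounded. The paper keeps the full tail $\sum_{j=0}^{k/2-\sqrt{k\log k}}\binom{k}{j}$ and invokes a known lower bound on the binomial tail (from Matou\v{s}ek--Vondr\'ak) of the form $\Pr[X\le k/2-\delta]\ge \tfrac{1}{15}e^{-16\delta^2/k}$, yielding $\ge 2^k\cdot\tfrac{1}{15}k^{-16}$. You instead drop to a single term $\binom{k}{k/2-\sqrt{k\log k}}$ and use Stirling plus the local Gaussian approximation to get $\Theta(2^k/k^{5/2})$. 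Both routes work and give the same qualitative conclusion (the ratio is $k^{-\Omega(1)}$, hence bounded). Your single-term argument is slightly more elementary in that it avoids citing a tail lower bound lemma, at the cost of a routine Stirling computation; the paper's version avoids Stirling but needs the cited inequality. Either is perfectly adequate here.
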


\begin{proof}
We would like to show
   \begin{equation*}
  \frac{\sum_{j = 0}^{\frac{k}{2} - 100\sqrt{k\log{k}}}\ \binom{k}{j}}{\sum_{j = 0}^{\frac{k}{2} - \sqrt{k\log{k}}}\ \binom{k}{j}} \leq c_1
 \end{equation*}
We use the following form of Chernoff Bound \cite{MU},
\begin{equation*}
\Pr\left[X \leq (1-\delta)\mu\right] \leq e^{\frac{\delta^2\mu}{2}}
\end{equation*}
for $0 \leq \delta \leq 1$ and apply it to the binomial distribution with $p = 1/2$ to get $\sum_{j = 0}^{\frac{k}{2} - 100\sqrt{k\log{k}}}\ \binom{k}{j} \leq 2^k k^{-10^4}$. We now use the following lower bound for the tail of the binomial distribution when $p=1/2$ (which is restated slightly from its original form in \cite{MatousekVondrak}).
\begin{equation*}
\Pr\left[X \leq k/2 - \delta \right] \geq \frac{1}{15} e^{-16\delta^2/k}
\end{equation*}
for $\delta\geq 3k/8$. This gives $\sum_{j = 0}^{\frac{k}{2} - \sqrt{k\log{k}}}\ \binom{k}{j} \geq 2^k \frac{1}{15} k^{-16}$. Thus we have
  \begin{equation*}
  \frac{\sum_{j = 0}^{\frac{k}{2} - 100\sqrt{k\log{k}}}\ \binom{k}{j}}{\sum_{j = 0}^{\frac{k}{2} - \sqrt{k\log{k}}}\ \binom{k}{j}} \leq \frac{15 k^{-10^4}}{ k^{-16}} \ll c_1
 \end{equation*}
\end{proof}


\section{Examples of functions}
\label{sec:FunctionExample}

Interesting examples of total and partial Boolean functions are very important to understand the relations between various complexity measures. In fact constructing interesting functions is one of the commonly used techniques to prove separation between pairs of measures. A number of interesting functions has been constructed for this purpose (for example \cite{GSS16,ABK16,BS21,Cha05,Rub95}).  In this paper we use some of them to understand the relation between the certificate games measures and others. The various complexity measures for the functions we consider is compiled in Table~\ref{Table:total}.

$\OR$ and Parity ($\Parity$) are one of the simplest functions, probably the first ones to be studied for any complexity measures. The bounds on $\Parity_n$ follow from the observation that $\lambda(\Parity_n) = \Theta(n)$ ($\CG(\Parity_n) = \Theta(n^2)$ follows from Theorem~\ref{thm:CG-bounds}); the bounds on $\OR_n$ follow from $\lambda(f) = \Theta(\sqrt{n})$~\cite{ABKRT21}, $\Q(\OR_n) = O(\sqrt{n})$~\cite{Grover96} and the observation that $\s(\OR_n) = \Theta(n)$ (please refer to Figure~\ref{fig:measures}).

$\Tribes_{m,n} = \OR_m \circ \AND_n$ is a non-symmetric function, made by composing $\OR$ and $\AND$. We use it as an example of a total function where $\R$ and $\CGpub$ are asymptotically different. It can be verified that $\C(\Tribes_{\sqrt{n},\sqrt{n}}) = \Theta (\sqrt{n})$, and $\lambda(\Tribes_{\sqrt{n},\sqrt{n}}) = \Q(\Tribes_{\sqrt{n},\sqrt{n}}) =\Theta (\sqrt{n})$ follows from composition~\cite{ABKRT21, LMRSS11}. $\R(\Tribes_{\sqrt{n},\sqrt{n}}) = \Theta(n)$ is from Jain and Klauck~\cite{JK09}, other measures follow from these observations.

The function $\GSS_1$ is a function defined in \cite{GSS16}. It is defined on $\{0,1\}^{n^2}$. The complexity measures of $\GSS_1$ was computed in \cite{GSS16} and \cite{JKKLSSV20}. The blank spaces indicates that the tight bounds are not known.

\begin{center}
\begin{table}[htbp]
\scalebox{.85}{
\begin{tabular}{|c||c|c|c|c|c|c|c|c|c|c|c|}
\hline
  Function     & $\lambda$         & $\s$          & $\bs$             & $\FC$         &$\MM$              & $\Q$                      & $\CGpub$         & $\R$          & $\EC$         & $\C$          & $\CG$ \\
  \hline 
  \hline
  $\OR_n$  &$\Theta(\sqrt{n})$ & $\Theta(n)$   & $\Theta(n)$       & $\Theta(n)$ & $\Theta(\sqrt{n})$  & $\Theta(\sqrt{n})$   & $\Theta(n)$       & $\Theta(n)$   & $\Theta(n)$   & $\Theta(n)$   & $\Theta(n)$ \\
  \hline
  $\Parity_n$  & $\Theta(n)$       & $\Theta(n)$   & $\Theta(n)$       & $\Theta(n)$ & $\Theta(n)$         & $\Theta(n)$          & $\Theta(n)$       & $\Theta(n)$   & $\Theta(n)$   & $\Theta(n)$   & $\Theta(n^2)$\\
  \hline
  $\Tribes_{\sqrt{n},\sqrt{n}}$ & $\Theta(\sqrt{n})$ & $\Theta(\sqrt{n})$ & $\Theta(\sqrt{n})$ & $\Theta(\sqrt{n})$ & $\Theta(\sqrt{n})$ & $\Theta(\sqrt{n})$ & $\Theta(\sqrt{n})$ & $\Theta(n)$ & $\Theta(\sqrt{n})$ & $\Theta(\sqrt{n})$ & $\Theta(n)$\\
  \hline
  $\GSS_1$ & & $\Theta(n)$ & $\Theta(n)$ & $\Theta(n)$ & & & $\Theta(n)$ & & $\Theta(n)$ & $\Theta(n^2)$ & $O(n^2)$\\
  \hline
  \hline
\end{tabular}}
\caption{Some of the commonly referred total functions and their complexity measures}\label{Table:total}
\end{table}
\end{center}

Regarding partial functions we would like to discuss a couple of them that are used in multiple places in the paper to show separations between measures for partial functions - namely the ``approximate indexing" function and the ``greater than half" function. The know measures for these functions are compiled in the Table~\ref{Table:partial}.

$\ApIndex$ is the approximate indexing function defined by Ben-David and Blais~\cite{bDB20}, we show that $\R$ and $\CGpub$ are exponentially separated for this partial function (we know $\R(\ApIndex) = \Theta(\sqrt{k \,\log k})$~\cite{bDB20} and $\CGpub(\ApIndex) = O(\log k)$, from Section~\ref{sec:CGpub-ApIndex}). Rest of the measures mentioned in the table can be observed easily.

There is a partial function, $\GTH$ (defined by Ambainis et al.~\cite{AKPV}, see Definition~\ref{def:GTH}), for which $\FC$ is constant~\cite{AKPV} but $\CGpub$ is $\Theta(n)$ (follows from Theorem~\ref{thm:CGns-cAdv} and $\CMM(\GTH) = \Theta(n)$~\cite{AKPV}).

\begin{definition}[$\GTH$ \cite{AKPV}]
 \label{def:GTH}
 The ``greater than half" function is a partial function defined only on $n$ bit strings that have Hamming weight 1. The function evaluates to 1 on an input $x$ if the position $i$ where the input bit is 1 is in the second half of the string, i.e. $\GTH: \zone^{n} \rightarrow \zone$ is defined as $\GTH(x) = 1$ if $x_i = 1$ where $i > n/2$.
 \end{definition}

To show that $\CG(\GTH) = \Theta(n)$, we use the version in Proposition~\ref{prop:sqrt-CG}. For a $1$-input $y$, we only put a non-zero weight of $\sqrt{n}$ on index $i$ where $y_i = 1$. For a $0$-input, we put  a non-zero weight of $\frac{1}{\sqrt{n}}$ only on indices $i$ such that $i > n/2$. It can be verified that this is a feasible solution of the equivalent formulation of $\CG$ (from Proposition~\ref{prop:sqrt-CG}) with objective value $n$.

\begin{table}[h!]
\begin{center}
\scalebox{.85}{
\begin{tabular}{|c||c|c|c|c|c|c|c|c|c|c|c|c|}
\hline
   Function     & $\lambda$         & $\s$          & $\bs$             & $\FC$         &$\MM$              & $\Q$                      & $\CMM$ & $\CGpub$         & $\R$          & $\EC$         & $\C$          & $\CG$ \\
   \hline 
   \hline
   $\ApIndex$  & & 0   & $O(1)$ & $O(1)$ &   &    & & $O(\log k)$       & $\Theta(\sqrt{k\,\log k})$   &    & $O(1)$   &  \\
   \hline
   $\GTH_n$  & & 0  & $O(1)$  & $O(1)$ &          &  & $\Theta(n)$ & $\Theta(n)$ &   & $\Theta(n)$   & $O(1)$   & $\Theta(n)$\\
   \hline
\end{tabular}}
\caption{The known complexity measures for $\ApIndex$ and $\GTH_n$ }\label{Table:partial}
\end{center}
\end{table}

\section{$\FC$ as a local version of $\CGpub$}

In this section we will show that $\FC(x)$ can be viewed as a certificate game where Alice's input is fixed. We start with the dual of the $\CGpub$ optimization problem.

For a two-player certificate game $G_f$ corresponding to a (possibly partial) Boolean function $f$,  $ \CGpub(f) = 1/\winning^{pub}(G_f)$ (Proposition~\ref{prop:CGpub-dual}), 
where the winning probability $\winning^{pub}(G_f)$ is given by the following linear program. 
 \begin{align*}  
 \winning^{pub}(G_f)  = &  \quad \min_{\delta, \mu}
\quad  \delta
  \\
  \text{such that }    &
  \sum_{x,y:~A,B \text{ correct on } x,y} 
  \hspace{-2em}
  \mu_{x,y} \leq \delta \quad \text{ for every deterministic strategy } A,B \\
   & \sum_{x,y} \mu_{xy} = 1, \quad  \mu_{x,y} \geq 0,
 \end{align*}
 where $ \mu =\{\mu_{x,y}\}_{x\in f^{-1}(0), \, y \in f^{-1}(1)}$. $A,B$ correct on $x,y$ implies $A(x)=B(x)=i$ and $x_i\neq y_i$. 

Re-normalizing, we get the linear program for $\CGpub(f)$,
 \begin{align*}  
 \CGpub(f)  = & \quad \sum_{x,y} \mu_{x,y}
  \\
  \text{such that }    &
  \sum_{x,y:~A,B \text{ correct on } x,y} 
  \hspace{-2em}
  \mu_{x,y} \leq 1 \quad \text{ for every deterministic strategy } A,B \\
   & \quad  \mu_{x,y} \geq 0,
 \end{align*}

Let us define the local version of $\CGpub$ in two stages: Let $\CGpub_L(f,x)$ be the value of the $\CGpub$ game when one of the party's input is fixed to $x$ (say Alice), then $\CGpub_L(f) = \max_x \CGpub_L(x)$. We will show that $\CGpub_L$ is same as $\fbs$; given that $\CGpub(f) = \Theta(\fbs(f))$, we see that local and global version of $\CGpub$ are same.

The linear program for the local version can be written as:
\begin{align*}  
 \CGpub_L(f,x)  = &  \quad  \sum_{y} \mu_{y}
  \\
  \text{such that }    &
  \sum_{y:B \text{ outputs } i \text { on } y \text{ and } x_i \neq y_i} 
  \hspace{-2em}
  \mu_{y} \leq 1 \quad \forall \text{ deterministic strategies } B, \text{ index } i \\
   & \quad  \mu_{y} \geq 0,
 \end{align*}

Where it is understood that the deterministic strategy for Alice, $A$, answers $i$. Notice that fixing an $i$, the \emph{strictest} constraint is obtained by $B$ which answers $i$ whenever $y_i \neq x_i$. This means we can keep just one constraint for every $i$.
\begin{align*}  
 \CGpub_L(f,x)  = &  \quad  \sum_{y} \mu_{y}
  \\
  \text{such that }    &
  \sum_{y: x_i \neq y_i} 
  \mu_{y} \leq 1 \quad \text{ for all } i \\
   & \quad  \mu_{y} \geq 0,
 \end{align*}

Every $y$ (such that $f(x) \neq f(y)$) has a one to one correspondence with a sensitive block $B$ such that $y = x^{\oplus B}$. The linear program can be simplified to the linear program for $\fbs(f,x)$.
\begin{align*}  
 \CGpub_L(f,x)  = &  \sum_{B} \mu_{B}
  \\
  \text{such that }    &
  \sum_{B: i \in B} 
  \mu_{B} \leq 1 \quad \text{ for all } i \\
   & \quad  \mu_{B} \geq 0,
 \end{align*}

\end{appendix}


\end{document}